\newif\ifapp
\newcommand{\lsp}{\hspace{0.1em}}
\let\newfloat\newfloat@ltx
\def\identity{\leavevmode\hbox{\small1\kern-3.8pt\normalsize1}}
\newtheorem{theorem}{Theorem}
\newtheorem{lemma}{Lemma}
\newtheorem{proposition}{Proposition}
\newtheorem{corollary}{Corollary}
\def\bra#1{\langle #1|}
\def\ket#1{| #1\rangle}
\def\<{\langle}  %% overriding the original command \<
\def\>{\rangle}  %% overriding the original command \>
\def\Tr{\mathrm{Tr}}
\newcommand{\id}{\mathrm{id}}
\newcommand{\5}{\operatorname{\uppercase\expandafter{\romannumeral5}}}
\newcommand{\6}{\operatorname{\uppercase\expandafter{\romannumeral6}}}
\newcommand{\7}{\operatorname{\uppercase\expandafter{\romannumeral7}}}
\newcommand{\8}{\operatorname{\uppercase\expandafter{\romannumeral8}}}
\newcommand{\9}{\operatorname{\uppercase\expandafter{\romannumeral9}}}
\newcommand{\bbE}{\mathbb{E}}
\newcommand{\bbF}{\mathbb{F}}
\newcommand{\bbN}{\mathbb{N}}
\newcommand{\bbR}{\mathbb{R}}
\newcommand{\caD}{\mathcal{D}}
\newcommand{\caE}{\mathcal{E}}
\newcommand{\caH}{\mathcal{H}}
\newcommand{\caL}{\mathcal{L}}
\newcommand{\caN}{\mathcal{N}}
\newcommand{\caO}{\mathcal{O}}
\newcommand{\caQ}{\mathcal{Q}}
\newcommand{\caR}{\mathcal{R}}
\newcommand{\rmU}{\mathrm{U}}
\newcommand{\scrA}{\mathscr{A}}
\newcommand{\scrC}{\mathscr{C}}
\newcommand{\md}[1]{\mathds{#1}}
\newcommand{\vertiii}[1]{{\left\vert\kern-0.25ex\left\vert\kern-0.25ex\left\vert #1 
		\right\vert\kern-0.25ex\right\vert\kern-0.25ex\right\vert}}
\newcommand{\lref}[1]{Lemma~\ref{#1}}
\newcommand{\thref}[1]{Theorem~\ref{#1}}
\newcommand{\thsref}[1]{Theorems~\ref{#1}}
\newcommand{\Thref}[1]{Theorem~\ref{#1}}
\newcommand{\Thsref}[1]{Theorems~\ref{#1}}
\newcommand{\pref}[1]{Proposition~\ref{#1}}
\newcommand{\coref}[1]{Corollary~\ref{#1}}
\newcommand{\cref}[1]{Conjecture~\ref{#1}}
\newcommand{\Cref}[1]{Conjecture~\ref{#1}}
\def\eqref#1{\textup{(\ref{#1})}}
\newcommand{\eref}[1]{Eq.~\textup{(\ref{#1})}}
\newcommand{\eqsref}[2]{Eqs.~(\ref{#1}) and (\ref{#2})}
\newcommand{\sref}[1]{Sec.~\ref{#1}}
\newcommand{\fref}[1]{Fig.~\ref{#1}}
\newcommand{\aref}[1]{Appendix~\ref{#1}}
\newcommand{\subvarM}{\mathcal{V}^{\mathrm{M}}}
\newcommand{\subvarA}{\mathcal{V}^{\mathrm{A}}}
\newcommand{\subinaM}{\Lambda^{\mathrm{M}}}
\newcommand{\subinaA}{\Lambda^{\mathrm{A}}}
\newcommand{\subinaID}{\widetilde{\Lambda}}
\newcommand{\subinaMID}{\widetilde{\Lambda}^{\mathrm{M}}}
\newcommand{\subinaAID}{\widetilde{\Lambda}^{\mathrm{A}}}
\begin{document}

\title{Random approximate quantum information masking}

\author{Xiaodi Li}
\thanks{These authors contributed equally to this work.}
\author{Xinyang Shu}
\thanks{These authors contributed equally to this work.}
\author{Huangjun Zhu}

\date{\today}
	
\affiliation{State Key Laboratory of Surface Physics and Department of Physics, Fudan University, Shanghai 200433, China}
\affiliation{Institute for Nanoelectronic Devices and Quantum Computing, Fudan University, Shanghai 200433, China}
\affiliation{Center for Field Theory and Particle Physics, Fudan University, Shanghai 200433, China}
\begin{abstract}
Masking information into quantum correlations is a cornerstone of many quantum information applications. While there exist the no-hiding and no-masking theorems, approximate quantum information masking (AQIM) offers a promising means of circumventing the constraints. 
Despite its potential, AQIM still remains underexplored, and constructing explicit approximate maskers remains a challenge. 
In this work, we investigate AQIM from multiple perspectives and propose using random isometries to construct approximate maskers. 
First, different notions of AQIM are introduced and we find there are profound intrinsic connections among them. These relationships are characterized by a set of figures of merit, which are introduced to quantify the deviation of AQIM from exact QIM.
We then explore the possibility of realizing AQIM via random isometries in bipartite and multipartite systems. 
In bipartite systems, we identify a fundamental lower bound for a key figure of merit, implying that almost all random isometries fail to realize AQIM. This surprising result generalizes the original no-masking theorem to the no-random-AQIM theorem for bipartite systems.
In contrast, in multipartite systems, we show almost all random isometries can realize AQIM. Remarkably, the number of physical qubits required to randomly mask a single logical qubit scales only linearly.
We further explore the implications of these findings. In particular, we show that, under certain conditions, approximate quantum error correction is equivalent to AQIM. Consequently, AQIM naturally gives rise to approximate quantum error correction codes with constant code rates and exponentially small correction inaccuracies.
Overall, our results establish quantum information masking as a central concept in quantum information theory, bridging diverse notions across multiple domains.
\end{abstract}
\maketitle

\tableofcontents

\section{Introduction}

Hiding or encoding information into quantum correlations (entanglement)  is necessary for quantum information processing against eavesdropping or noise, no matter from the view of quantum cryptography, quantum secure communication or quantum error-correction. Such ideas have scattered in very wide research fields, such as quantum secret sharing \cite{Hillery_1999, Cleve_1999,Terhal_2001, DiVincenzo_2002}, quantum coding theorem \cite{Lloyd_1997, Barnum_1998, horodecki2008quantumcoding, Hayden_2008}, information scrambling \cite{Patrick_Hayden_2007, Yasuhiro_Sekino_2008, Liu_2018} and quantum error-correction \cite{gottesman1997stabilizer, Knill_1997, Terhal_2015_review}. However, there are two no-go theorems, the no-hiding theorem \cite{Braunstein_2007} and the no-masking theorem \cite{Modi_2018}, which exclude the possibility of hiding or masking all information of an arbitrary quantum state into the quantum correlations between two subsystems, in contrast with classical information. 

Fortunately, the no-masking theorem does not forbid the possibility of quantum information masking (QIM) completely. On the one hand, there are certain restricted sets of quantum states which can be masked, like the set of states constituting a hyperdisk \cite{Modi_2018, Liang_2019_complete, Ding_2020_masking} or the set of real quantum states \cite{Zhu_2021_hiding}. On the other hand, masking a general quantum state into a multipartite system is usually possible, even not always possible, because of the results proved in \cite{Li_2018_multipartite, Shi_2021_k_uniform} and the existence of some quantum error-correction codes (QECCs). 
It is shown in \cite{Shi_2021_k_uniform} that the QIM in the multipartite case, called the $k$-uniform QIM, is equivalent to the QECC with distance $k+1$ in some sense.
We also notice that the ideas underlining the eigenstate thermalization hypothesis (ETH) \cite{Deutsch1991_quantum, Srednicki1994_chaos, Popescu2006, Rigol2008, Deutsch_2018} and the thermodynamic code \cite{Brandao2019_quantum, Faist_2020_continuous} are very similar with QIM. 

However, due to the existence of physical noise, perfect masking is difficult to achieve even if a given set of states is maskable. A natural solution is to relax the conditions of the deterministic quantum information masking to consider the probabilistic or approximate quantum information masking (AQIM) \cite{Li_2019_deterministic, Li_2020_probabilistic}. 
Although it is impossible for the existence of a universal probabilistic masking as proved in \cite{Li_2020_probabilistic}, there is still a restricted set of states which can be probabilistic masked \cite{Li_2019_deterministic}.
As for the approximate masking, the authors in \cite{Li_2020_probabilistic} only provide a necessary condition of AQIM and don't provide a method for constructing approximate maskers. 
Hence, constructing approximate maskers still remains a challenge. 

In this paper, motivated by the fact that randomization is a powerful technique for solving various problems \cite{Harrow_2004_superdense, Hayden2004_randomizing, Hastings2009, Aubrun2011, Kong_2022_near_optimal, Brown_2013_short, Brown2015_decoupling, Faist_2020_continuous, Darmawan2024_low_depth, Nelson2025_Fault}, we will focus on realizing the AQIM by Haar random isometries. In particular, we mainly focus on the special case of AQIM aiming at masking all pure states in a Hilbert space.

According to a well-known result that a random pure state in a high-dimensional Hilbert space is very close to a maximally entangled state with high probability \cite{Harrow_2004_superdense, Hayden_2006}, it is reasonable to think that there maybe exist many random isometries which can realize the AQIM in bipartite system.
However, we find that the probability for the existence of a random isometry, which masks a Hilbert space into another bipartite Hilbert, is exponentially small. In other words, there doesn't exist random AQIM in bipartite system. 
As for the multipartite case, we obtain an opposite conclusion and prove that the probability of a random isometry realizing a AQIM approaches $1$ exponentially under certain conditions.  
Random AQIM also has many important implications. 
Considering the equivalence between QIM and QECC \cite{Shi_2021_k_uniform}, there are also deep connections between AQIM and approximate quantum error correction code (AQECC) \cite{Leung_1997_approximate, Schumacher2002_approximate, Crepeau_2005, Klesse_2007, Beny_2010_general, Ng_2010_simple, Beny_2011_approximate, Beny_2011_perturbative, Mandayam_2012, Wang_2020_quasi, Yi2024_complexity}. Combined our results with the results of \cite{Yi2024_complexity}, we can conclude that under some conditions, a random subspace of a large Hilbert space is an AQECC with high probability and such subspace has exponentially small error-correction inaccuracy and constant code rate. 
This result highlights the importance of random QECCs, just as the existing literatures \cite{Faist_2020_continuous, Kong_2022_near_optimal, Brown_2013_short, Brown2015_decoupling, Faist_2020_continuous, Gullans_2021_low_depth, Darmawan2024_low_depth, Nelson2025_Fault}.
In addition to the results about the AQECC, there are also some implications for multipartite quantum entanglement, as the $k$-uniform QIM is closely related to $k$-uniform state \cite{Scott_2004_multipartite, Shi_2021_k_uniform}.

This paper is organized as follows.
In \sref{sec:prel}, we review the basic definition of $k$-uniform QIM and generalize it to the approximate $k$-uniform QIM (or $k$-uniform AQIM) in the maximal version and average version. To characterize the violation of the exact QIM, we introduce several types of figures of merit and dicuss theirships. 
In \sref{sec:aqim_bipartite}, we first explain the notions of random isometry and random AQIM, then we turn to consider the expectation value of the trace distance between two reduced states and find that there exists a finite lower bound for the summation of trace distances, from which we naturally conclude the no random AQIM theorem.
Next, in \sref{sec:aqim_multipartite}, we first consider two concentration results for two scenarios in the bipartite systems, then we generalize the concentration results to the multipartite systems and conclude the existence of random AQIM in the multipartite systems.
In \sref{sec:implications}, we explore the implications of our results for the multipartite entanglement in Sec. \ref{subsec:entanglement} and the AQECC in Sec. \ref{subsec:aqecc}.
Finally, we end this paper with a summary and a discussion of future work in \ref{sec:discussion}.

\section{Preliminaries: exact and approximate quantum information masking}
\label{sec:prel}

Hereafter, we will consider the Hilbert space $\caH_A$ of system $A$ and a subspace $\caH_C$ of the Hilbert space $\caH_{B_1\cdots B_m} = \bigotimes_{i=1}^m \caH_{B_i}$ of a composite system consisting of subsystems $B_1\cdots B_m$ with $m\ge 2$, and their dimensions are denoted as $d_A := \dim(\caH_A)$, $d_C := \dim(\caH_C)$, and $d_{1\cdots m} := \prod_{i=1}^m d_i$, where $d_i := \dim(\caH_{B_i})$. Pure states are denoted by $\ket{\psi}$, $\ket{\phi}$, etc., and $\psi := \ket{\psi} \bra{\psi}$ denotes the corresponding rank-1 projector of the pure state $\ket{\psi}$. Given a subsystem $S$ of the $k$ parties, denote by $S^c$ the complementary subsystem and by $\psi_{S}:=\Tr_{S^c}(\psi)$ the reduced state of $|\psi\>$ for subsystem $S$. A Haar random pure state in Hilbert space $\caH$ is denoted by $\ket{\psi} \sim \mu(\caH)$. When there is no risk of confusion, a unitary $U$ distributed according to the Haar measure on $\mathrm{SU}(d)$ is also denoted by $U \sim \mu(d)$.  The trace distance between two quantum states $\rho_1,\rho_2$ on a given Hilbert space is defined as $D(\rho_1,\rho_2):=\|\rho_1-\rho_2\|_1$. In this paper, $\alpha$, $\delta$, and $\epsilon$  always denote nonnegative real numbers.

\subsection{Exact quantum information masking}

First, we consider bipartite QIM. Let $\scrA\subset \caH_A$ be a set of pure states in $\caH_A$, $V$ an isometry from $\caH_A$ to $\caH_{B_1, B_2}$, and $\scrC=\{V|\varphi\>\, |\, \varphi\in \scrA\}$. Then $V$ is an (exact) \emph{masker} for $\scrA$ if the reduced state of $|\psi\>$ for each party is independent of $|\psi\>\in \scrC$ 
\cite{Modi_2018}. In that case, the information contained in the set $\scrA$ is entirely masked in the entanglement between $B_1$ and $B_2$.  The set $\scrA$ is thus called \emph{maskable}; accordingly, the image set $\scrC$ is also  called maskable. In general,  QIM of the set $\scrA$ means the existence of such a masker.

Next, we generalize bipartite QIM to the multipartite setting \cite{Li_2018_multipartite, Shi_2021_k_uniform}. Now, let $V$ be an isometry from $\caH_A$ to $\caH_{B_1 B_2\cdots B_m}$, $\scrC=\{V|\varphi\>\, |\, \varphi\in \scrA\}$,  and $k$ a positive integer satisfying $1\le k\le \lfloor \frac{m}{2} \rfloor$. Then $V$ is a \emph{$k$-uniform masker} for $\scrA$
if the reduced state of $|\psi\>$ for each  subsystem composed of $k$ parties is independent of $|\psi\>\in \scrC$, that is,
\begin{equation}\label{eq:k_uniform_condition_exact}
    \psi_{S}=\sigma_{S} \quad \forall\, |\psi\>\in \scrC \mbox{ and } \forall  S \mbox{ with } |S|=k,  
\end{equation}
where  $\sigma_S$ for each subsystem $S$ is a fixed state.
In that case,  we cannot distinguish  different image states by measurements on any subsystem composed of  $k$ parties, and the sets $\scrA, \scrC$ are called \emph{$k$-uniformly maskable}. In analogy to the bipartite setting, $k$-uniform masking of the set $\scrA$ means the existence of  a $k$-uniform masker. When the explicit value of $k$ is not important, we just say $\scrA, \scrC$ are maskable for simplicity. Note that the above definition does not change if the condition $|S|=k$ is replaced by $|S|\leq k$.
In addition, a $k$-uniform masker is automatically a $k'$-uniform masker for any positive integer $k'$ with $k'\leq k$. Accordingly, 
$\scrA$ is \emph{$k'$-uniformly maskable} whenever it is \emph{$k$-uniformly maskable}. When  $m=2$, a bipartite masker introduced above is automatically a $1$-uniform masker.

As pointed out in \cite{Shi_2021_k_uniform}, $k$-uniform QIM is closely related to the notion of \emph{$k$-uniform states}. Recall that a multipartite state is $k$-uniform if the reduced state for each subsystem composed of at most $k$ parties is the maximally mixed state \cite{Scott_2004_multipartite}. By definition, it is easy to see that a set of $k$-uniform states is $k$-uniformly maskable.

\subsection{Approximate quantum information masking}
\label{sec:RQIMdef}

In practice, exact masking usually cannot be implemented due to various imperfections, even if a given set of states is maskable. Therefore, it is important to consider the notion of AQIM, first introduced in \cite{Li_2020_probabilistic}.
To this end, we need to introduce suitable figures of merit to quantify the degree of approximation. Here our discussion is based on a given set of image states. 
Let  $\scrC$ be a subset of pure states in $\caH_{B_1\cdots B_m}$, $k\leq m$ a positive integer, and $S$  a subsystem of $k$ parties.

The \emph{maximum variation of $\scrC$ with respect to $S$} is defined as the maximum trace distance between the reduced states of $\scrC$ on subystem $S$, that is,
\begin{align}
\label{Eq:subsystem_masking_variation}
   \subvarM_{S}(\scrC):=\max_{\ket{\psi}, \ket{\psi'}\in \scrC}D(\psi_{S}, \psi'_{S}),
\end{align}
The \emph{maximum subsystem variation} of order $k$ is defined as
\begin{align}
\label{Eq:total_masking_variation}
   \subvarM(\scrC,k) :=  \max_{S:|S|=k}\, \subvarM_{S}(\scrC),
\end{align}
where the argument $k$ can be omitted when $k=1$. 
Recall that the trace distance quantifies the distinguishability between quantum states. If $\subvarM_{S}(\scrC)$ is  small, then it is difficult to distinguish the states in $\scrC$ based on measurements on the subsystem $S$. 
If in addition $\subvarM(\scrC,k)$ is small, then it is difficult to distinguish the states in $\scrC$ based on measurements on any given subsystem composed of at most $k$ parties.

In contrast,  the \emph{average variation of $\scrC$  with respect to $S$} is defined as follows:
\begin{align}
\subvarA_S(\scrC) := \underset{\ket{\psi}, \ket{\psi'}\in \scrC}{\bbE}[ D(\psi_S, \psi'_S) ];
\end{align}
the \emph{average subsystem variation} of order $k$ reads 
\begin{align}
\subvarA(\scrC,k):=\max_{S:|S|=k}\,\subvarA_{S}(\scrC),
\end{align}
where the argument $k$ can be omitted when $k=1$ as before.

Based on $\subvarM(\scrC,k)$ and $\subvarA(\scrC,k)$ we can introduce two versions of approximate  $k$-uniform QIM.  Let $\scrA\subset \caH_A$ be a set of pure states in $\caH_A$, $V$ an isometry from $\caH_A$ to $\caH_{B_1\cdots B_m}$, $\scrC=\{V|\varphi\>\, |\, \varphi\in \scrA\}$. Then $V$ is a  $\delta$-\emph{approximate} $k$-uniform masker with respect to the maximum subsystem variation if 
\begin{align}
	\label{Eq:maximal_inaccuracy_delta}
	\subvarM(\scrC,k)\leq\delta.
\end{align}
By contrast $V$ is a  $\delta$-\emph{approximate} $k$-uniform masker with respect to the average subsystem variation if
\begin{align}
	\label{Eq:maximal_inaccuracy_delta}
	\subvarA(\scrC,k)\leq\delta.
\end{align}
By definition,  a  $\delta$-\emph{approximate} $k$-uniform masker with respect to the maximum subsystem variation is automatically  a  $\delta$-\emph{approximate} $k$-uniform masker with respect to the average subsystem variation. 
In analogy to the exact setting,
$\delta$-\emph{approximate} $k$-uniform masking of the set $\scrA$ means the existence of a $\delta$-\emph{approximate} $k$-uniform masker.
In addition,   a  $\delta$-\emph{approximate} $k$-uniform masker is automatically  a  $\delta$-\emph{approximate} $k'$-uniform masker for $k'\leq k$ because the trace distance is contractive under partial trace \cite{Nielsen_Chuang_2010}.

Next, we introduce several alternative figures of merit that are easier to deal with. Define
\begin{align}
\Pi(\scrC)=\bbE_{|\psi\>\in \scrC}\lsp \psi,\quad 
\Pi_S(\scrC)=\Tr_{S^c} [\Pi(\scrC)]. 
\end{align}
If $\scrC$ denotes the set of all pure states in $\caH_C$, then $\Pi(\scrC)$ coincides with  the normalized projector onto $\caH_C$ and can be abbreviated as $\Pi_C$.
The \emph{maximum subsystem (masking) inaccuracy of order~$k$} is defined as follows:
\begin{align}
	\label{Eq:total_masking_inaccuracy}
	\subinaM(\scrC,k)&:=\max_{S:|S|=k} \,\subinaM_S(\scrC),\\
\subinaM_S(\scrC)&:= \max_{\ket{\psi}\in\scrC}D(\psi_{S},  \Pi_S(\scrC)). \label{Eq:subsystem_masking_inaccuracy}
\end{align}
By contrast, the \emph{average subsystem (masking) inaccuracy of order~$k$} reads
\begin{align}
	\subinaA(\scrC,k)&:= \max_{S:|S|=k} \,\subinaA_S(\scrC),\\
\subinaA_S(\scrC) &:= \underset{\ket{\psi}\in \scrC}{\bbE} D(\psi_S, \Pi_S(\scrC)). \label{Eq:subsystem_masking_inaccuracyA}
\end{align}

Let $\widetilde{\md{1}}_{S}=\mathds{1}_{S} / d_S$ be the maximally mixed state of the subsystem $S$. By replacing $\Pi_S(\scrC)$ in Eqs.~\eqref{Eq:total_masking_inaccuracy}-\eqref{Eq:subsystem_masking_inaccuracyA} with  $\widetilde{\md{1}}_{S}$, we obtain the following variants:
\begin{align}
	\subinaMID(\scrC,k)&:=\max_{S:|S|=k}\,\subinaMID_S(\scrC),  \label{Eq:total_masking_inaccuracyAlt} \\
	\subinaMID_S(\scrC)&:= \max_{\ket{\psi}\in\scrC}D(\psi_{S}, \widetilde{\md{1}}_{S}),\\
		\subinaAID(\scrC,k)&:=\max_{S:|S|=k}\, \subinaAID_{S}(\scrC),\\
		\subinaAID_{S}(\scrC)&:= \underset{\ket{\psi}\in \scrC}{\bbE}D(\psi_{S}, \widetilde{\md{1}}_{S}). \label{Eq:subsystem_masking_inaccuracyAAlt}
\end{align}

The figures of merit introduced above are closely related to each other as shown in the following 
proposition and  proved in \aref{appendix:equivalence_aqim}.
\begin{proposition}\label{prop:equivalence_aqim_inaccuracy_variation_maximal}
	Suppose $\scrC$ is a subset of pure states in $\caH_{B_1\cdots B_m}$, $k\leq m$ is a positive integer, and $S$ is a subsystem of $k$ parties. Then  the maximum subsystem inaccuracy and variation satisfy
	\begin{gather}
		\subinaM_S \le \subvarM_S \le 2\subinaM_S, \quad
		\subinaM \le \subvarM \le 2\subinaM,\\	
		\subvarM_S \le 2\subinaMID_S,\quad \subvarM \le 2\subinaMID, 
	\end{gather}
where the dependence on $\scrC$ and $k$ is suppressed to simplify the notation.	
The average subsystem inaccuracy and variation satisfy	
	\begin{gather}
	\subinaA_S \le \subvarA_S \le 2\subinaA_S, \quad
	\subinaA \le \subvarA \le 2\subinaA,\\
	\subvarA_S \le 2\subinaAID_S,\quad \subvarA \le 2\subinaAID.
\end{gather}
The figures of merit based on the maximum version and the average version satisfy 
	\begin{gather}
	\subinaA_S \le \subinaM_S, \quad \subinaA \le \subinaM,\quad
	\subvarA_S \le \subvarM_S, \quad \subvarA \le \subvarM.
\end{gather}
\end{proposition}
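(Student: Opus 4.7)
The plan is to derive every inequality from two elementary facts about the trace distance: its joint convexity under mixtures and the triangle inequality. Since $\Pi_S(\scrC) = \bbE_{\ket{\psi'}\in\scrC}[\psi'_S]$ by definition, convexity yields the pointwise bound
\begin{align}
D(\psi_S, \Pi_S(\scrC)) \le \bbE_{\ket{\psi'}\in\scrC}\, D(\psi_S, \psi'_S)
\end{align}
for every $\ket{\psi}\in\scrC$, while the triangle inequality through either $\Pi_S(\scrC)$ or $\widetilde{\md{1}}_S$ controls any mutual distance by twice an inaccuracy. I would prove each inequality first at the level of a fixed subsystem $S$ of size $k$; the ``global'' versions then follow immediately by taking $\max_{S:|S|=k}$ on both sides, because each of $\subinaM$, $\subvarM$, $\subinaA$, $\subvarA$, $\subinaMID$, $\subinaAID$ is by definition this maximum of its per-subsystem counterpart.

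For the maximum-version block, the left bound $\subinaM_S \le \subvarM_S$ follows by chaining the above convexity estimate with the trivial $\bbE X \le \max X$, giving
\begin{align}
D(\psi_S, \Pi_S(\scrC)) \le \max_{\ket{\psi'}\in\scrC} D(\psi_S, \psi'_S) \le \subvarM_S(\scrC),
\end{align}
and then maximizing over $\ket{\psi}\in\scrC$. The right bound $\subvarM_S \le 2\subinaM_S$ comes from the triangle inequality
\begin{align}
D(\psi_S, \psi'_S) \le D(\psi_S, \Pi_S(\scrC)) + D(\psi'_S, \Pi_S(\scrC)) \le 2\subinaM_S(\scrC),
\end{align}
followed by maximization over the pair $\ket{\psi}, \ket{\psi'}\in\scrC$. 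Replacing $\Pi_S(\scrC)$ with $\widetilde{\md{1}}_S$ in the same triangle step produces $\subvarM_S \le 2\subinaMID_S$ verbatim. The average-version block is identical modulo swapping the outermost $\max$ for $\bbE_{\ket{\psi}}$: convexity composed with $\bbE_{\ket{\psi}}$ gives $\subinaA_S \le \subvarA_S$, while the triangle inequality together with linearity of expectation yields both $\subvarA_S \le 2\subinaA_S$ and $\subvarA_S \le 2\subinaAID_S$. Finally, the comparisons $\subinaA_S \le \subinaM_S$ and $\subvarA_S \le \subvarM_S$ between the average and maximum versions are direct instances of $\bbE X \le \max X$, applied to the single expectation over $\ket{\psi}\in\scrC$ in the first case and to the double expectation over $(\ket{\psi},\ket{\psi'})\in\scrC\times\scrC$ in the second.

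There is no substantive obstacle; the only care required is the ordering of quantifiers. The pointwise convexity bound must be invoked before any subsequent $\max$ or $\bbE$ over $\ket{\psi}$, and each use of the triangle inequality must go through the reference state ($\Pi_S(\scrC)$ or $\widetilde{\md{1}}_S$) that matches the figure of merit one wants on the right-hand side. With that bookkeeping in place, the entire proposition reduces to a uniform ``distance-to-centroid versus mutual-distance'' equivalence that loses at most a factor of two, and the proof fits comfortably inside an appendix.
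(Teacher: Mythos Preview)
Your proposal is correct and matches the paper's own proof essentially line for line: the paper also derives the lower bounds by writing $\Pi_{\scrC}^{(S)}=\bbE_{\ket{\psi}\in\scrC}[\psi_S]$ and pulling the expectation outside the norm, obtains the factor-two upper bounds via the triangle inequality through the chosen reference state, and handles the average-versus-maximum comparisons by $\bbE X\le\max X$. The only cosmetic difference is that you invoke ``joint convexity'' where convexity in one argument suffices, but the resulting inequality is identical.
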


Next, we clarify the relation between approximate QIM and approximate $k$-uniform states. Let   $\ket{\psi} \in \caH_{B_1\cdots B_m}$ and $\scrC=\{|\psi\>\}$. Then the functions $\subinaMID(\scrC,k)=\subinaAID(\scrC,k)$ and $\subinaMID_S(\scrC)=\subinaAID_S(\scrC)$ can be abbreviated as $\subinaID(\psi,k)$ and $\subinaID_S(\psi)$, respectively. The pure state $\ket{\psi}$ is an \emph{$\varepsilon$-approximate} $k$-uniform state if $\subinaID(\psi,k)\leq \epsilon$.  By definition, a set of $\varepsilon$-approximate $k$-uniform states is $\varepsilon$-maskable with respect to $\subinaMID$.

\section{Random approximate quantum information masking in bipartite systems}
\label{sec:aqim_bipartite}

\begin{figure}
    \centering
    \includegraphics[width=1.0\linewidth]{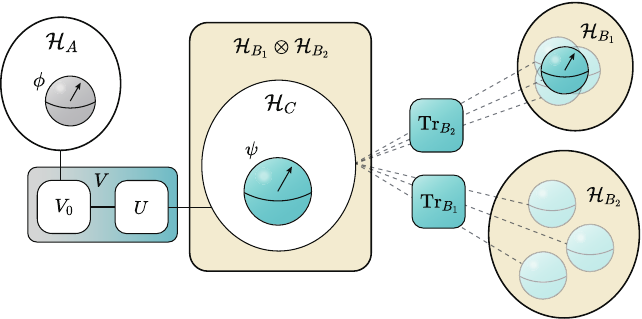}
    \caption{Illustration of AQIM in the bipartite system. The Hilbert space $\caH_A$ is mapped into a subspace $\caH_C$ of $\caH_{B_1B_2}$ by a random isometry $V = UV_0$. When $d_1\ll d_2$, if the states in $\caH_C$ are reduced on subsystem $B_1$, the reduced states maybe concentrate on $\widetilde{\mathds{1}}_{B_1}$ or $\Pi_{C}^{(B_1)}$. However, the reductions of states in $\caH_C$ on subsystem $B_2$ are not very close with each other. 
    }
    \label{fig:bipartite_AQIM}
\end{figure}

In this section we explore the power and limitation of random QIM in bipartite systems. 
Technically, a random masker corresponds to a random isometry of the form $V:\caH_A\rightarrow \caH_{B_1B_2}$, which maps $\caH$ into a subspace $\caH_C$  of dimension $d_C=d_A$ in  $\caH_{B_1,B_2}$. It can be realized by a fixed isometry followed by a random unitary transformation in $\rmU(\caH_{B_1,B_2})$. When $d_C\ll \min\{d_1, d_2\}$, it is known that with high probability all states in $\caH_C$ are nearly maximally entangled \cite{Hayden_2006, Aubrun2011}. Based on this observation, one might expect that a random isometry is a good approximate masker. In sharp contrast with this expectation, we will show that a random isometry is actually far from a good approximate masker, irrespective of $d_1$, $d_2$, and $d_C$. When $d_1\ll d_2$, \fref{fig:bipartite_AQIM} offers an intuitive picture on this result.

\subsection{No random approximate quantum information masking in bipartite systems}
\label{sec:no_random_aqim_bipartite}

In this subsection we show that a random isometry $V$ from $\caH_A$ to a subspace $\caH_C$ in $\caH_{B_1B_2}$ is far from a good approximate masker. To this end we first establish a universal lower bound for the expectation value of the average subsystem variation $\bbE_{\caH_C}\left[ \subvarA(\caH_C)\right]$ and then show that the probability of breaking this lower bound decreases exponentially with the amount of deviation. The proofs of all propositions and theorems in this subsection are relegated to \aref{appendix:no_random_aqim_bipartite}.

The following proposition establishes a simple connection between $\bbE_{\caH_C}[ \subvarA_X(\caH_C)]$ and $\subvarA_X(\caH_{B_1B_2})$, which is very helpful for understanding the limitation of random QIM. 
\begin{proposition}
    \label{prop:expectation_trace_distance_of_subspace_and_states_in_subspace}
     Suppose $\caH_C$ is a random subspace of $\caH_{B_1B_2}$ with dimension $d_C$. Then the average subsystem variation $\subvarA_X$ with respect to $X\in\{B_1,B_2\}$ satisfys
\begin{align}
    \label{eq:expectation_trace_distance_of_subspace_and_states_in_subspace}
   \underset{\caH_C}{\bbE}\,\left[ \subvarA_X(\caH_C) \right]= 
    \frac{(2d_C-2)(2d_{12}-1)}{(2d_C-1)(2d_{12}-2)} \subvarA_X(\caH_{B_1B_2}).
\end{align}
\end{proposition}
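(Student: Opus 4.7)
The plan is to reduce the proposition to a ratio of one-dimensional integrals by parametrizing both samplings in a common form. Let ``Sampling~A'' denote drawing $|\psi\>$ and $|\psi'\>$ as two i.i.d.\ Haar random vectors in a uniformly random $d_C$-dimensional subspace $\caH_C$ of $\caH_{B_1B_2}$, and ``Sampling~B'' denote drawing them as two i.i.d.\ Haar vectors in all of $\caH_{B_1B_2}$; the two sides of the identity are the corresponding expectations of $D(\psi_X,\psi'_X)$.

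First I would show that both samplings share the parametrization
\begin{equation*}
|\psi'\> = \sqrt{1-t}\,|\psi\> + \sqrt{t}\,|e\>,
\end{equation*}
where $|\psi\>$ is Haar on $\caH_{B_1B_2}$, $|e\>$ is Haar on the unit sphere of $|\psi\>^\perp$, and $t$ is independent of $(|\psi\>,|e\>)$ with density $(d_C-1)\lsp t^{d_C-2}$ under Sampling~A and $(d_{12}-1)\lsp t^{d_{12}-2}$ under Sampling~B. The $t$-density comes from $|\<\psi|\psi'\>|^2 \sim \mathrm{Beta}(1,d-1)$ for two Haar vectors on a $d$-sphere, and the independence of $|e\>$ from the underlying random subspace follows because marginalizing the conditional Haar law of $|e\>$ on $|\psi\>^\perp \cap \caH_C$ over uniform $\caH_C$ returns the full Haar measure on $|\psi\>^\perp$. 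Consequently Samplings~A and~B differ only in the distribution of $t$.

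The heart of the proof is the scaling identity $h(t) = \sqrt{t}\lsp h(1)$ for $h(t) := \bbE_{|\psi\>,|e\>}\bigl[D(\psi_X,\psi'(t)_X)\bigr]$. Expanding gives $\psi - \psi'(t) = t P - \sqrt{t(1-t)}\,Q$ with $P := |\psi\>\<\psi| - |e\>\<e|$ and $Q := |\psi\>\<e| + |e\>\<\psi|$. The joint law of the Haar random orthonormal 2-frame $(|\psi\>,|e\>)$ is invariant under in-plane rotations $(|\psi\>,|e\>) \mapsto (\cos\phi\,|\psi\> + \sin\phi\,|e\>,\, -\sin\phi\,|\psi\> + \cos\phi\,|e\>)$, since this amounts to right-multiplying the underlying Haar unitary by a fixed rotation. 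A short calculation shows that $(P,Q)$ correspondingly transforms by the angle-doubled rotation $P \mapsto \cos(2\phi)P + \sin(2\phi)Q$, $Q \mapsto -\sin(2\phi)P + \cos(2\phi)Q$, so for every $(a,b)$ with $a^2 + b^2 = t$ the operator $aP + bQ$ is equidistributed with $tP - \sqrt{t(1-t)}\,Q$; taking $(a,b)=(\sqrt{t},0)$ yields $h(t) = \sqrt{t}\lsp\bbE[\|P_X\|_1] = \sqrt{t}\lsp h(1)$. The main obstacle lies precisely here: the in-plane rotation is a \emph{global} unitary on $\caH_{B_1B_2}$ that does not commute with $\Tr_{X^c}$, so $\|P_X\|_1 \ne \|(WPW^\dagger)_X\|_1$ in general, and one cannot invoke pointwise unitary invariance of the trace norm; the identity must instead be extracted from the \emph{distributional} invariance of $(|\psi\>,|e\>)$, by matching the two expectations as integrals of the same functional against the same Haar law on 2-frames.

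Finally, integrating against the two Beta densities,
\begin{equation*}
\bbE_{\caH_C}\bigl[\subvarA_X(\caH_C)\bigr] = h(1)\int_0^1 (d_C-1)\,t^{d_C-3/2}\,dt = \frac{2(d_C-1)}{2d_C-1}\,h(1),
\end{equation*}
and analogously $\subvarA_X(\caH_{B_1B_2}) = \frac{2(d_{12}-1)}{2d_{12}-1}\,h(1)$. Dividing eliminates the unknown $h(1)$ and produces exactly the ratio $\frac{(2d_C-2)(2d_{12}-1)}{(2d_C-1)(2d_{12}-2)}$ stated in the proposition.
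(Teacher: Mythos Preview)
Your proof is correct and follows essentially the same route as the paper's: both reduce the expectation to a one-parameter integral over the overlap between $|\psi\>$ and $|\psi'\>$, establish that the conditional expected trace distance scales as the square root of $1-|\<\psi|\psi'\>|^2$, and then take the ratio of the resulting Beta-type integrals. The only cosmetic difference is that the paper parametrizes by the fidelity $a=|\<\psi|\psi'\>|$ and obtains the scaling by explicitly diagonalizing $|0\>\<0|-|v_a\>\<v_a|$ and absorbing the diagonalizing unitary into the Haar average, whereas you parametrize by $t=1-a^2$ and extract the same scaling from the in-plane rotational invariance of the Haar 2-frame $(\,|\psi\>,|e\>\,)$; these are two phrasings of the same symmetry argument.
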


Thanks to  \pref{prop:equivalence_aqim_inaccuracy_variation_maximal},
the average subsystem  variation $\subvarA_X(\caH_{B_1B_2})$ is lower bounded by the average subsystem  inaccuracy $\subinaA_{X}(\caH_{B_1B_2})=\subinaAID_{X}(\caH_{B_1B_2})$. By virtue of \pref{prop:expectation_trace_distance_of_subspace_and_states_in_subspace} and this observation we can establish the following theorem.

\begin{theorem}
    \label{theo:expectation_subspace_sum_trace_distance_lower_bound}
    Suppose $\caH_C$ is a random subspace of $\caH_{B_1B_2}$ that has dimension $d_C$. Then the average subsystem variation $\subvarA(\caH_C)$ satisfies
\begin{align}
    \label{eq:expectation_subspace_sum_trace_distance_lower_bound}
    \underset{\caH_C}{\bbE}\,[\subvarA(\caH_C)]\geq \frac{1}{2}\,\underset{\caH_C}{\bbE}\left[\subvarA_{B_1}(\caH_C)+\subvarA_{B_2}(\caH_C)\right]  \ge w,
\end{align}
where 
    \begin{align}
    w =& \frac{1}{6}\frac{(2d_C-2)(2d_{12}-1)}{(2d_C-1)(2d_{12}-2)}> \frac{1}{9}.
    \end{align}
\end{theorem}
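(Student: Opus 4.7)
The plan is to handle the two inequalities separately. The first, $\bbE_{\caH_C}[\subvarA(\caH_C)] \geq \tfrac{1}{2}\bbE_{\caH_C}[\subvarA_{B_1}(\caH_C) + \subvarA_{B_2}(\caH_C)]$, is immediate: by definition $\subvarA(\caH_C, 1) = \max\{\subvarA_{B_1}(\caH_C), \subvarA_{B_2}(\caH_C)\}$, which dominates the arithmetic mean of its two arguments, and linearity of expectation preserves the bound.

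For the second inequality, I would apply \pref{prop:expectation_trace_distance_of_subspace_and_states_in_subspace} separately to $X = B_1$ and $X = B_2$ and add the two identities, yielding
\begin{equation*}
\underset{\caH_C}{\bbE}\bigl[\subvarA_{B_1}(\caH_C) + \subvarA_{B_2}(\caH_C)\bigr] = \frac{(2d_C-2)(2d_{12}-1)}{(2d_C-1)(2d_{12}-2)}\Bigl[\subvarA_{B_1}(\caH_{B_1B_2}) + \subvarA_{B_2}(\caH_{B_1B_2})\Bigr].
\end{equation*}
After dividing by two, the desired bound $\geq w$ reduces to the dimension-free estimate $\subvarA_{B_1}(\caH_{B_1B_2}) + \subvarA_{B_2}(\caH_{B_1B_2}) \geq \tfrac{1}{3}$.

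Establishing this reduced estimate is the main obstacle. My plan is to invoke \pref{prop:equivalence_aqim_inaccuracy_variation_maximal} to pass to the average inaccuracy: since Haar invariance forces $\Pi_X(\caH_{B_1B_2}) = \widetilde{\md{1}}_X$, one has $\subvarA_X(\caH_{B_1B_2}) \geq \subinaA_X(\caH_{B_1B_2}) = \bbE_\psi D(\psi_X, \widetilde{\md{1}}_X)$ for each $X$. Taking $d_1 \leq d_2$ without loss of generality, I would split into two regimes. When $d_1 \leq 5d_2/6$, a deterministic rank bound handles it: since $\psi_{B_2}$ has rank at most $d_1$, the eigenvalue accounting $\sum_i |\lambda_i - 1/d_2| = 2\sum_{\lambda_i > 1/d_2}(\lambda_i - 1/d_2) \geq 2(1 - d_1/d_2)$ gives $D(\psi_{B_2}, \widetilde{\md{1}}_{B_2}) \geq \tfrac{1}{3}$ pointwise. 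In the near-balanced regime $d_1 > 5d_2/6$ (in particular $d_1 = d_2$), the rank bound degenerates, and one must show $\bbE_\psi D(\psi_{B_1}, \widetilde{\md{1}}_{B_1}) \geq \tfrac{1}{6}$ uniformly in $d_1$. I would do this via the Vandermonde-squared joint density of the Schmidt eigenvalues on the simplex: a direct calculation at $d_1 = 2$ gives $3/4$, and a monotonicity-in-$d$ argument combined with the Marchenko-Pastur asymptotic $\bbE_\psi D(\psi_{B_1}, \widetilde{\md{1}}_{B_1}) \to 3\sqrt{3}/(2\pi) \approx 0.83$ handles all larger dimensions. This random-matrix-theoretic input is the genuine technical hurdle, since the naive moment inequality $\bbE\|\psi_X - \widetilde{\md{1}}_X\|_1 \geq \bbE\|\psi_X - \widetilde{\md{1}}_X\|_2^2/2 = (d_X^2 - 1)/[2 d_X(d_{12}+1)]$ vanishes as $d_X \to \infty$ and is therefore too weak.

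Finally, the concluding strict inequality $w > 1/9$ is one-line algebra: cross-multiplying, $\frac{(2d_C-2)(2d_{12}-1)}{(2d_C-1)(2d_{12}-2)} > \tfrac{2}{3}$ is equivalent to $4d_{12}(d_C - 2) + 2(d_C + 1) > 0$, which is manifest for every $d_C \geq 2$.
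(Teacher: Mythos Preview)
Your overall architecture matches the paper's exactly: the first inequality is the trivial $\max \geq$ average, and the second is reduced via \pref{prop:expectation_trace_distance_of_subspace_and_states_in_subspace} and \pref{prop:equivalence_aqim_inaccuracy_variation_maximal} to the estimate $\subinaAID_{B_1}(\caH_{B_1B_2})+\subinaAID_{B_2}(\caH_{B_1B_2})\geq 1/3$. The divergence is entirely in how you establish that last bound.

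The paper does not split into regimes. Instead it applies the Fuchs--van de Graaf inequality $D\geq 2(1-F)$ to each side, invokes an exact closed-form formula for $\bbE_\psi F(\psi_{B_1},\widetilde{\md{1}}_{B_1})$ from the random-matrix literature (a finite sum involving Pochhammer symbols), and uses the Schmidt relation $F(\psi_{B_2},\widetilde{\md{1}}_{B_2})=\sqrt{d_1/d_2}\,F(\psi_{B_1},\widetilde{\md{1}}_{B_1})$ to tie the two sides together. Truncating the alternating sum at its first two terms and applying two elementary monotonicity lemmas then yields the uniform bound $4-2(1+\sqrt{d_1/d_2})(1-\tfrac{1}{8}\tfrac{d_1-1}{d_2-1/2})\geq 1/3$.

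Your rank-bound argument for $d_1\leq 5d_2/6$ is correct and pleasantly direct. The near-balanced case, however, has real gaps. First, the target you write down, $\bbE D(\psi_{B_1},\widetilde{\md{1}}_{B_1})\geq 1/6$, is only sufficient for the sum $\geq 1/3$ when $d_1=d_2$ (so the two sides coincide); for $5d_2/6<d_1<d_2$ you have not said how the $B_2$ side contributes, and the rank bound there only gives something strictly below $1/3$. Second, the ``monotonicity-in-$d$'' step is asserted but not argued: it is plausible that $\bbE D(\psi_{B_1},\widetilde{\md{1}}_{B_1})$ increases in $d_1$ along $d_1=d_2$ toward the Marchenko--Pastur limit, but proving this uniformly (and across the full near-balanced range with $d_2/d_1\in[1,6/5)$) is itself a nontrivial random-matrix computation, comparable in difficulty to the fidelity formula the paper imports. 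So your plan trades one piece of random-matrix input for another, without actually closing the argument. The paper's route, while less elementary in appearance, is self-contained once the cited fidelity formula is granted.

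Your verification of $w>1/9$ by cross-multiplying is fine; the paper simply asserts this.
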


\begin{figure}
    \centering
    \includegraphics[width=1.0\linewidth]{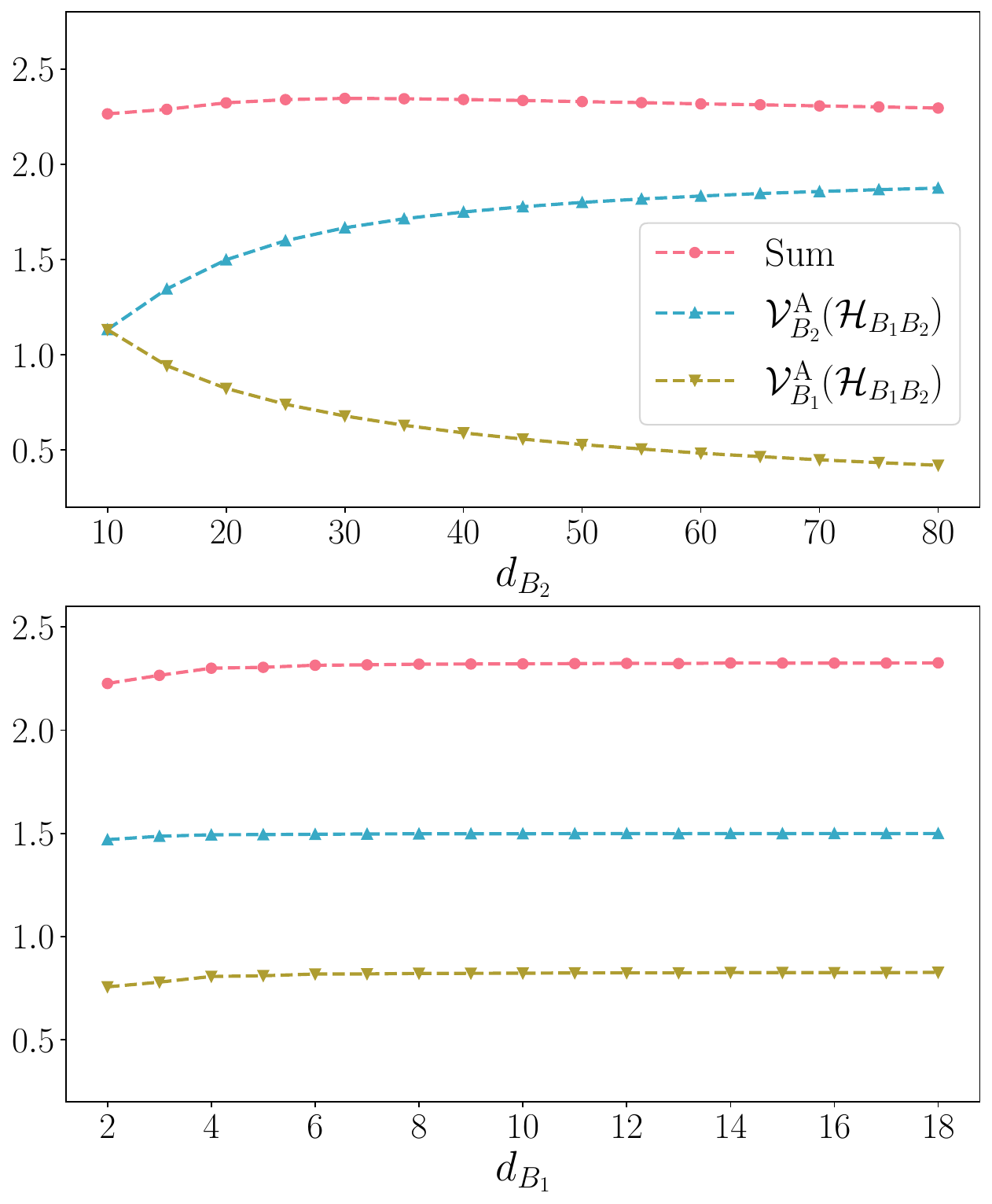}
    \caption{Numerical illustration of the behaviors of $\subvarA_{B_1}(\caH_{B_1B_2})$, $\subvarA_{B_2}(\caH_{B_1B_2})$ and their summation. In above subfigure, we let $d_{B_1}=10$ and $d_{B_2}$ increase. When $d_{B_2}=d_{B_1}=10$, $\subvarA_{B_1}(\caH_{B_1B_2})=\subvarA_{B_2}(\caH_{B_1B_2})$. When $d_{B_2}$ increases, $\subvarA_{B_1}(\caH_{B_1B_2})$ monotonically increases and approaches $2$, $\subvarA_{B_2}(\caH_{B_1B_2})$ monotonically decreases and approaches $0$, and the summation of them first increases and then decreases. In below subfigure, we set $d_{B_2}/d_{B_1}=2$ and let $d_{B_1}$ increase. Three expectation values only increase a little and then approach some constants.}
    \label{fig:lower_bound_sum_expectation}
\end{figure}
\Thref{theo:expectation_subspace_sum_trace_distance_lower_bound} shows that there is an inevitable tradeoff between $\bbE_{\caH_C}[\subvarA_{B_1}(\caH_C)]$ and $\bbE_{\caH_C}[\subvarA_{B_2}(\caH_C)]$. Both terms cannot be made small simultaneously irrespective of the choices of $d_1$, $d_2$, and $d_C$, as illustrated in \fref{fig:lower_bound_sum_expectation}. When $d_C, d_1, d_2\gg 1$, the lower bound $w$ approaches $1/6$.

Next, we further show that the probability that $\subvarA(\caH_C)$ violates the lower bond $w$ decreases exponentially with the amount of deviation. 
\begin{theorem}
    \label{theo:lower_bound_expectation_sum_subspace_bipartite}
    Suppose $\caH_C$ is a random subspace of $\caH_{B_1B_2}$ that has dimension $d_C$. Then, for $\alpha>0$, the average subsystem variation $\subvarA(\caH_C)$ satisfies
    \begin{align}
        &\Pr\left\{ \subvarA(\caH_C)< w-\alpha\right\} 
        \le \exp\left(-d_{12}\alpha^2/16 \right).
    \end{align}
 \end{theorem}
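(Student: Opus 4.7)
The plan is to upgrade the expectation bound of \thref{theo:expectation_subspace_sum_trace_distance_lower_bound} to a concentration inequality by viewing $\subvarA(\caH_C)$ as a Lipschitz function on the unitary group and applying Levy's lemma. Concretely, fix a reference subspace $\caH_{A_0} \subset \caH_{B_1B_2}$ of dimension $d_C$, parameterize the random subspace as $\caH_C = U\caH_{A_0}$ with $U \sim \mu(d_{12})$, and define $f(U) := \subvarA(U\caH_{A_0})$. Because $f$ depends only on the subspace $\caH_C$, its distribution under Haar measure matches that of $\subvarA(\caH_C)$ under the unitarily-invariant measure on the Grassmannian, so it suffices to prove a one-sided concentration bound for $f$ around its mean.

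The first step is to establish that $f$ is Lipschitz with an explicit constant. From the standard pure-state estimate $\|U_1\phi U_1^\dagger - U_2\phi U_2^\dagger\|_1 \le 2\|(U_1-U_2)|\phi\>\|_2 \le 2\|U_1-U_2\|_\infty$ together with the contractivity of the partial trace under $\|\cdot\|_1$, one obtains $D((U_1\phi U_1^\dagger)_S, (U_2\phi U_2^\dagger)_S) \le 2\|U_1-U_2\|_\infty$ for each subsystem $S\in\{B_1,B_2\}$. Expanding $|\subvarA_S(U_1\caH_{A_0}) - \subvarA_S(U_2\caH_{A_0})|$, pushing the absolute value inside the double expectation over $|\phi\>,|\phi'\>\in\caH_{A_0}$, and applying the triangle inequality $|D(a,b)-D(c,d)| \le D(a,c)+D(b,d)$ then gives a bound of $4\|U_1-U_2\|_\infty$. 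Taking the max over $S$ preserves this constant, so $f$ is $L$-Lipschitz with $L=4$ with respect to the operator norm (and hence also the Hilbert--Schmidt norm).

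The second step is to invoke the standard one-sided measure-concentration inequality for Haar-random unitaries on $\rmU(d_{12})$: for any $L$-Lipschitz $f$,
\begin{align}
\Pr\bigl[f(U) \le \bbE f(U) - \alpha\bigr] \le \exp\!\bigl(-c\, d_{12}\,\alpha^2/L^2\bigr),
\end{align}
for a universal constant $c$ supplied by the appropriate log-Sobolev-type estimate. Substituting the Lipschitz constant $L=4$ from the previous step and the expectation lower bound $\bbE f(U) \ge w$ supplied by \thref{theo:expectation_subspace_sum_trace_distance_lower_bound} then yields
\begin{align}
\Pr[\subvarA(\caH_C) < w - \alpha] \le \Pr[f(U) < \bbE f(U) - \alpha] \le \exp\!\bigl(-d_{12}\alpha^2/16\bigr),
\end{align}
which is the theorem.

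The principal obstacle I anticipate is pinning down the numerical constant $1/16$ in the exponent precisely. Different formulations of Levy's lemma on $\rmU(d)$ (Gromov--Milman via Ricci curvature, Meckes--Meckes via Bakry--\'Emery, etc.) use different reference metrics and carry different universal prefactors, so matching them cleanly with the crude Lipschitz constant $L=4$ obtained from the twofold triangle inequality is the one place that needs care. If necessary, a tighter Lipschitz analysis is available by averaging $\|(U_1-U_2)|\phi\>\|_2$ over Haar-random $|\phi\>\in\caH_{A_0}$ and using Jensen's inequality to trade the operator norm for a $\|U_1-U_2\|_{HS}/\sqrt{d_C}$-type quantity, which generally improves the exponent. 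Apart from this constant-tracking exercise, the remainder of the argument is routine.
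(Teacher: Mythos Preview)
Your approach is essentially the same strategy as the paper's: establish a Lipschitz bound and apply a Gaussian concentration inequality, then combine with the expectation lower bound from \thref{theo:expectation_subspace_sum_trace_distance_lower_bound}. Two points of difference are worth noting.

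First, the paper applies concentration not to $\subvarA=\max_S\subvarA_S$ directly but to the sum $\subvarA_{B_1}+\subvarA_{B_2}$, using $\subvarA\ge\tfrac12(\subvarA_{B_1}+\subvarA_{B_2})$ to transfer the tail bound. This is cosmetic: since the max of two $L$-Lipschitz functions is $L$-Lipschitz and $\bbE[\subvarA]\ge w$ already holds, your route via the max works equally well and yields the same exponent.

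Second, and this is exactly the constant issue you flagged, the paper does \emph{not} work on $\rmU(d_{12})$ but on the Grassmannian $\mathrm{Gr}(\caH_{B_1B_2},d_C)$, invoking the concentration lemma $\Pr\{f\lessgtr\bbE[f]\mp\alpha\}\le\exp(-d_{12}\alpha^2/(2\kappa^2))$ there. The Grassmannian metric satisfies $M_{\mathrm{Gr}}(\caH_{C_1},\caH_{C_2})\ge\sqrt{2}\,\|U_1-U_2\|_\infty$, which converts your operator-norm Lipschitz constant $4$ into $\kappa=4/\sqrt{2}$; plugging into $2\kappa^2=16$ gives exactly $\exp(-d_{12}\alpha^2/16)$. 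If instead you stay on the unitary group with the paper's own $\mathrm{SU}(d)$ lemma ($\exp(-d\alpha^2/(4\kappa^2))$ with respect to $\|\cdot\|_2$) and Lipschitz constant $4$, you land at $\exp(-d_{12}\alpha^2/64)$, a factor $4$ off. So your anticipated obstacle is real, and the resolution is precisely to pass to the Grassmannian; your proposed refinement via averaging $\|(U_1-U_2)|\phi\rangle\|_2$ does not recover the missing factor.
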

 
\Thsref{theo:expectation_subspace_sum_trace_distance_lower_bound} and \ref{theo:lower_bound_expectation_sum_subspace_bipartite}  show that a random isometry from $\caH_A$ to a subspace in $\caH_{B_1B_2}$ is not a good approximate masker even if the dimensions  $d_1$ and $d_2$ are arbitarily large.

\begin{theorem}
\label{theo:no_randomly_aqim}
Suppose $V$ is a random isometry from the Hilbert space $\caH_A$  to the bipartite Hilbert space $\caH_{B_1B_2}$. Then the probability that $V$ is a $\delta$-approximate masker with $\delta=1/9$ and with respect to the maximum or average subsystem variation is exponentially small.
\end{theorem}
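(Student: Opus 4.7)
The plan is to combine \thref{theo:expectation_subspace_sum_trace_distance_lower_bound}, \thref{theo:lower_bound_expectation_sum_subspace_bipartite}, and \pref{prop:equivalence_aqim_inaccuracy_variation_maximal} in a direct fashion. A random isometry $V:\caH_A\to\caH_{B_1B_2}$ can be realized as $V=UV_0$ for a fixed isometry $V_0$ and a Haar random unitary $U\sim\mu(\caH_{B_1B_2})$, so its image $\caH_C:=V(\caH_A)$ is a Haar random subspace of dimension $d_C=d_A$, and the induced image set $\scrC=\{V|\varphi\rangle:|\varphi\rangle\in\caH_A\}$ is precisely the object analyzed in the preceding two theorems. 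By the monotonicity $\subvarA(\scrC)\le\subvarM(\scrC)$ from \pref{prop:equivalence_aqim_inaccuracy_variation_maximal}, if $V$ is a $(1/9)$-approximate masker with respect to the maximum subsystem variation it is automatically one with respect to the average version, so it suffices to upper bound $\Pr\{\subvarA(\scrC)\le 1/9\}$ in order to cover both cases simultaneously.

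To that end, I would set $\alpha:=w-1/9$, which is strictly positive by the inequality $w>1/9$ established in \thref{theo:expectation_subspace_sum_trace_distance_lower_bound}. Applying \thref{theo:lower_bound_expectation_sum_subspace_bipartite} with this choice of $\alpha$ then yields
\begin{equation*}
\Pr\{\subvarA(\scrC)\le 1/9\}=\Pr\{\subvarA(\scrC)\le w-\alpha\}\le\exp\!\left(-d_{12}\alpha^2/16\right),
\end{equation*}
which is exponentially small in $d_{12}$, giving the desired conclusion for both notions of approximate masking.

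The argument is essentially mechanical once the concentration inequality of \thref{theo:lower_bound_expectation_sum_subspace_bipartite} is in hand. The only subtle point is securing a strictly positive gap $\alpha>0$ between the expectation lower bound $w$ and the target threshold $1/9$, which is exactly what the strict inequality $w>1/9$ from \thref{theo:expectation_subspace_sum_trace_distance_lower_bound} supplies; with that gap available, no additional probabilistic or combinatorial work is needed.
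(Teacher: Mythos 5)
Your proposal is correct and is essentially the paper's own (implicit) argument: \thref{theo:no_randomly_aqim} is obtained there precisely by combining the bound $w>1/9$ from \thref{theo:expectation_subspace_sum_trace_distance_lower_bound}, the concentration inequality of \thref{theo:lower_bound_expectation_sum_subspace_bipartite} applied with $\alpha=w-1/9$, and the ordering $\subvarA\le\subvarM$ from \pref{prop:equivalence_aqim_inaccuracy_variation_maximal}, exactly as you do. The only cosmetic point is the strict versus non-strict inequality in \thref{theo:lower_bound_expectation_sum_subspace_bipartite} (fixable by taking $\alpha$ slightly smaller), a detail the paper glosses over in the same way.
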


\subsection{Concentration results on bipartite systems}
\label{sec:concentration_bipartite_identity_projector}

In this subsection, we prove several concentration results on a bipartite system, which show that approximate information masking from one subsystem is possible if the dimension of the other subsystem is sufficiently larger. These results complement previous results on generic entanglement as established in \cite{Hayden_2006, Aubrun2011}. Moreover, these results will provide solid foundations for exploring  AQIM in multipartite systems. The proofs of two theorems and one proposition in this subsection are relegated to \aref{appendix:concentration_bipartite_identity_projector}.

The following theorem provides a concentration result on the subsystem inaccuracy with respect to a subsystem maximally mixed state.
\begin{theorem}
\label{theo:trace_identity_bipartite}
Suppose $\mathcal{H}_C$ is a random subspace of $\caH_{B_1B_2}$ of dimension $d_C$. Then for $\alpha>0$, the maximum subsystem inaccuracy $\subinaMID_{B_1}(\caH_C)$ with respect to $B_1$ satisfy
\begin{align}
\label{Eq:bipartite_identity_concentration}
\Pr& \left\{\subinaMID_{B_1}(\caH_C)
>r+\alpha\right\} \le 2\left(\frac{10}{\alpha}\right)^{2d_C}\exp\left(-\frac{d_{12}\alpha^2}{72\pi^3\mathrm{ln}2}\right),
\end{align}
where $r=\sqrt{(d_1^2-1)/(d_{12}+1)}$.
\end{theorem}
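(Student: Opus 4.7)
The strategy is the classical combination of a moment calculation, Levy's concentration on the sphere, and an $\epsilon$-net argument, where the last step is needed because we must take a maximum over all pure states in a random subspace. First I would realize a Haar-random subspace as $\caH_C=U\caH_C^{(0)}$ where $\caH_C^{(0)}$ is an arbitrary reference subspace of dimension $d_C$ and $U\sim\mu(d_{12})$, so that for each fixed reference vector $|\varphi\>$ the image $U|\varphi\>$ is Haar-random in $\caH_{B_1B_2}$. This reduces the whole problem to controlling $f(|\psi\>):=\|\Tr_{B_2}|\psi\>\<\psi|-\widetilde{\md{1}}_{B_1}\|_1$ for a Haar-random pure state $|\psi\>$ in $\caH_{B_1B_2}$.

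Next I would bound the mean of $f$. Since $\psi_{B_1}-\widetilde{\md 1}_{B_1}$ has rank at most $d_1$, Cauchy--Schwarz gives $\|\psi_{B_1}-\widetilde{\md 1}_{B_1}\|_1\le\sqrt{d_1}\,\|\psi_{B_1}-\widetilde{\md 1}_{B_1}\|_2$, so by Jensen
\begin{equation*}
\bbE f(|\psi\>)\le\sqrt{d_1\bigl(\bbE\Tr(\psi_{B_1}^2)-1/d_1\bigr)}.
\end{equation*}
Plugging in the standard Haar moment $\bbE\Tr(\psi_{B_1}^2)=(d_1+d_2)/(d_{12}+1)$ and simplifying yields exactly $\bbE f\le r=\sqrt{(d_1^2-1)/(d_{12}+1)}$.

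Then I would invoke Levy's lemma on the sphere of real dimension $2d_{12}$. The map $|\psi\>\mapsto f(|\psi\>)$ is Lipschitz on the unit sphere with a small constant (essentially $2$, using contractivity of the partial trace together with $\|\psi-\phi\|_1\le 2\||\psi\>-|\phi\>\|_2$ for pure states). This turns the mean bound into a pointwise concentration estimate of the form $\Pr\{f(|\psi\>)>r+t\}\le 2\exp(-cd_{12}t^2)$, where $c$ is the usual constant on the complex sphere (this is the source of the $72\pi^3\ln2$ after combining the Lipschitz constant with the sphere isoperimetric constant $1/(9\pi^3\ln 2)$ used in Hayden--Leung--Winter--style arguments).

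Finally I would upgrade this pointwise estimate to a uniform estimate over $\caH_C$ by an $\epsilon$-net. Fix a minimal $\epsilon$-net $\caN$ on the unit sphere of $\caH_C^{(0)}$; standard volumetric bounds give $|\caN|\le(5/\epsilon)^{2d_C}$. For each $|\varphi\>\in\caN$, $U|\varphi\>$ is Haar-random in $\caH_{B_1B_2}$, so the concentration above applies; union-bounding over $\caN$ controls $\max_{|\varphi\>\in\caN} f(U|\varphi\>)$. Using the Lipschitz property of $f$ a second time to pass from the net to the whole sphere of $\caH_C$ (with $\epsilon$ chosen proportionally to $\alpha$, e.g.\ $\epsilon\asymp\alpha/2$) and absorbing the resulting Lipschitz correction into $\alpha$, I obtain the advertised bound with prefactor $(10/\alpha)^{2d_C}$ and exponent $\exp(-d_{12}\alpha^2/(72\pi^3\ln 2))$.

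The main obstacle is the bookkeeping of constants: tracking the Lipschitz constant of $f$, the precise form of Levy's inequality on the complex sphere, and the trade-off between the net fineness $\epsilon$ and the deviation parameter $\alpha$ so that the final exponent and the combinatorial prefactor land exactly as stated. The moment calculation and the reduction via rotation invariance are routine; the delicate part is choosing $\epsilon$ optimally and verifying that the Lipschitz correction can be absorbed into $\alpha$ without worsening the exponent.
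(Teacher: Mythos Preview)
Your proposal is correct and follows essentially the same route as the paper: bound the mean of $f(|\psi\>)=\|\psi_{B_1}-\widetilde{\md 1}_{B_1}\|_1$ by $r$ via the second-moment computation $\bbE\Tr(\psi_{B_1}^2)=(d_1+d_2)/(d_{12}+1)$, apply Levy's lemma with Lipschitz constant $2$ to get pointwise concentration $2\exp(-d_{12}t^2/(18\pi^3\ln 2))$, and then upgrade to a uniform bound on $\caH_C$ via an $(5/\epsilon)^{2d_C}$-net with $\epsilon=\alpha/2$. The constants you anticipate (the factor $72\pi^3\ln 2$ from halving $\alpha$, and the prefactor $(10/\alpha)^{2d_C}$) are exactly those the paper obtains.
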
 

\begin{figure}
    \centering
    \includegraphics[width=1.0\linewidth]{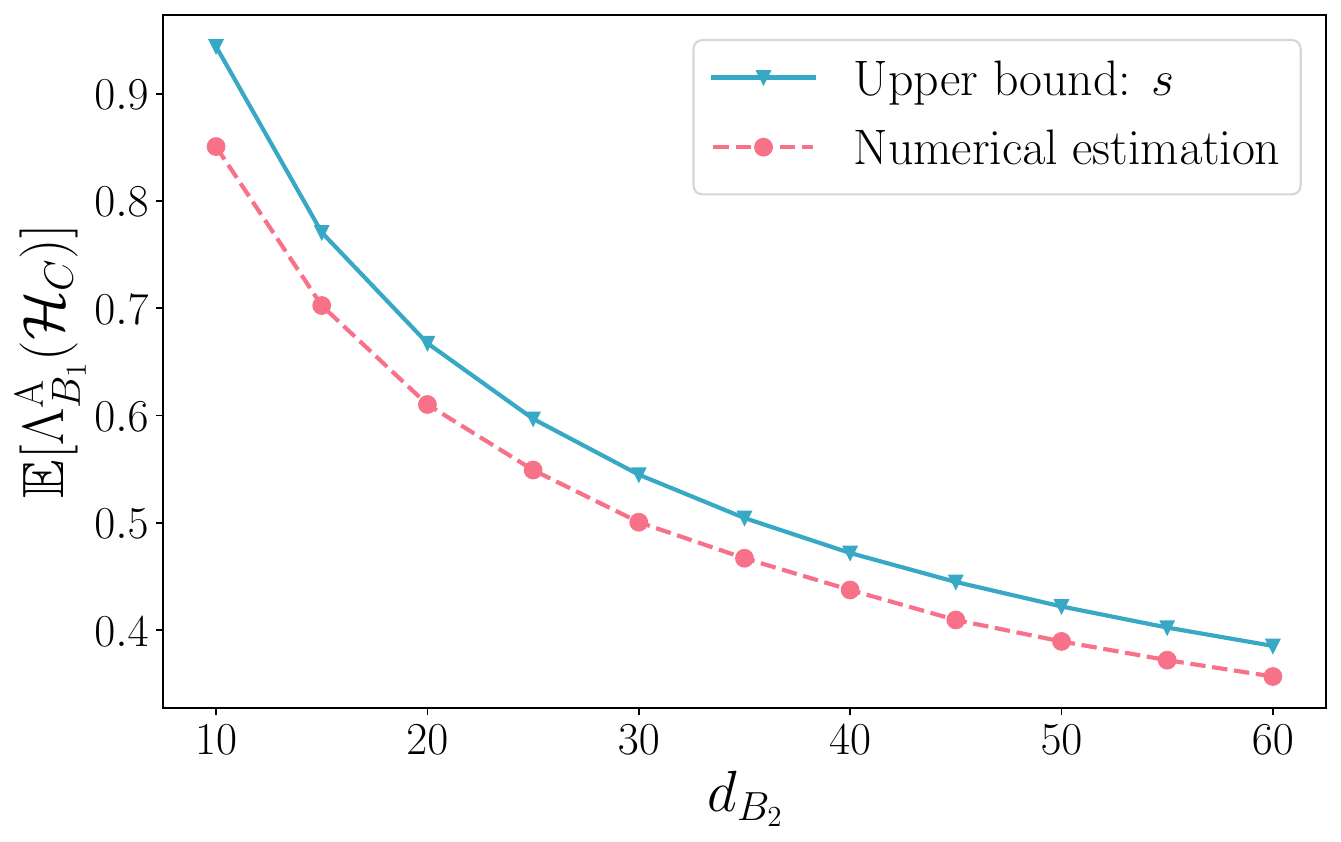}
    \caption{The numerical estimation of $\bbE[\subinaA_{B_1}(\caH_C)]$ (the red dashed line), and its analytical upper bound $s$ in \eref{Eq:projector_expectation_upper_bound_beta} (the blue solid line) for the case $d_1=10$ and $d_C=10$.  Both of them become smaller and approach $0$ when $d_2$ approaches infinity. }
    \label{fig:bipartite_projector_upper_bound}
\end{figure}

In the proof of the above theorem, we also show that $\subinaAID_{B_1}(\caH_{B_1B_2})$ is upper bounded by $r\approx \sqrt{d_1/d_2}$. 
Requiring the probability on the right-hand side of \eref{Eq:bipartite_identity_concentration} to be smaller than $1$, we get the following inequality: 
\begin{align}
\label{eq:bipartite_dc_upper_bound_mixed}
    \quad d_C < \frac{1}{2\ln(10/\alpha)} \left[\frac{d_{12}\alpha^2}{72\pi^3\ln2} -\ln2\right].
\end{align}
Hence, the physical interpretation of \thref{theo:trace_identity_bipartite} is that there exists a subspace $\caH_C$ of $\caH_{B_1B_2}$ of dimension $d_C$ satisfying \eref{eq:bipartite_dc_upper_bound_mixed}, such that the subsystem masking inaccuracy $\subinaAID_{B_1}(\caH_{C})$ is upper bounded by $r + \alpha$. Notably, the probability for $\caH_C$ not satisfying the condition $\subinaMID_{B_1}(\caH_C) < 2r$ is exponentially small.
Essentially, \thref{theo:trace_identity_bipartite} is equivalent to Theorem \Romannum{4}.1 and Corollary \Romannum{4}.2 of \cite{Hayden_2006}, which collectively establish that with high probability a random subspace contains only near-maximally entangled states as long as the dimension $d_C$ is not too large. Nevertheless, the upper bound for $d_C$ specified in \eref{eq:bipartite_dc_upper_bound_mixed} is simpler and exhibits better dependence on the subsystem dimensions.

Next, we establish the counterpart of \thref{theo:trace_identity_bipartite} for the subsystem  inaccuracy $\subinaM_{B_1}(\caH_C)$; see \eqsref{Eq:total_masking_inaccuracy}{Eq:total_masking_inaccuracyAlt}.
\begin{theorem}
\label{theo:trace_projector_bipartite}
Suppose $\mathcal{H}_C$ is a random subspace of $\caH_{B_1B_2}$ of dimension $d_C$. Then for $\alpha>0$, the maximum subsystem inaccuracy $\subinaM_{B_1}(\caH_C)$ with respect to $B_1$ satisfies 
\begin{align}
\Pr&\left\{\subinaM_{B_1}(\caH_C) > s + \alpha\right\} \leq \left(\frac{10}{\alpha}\right)^{2d_C}\exp\left(-\frac{d_{12}\alpha^2}{256}\right),
\end{align}
where
\begin{align}
\label{Eq:projector_expectation_upper_bound_beta}
    s=\sqrt{\frac{d_C-1}{d_C}\frac{d_{12}(d_1^2-1)}{d_{12}^2-1}}. 
\end{align}
\end{theorem}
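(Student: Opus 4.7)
The plan is to mirror the proof of \Thref{theo:trace_identity_bipartite}, with the reference state $\widetilde{\md{1}}_{B_1}$ replaced by the subspace marginal $\Pi_{B_1}(\caH_C)$, which itself depends on the random subspace. I would parameterize the random subspace as $\caH_C=U\caH_{C_0}$ for a fixed $d_C$-dimensional reference subspace $\caH_{C_0}\subset\caH_{B_1B_2}$ and $U\sim\mu(d_{12})$. For any unit vector $\ket\phi$ on the sphere of $\caH_{C_0}$, set $\ket\psi:=U\ket\phi$ and $f_\phi(U):=D(\psi_{B_1},\Pi_{B_1}(\caH_C))$; the goal is then to control $\sup_{\ket\phi} f_\phi(U)$.

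First I would control $\bbE_U[f_\phi(U)]$ for each fixed unit vector $\ket\phi$. Using $\|X\|_1^2\le d_1\|X\|_2^2$ for operators $X$ on $\caH_{B_1}$ together with Jensen's inequality, it suffices to evaluate
\[
\bbE_U\bigl[\|\psi_{B_1}-\Pi_{B_1}(\caH_C)\|_2^2\bigr]=\bbE_U\bigl[\|\psi_{B_1}\|_2^2\bigr]-\bbE_U\bigl[\|\Pi_{B_1}(\caH_C)\|_2^2\bigr],
\]
where the cross term collapses by Haar invariance of the joint distribution of $(\psi,\caH_C)$: the conditional identity $\bbE[\psi_{B_1}\mid\caH_C]=\Pi_{B_1}(\caH_C)$ yields $\bbE\Tr[\psi_{B_1}\Pi_{B_1}(\caH_C)]=\bbE[\|\Pi_{B_1}(\caH_C)\|_2^2]$. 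The two remaining moments are obtained from the second-order Weingarten identity $\bbE_U[U^{\otimes 2}(\cdot)U^{\dagger\otimes 2}]=\alpha\mathds{1}+\beta\swap$ applied to $\phi^{\otimes 2}$ and $P_{C_0}^{\otimes 2}$, together with the swap trick $\Tr[Y^2]=\Tr[(Y\otimes Y)\swap]$ used to evaluate the $B_1$ Hilbert--Schmidt norm after partial tracing $B_2$. After simplification the difference equals $d_2(d_C-1)(d_1^2-1)/[d_C(d_{12}^2-1)]$, so Jensen delivers $\bbE_U[f_\phi(U)]\le s$.

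Next I would concentrate via Levy's lemma and then pass to the supremum over $\caH_{C_0}$ through an $\epsilon$-net argument. The map $U\mapsto f_\phi(U)$ is $4$-Lipschitz in the Hilbert--Schmidt metric on $\rmU(d_{12})$: writing $A=\phi-P_{C_0}/d_C$ with $\|A\|_1\le 2$, contractivity of the partial trace combined with $\|UAU^\dagger-U'AU'^\dagger\|_1\le 2\|U-U'\|_\infty\|A\|_1$ yields this bound. Levy's lemma then gives $\Pr\{f_\phi(U)>s+\alpha/2\}\le\exp(-d_{12}\alpha^2/256)$. The map $\ket\phi\mapsto f_\phi(U)$ is in turn $2$-Lipschitz because $|f_\phi(U)-f_{\phi'}(U)|\le\|\psi-\psi'\|_1\le 2\|\ket\phi-\ket{\phi'}\|$, so a standard volume estimate supplies an $(\alpha/4)$-net $\caN$ of the unit sphere of $\caH_{C_0}$ with $|\caN|\le(10/\alpha)^{2d_C}$. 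Union-bounding the Levy tail over $\caN$ and interpolating via the $2$-Lipschitz property in $\ket\phi$ delivers the stated bound.

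The hardest part will be the expectation computation: extracting the clean closed form $d_2(d_C-1)(d_1^2-1)/[d_C(d_{12}^2-1)]$ from the Weingarten integral applied to $P_{C_0}^{\otimes 2}$ requires careful algebraic bookkeeping of the Weingarten coefficients and the various partial swap operators. Exploiting the conditional-expectation identity to collapse the cross term is essential; without it one would need to compute $\bbE\Tr[\psi_{B_1}\Pi_{B_1}(\caH_C)]$ as a separate Weingarten integral of $\phi\otimes P_{C_0}$ against partial swaps, which is feasible but much less transparent. The remaining ingredients—Levy's lemma on $\rmU(d_{12})$ and the net-plus-interpolation step—are essentially those already used in the proof of \Thref{theo:trace_identity_bipartite}.
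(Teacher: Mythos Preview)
Your proposal is correct and follows essentially the same route as the paper: parameterize the random subspace by $U\in\mathrm{SU}(d_{12})$, show that $U\mapsto D\bigl((U\phi U^\dagger)_{B_1},\Tr_{B_2}(U\Pi_{C_0}U^\dagger/d_C)\bigr)$ is $4$-Lipschitz, apply concentration on $\mathrm{SU}(d_{12})$ (the paper's \lref{lem:concentration_SU}, not Levy's lemma on the sphere), and union-bound over an $\epsilon$-net of the sphere in $\caH_{C_0}$ using the $2$-Lipschitz dependence on $\ket\phi$. Your conditional-expectation trick $\bbE[\psi_{B_1}\mid\caH_C]=\Pi_{B_1}(\caH_C)$ to collapse the cross term is slightly cleaner than the paper's direct Weingarten evaluation of all three terms in \pref{Prop:expectation_2_norm_projector}, but yields the same value of $s$ and the same final constants.
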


The parameter $s$ is an upper bound for the expectation value $\bbE_{\caH_C} [\subinaA_{B_1}(\caH_C)]$, as illustrated in \fref{fig:bipartite_projector_upper_bound}.
If $d_1, d_2, d_C$ are very large, then $s\approx r $ and we get a similar concentration result as \thref{theo:trace_identity_bipartite}. 
If $1\ll d_1\ll d_2$ and $\alpha\leq s$, then there exists a subspace $\caH_C$ of $\caH_{B_1B_2}$ of dimension
\begin{align}
    \quad d_C <  \frac{d_{12}\alpha^2}{512\ln(10/\alpha)}
\end{align}
such that $\subinaM_{B_1}(\caH_C)$ is upper bounded by $2s$, and the probability for $\caH_C$ not satisfying this property is exponentially small.

To complement \thref{theo:trace_projector_bipartite}, next,  we analyze the properties of $\subinaA_{B_1}(\caH_C)$
by virtue of  random matrix theory \cite{mehta2004random}. 
Suppose a pure state $\ket{\psi}$ is sampled uniformly from $\caH_C$; then we can view $\Delta\psi_{B_1}:=\psi_{B_1}-\Pi_{C}^{(B_1)}$ as a random matrix with expectation value $\bbE_{\ket{\psi}\sim \mu(\caH_C)}\left[ \Delta\psi_{B_1} \right]=0$. 
According to a similar argument presented in \cite{Zhu2011_quantum}, the expectation value of the trace norm of $\Delta\psi_{B_1}$ can be approximated as follows:
\begin{align}
	\subinaA_{B_1}(\caH_C)
	&=\underset{\ket{\psi}\sim \mu(\caH_C)}{\bbE}[\|\Delta\psi_{B_1}\|_1] \notag \\
	&\approx \frac{4}{3\pi}\sqrt{d_1 \Tr[(\Delta\psi_{B_1})^2]}. 
\end{align}
This equation in turn implies that 
\begin{align}
	\underset{\caH_C}{\bbE}  [\subinaA_{B_1}(\caH_C)] 
	&\lessapprox  \frac{4}{3\pi}\sqrt{ d_1}\left( \underset{\caH_C}{\bbE} \Tr[(\Delta\psi_{B_1})^2]\right)^{1/2} \notag\\
	&=\frac{4}{3\pi} s.
\end{align}
Note that this approximate upper bound has the same order of magnitude as the parameter $s$ featured in \thref{theo:trace_projector_bipartite}.

Although the mathematical meanings of \thsref{theo:trace_identity_bipartite} and \ref{theo:trace_projector_bipartite} are similar, their physical interpretations are quite different. From \thref{theo:trace_projector_bipartite} itself, we cannot conclude that a random subspace contains only near-maximally entangled states with high probability, since $\Pi_C^{(B_1)}$ is not a completely mixed state.
Nevertheless, the deviation of  $\Pi_{C}^{(B_1)}$ from $\widetilde{\md{1}}_{B_1}$ is usual very small as shown in the following proposition.

\begin{proposition}
\label{prop:trace_projector_identity_bipartite}
Suppose $\caH_C$ is a random subspace of $\caH_{B_1B_2}$ of dimension $d_C$. Then, for $\alpha>0$,
\begin{align}
\Pr\left\{ D( \Pi_{C}^{(B_1)}, \widetilde{\md{1}}_{B_1}) > t + \alpha\right\} 
\leq \exp\left(-\frac{d_{12}\alpha^2}{16}\right), \notag
\end{align}
where
\begin{align}
    t=\sqrt{\frac{d_{12}-d_C}{d_C}\frac{d_1^2-1}{d_{12}^2-1}}. 
\end{align}
\end{proposition}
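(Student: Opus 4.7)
The plan is to prove the statement in two steps: (i) bound the expectation $\bbE_{\caH_C}[D(\Pi_C^{(B_1)}, \widetilde{\md{1}}_{B_1})] \leq t$, then (ii) upgrade to the exponential tail by Levy-type concentration of measure on $\rmU(d_{12})$. Because $\Pi_C^{(B_1)} - \widetilde{\md{1}}_{B_1}$ acts on a $d_1$-dimensional space, the standard estimate $\|X\|_1 \leq \sqrt{d_1}\,\|X\|_2$ together with Jensen's inequality reduces step (i) to computing a single second moment,
\begin{align*}
\bbE_{\caH_C}[D(\Pi_C^{(B_1)}, \widetilde{\md{1}}_{B_1})] \leq \sqrt{d_1\,\bbE_{\caH_C}[\Tr((\Pi_C^{(B_1)})^2)] - 1},
\end{align*}
where I have used $\Tr[\Pi_C^{(B_1)} \widetilde{\md{1}}_{B_1}] = 1/d_1$ to simplify the cross term.

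For this Haar average I would apply the swap trick combined with the standard second-moment formula for $\bbE_U[U^{\otimes 2} (P_0 \otimes P_0) U^{\dagger \otimes 2}]$, where $P_0$ is a fixed rank-$d_C$ projector on $\caH_{B_1 B_2}$ and $U P_0 U^\dagger$ plays the role of the projector onto $\caH_C$. The expectation is a linear combination $a\,\mathds{1} + b\,\swap$ of the identity and the swap on $(\caH_{B_1B_2})^{\otimes 2}$, with $a, b$ determined by $\Tr P_0 = d_C$ and $\Tr P_0^2 = d_C$. Contracting against the partial swap $\swap_{B_1 B_1'} \otimes \mathds{1}_{B_2 B_2'}$ and subtracting $1/d_1$ then collapses the expression to $(d_{12} - d_C)(d_1^2 - 1)/[d_1 d_C(d_{12}^2 - 1)] = t^2/d_1$, producing the desired bound $\bbE_{\caH_C}[D] \leq t$.

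For step (ii), I regard $f(U) := D(\Tr_{B_2}[U P_0 U^\dagger]/d_C, \widetilde{\md{1}}_{B_1})$ as a function on $\rmU(d_{12})$ with Haar measure. The reverse triangle inequality and contractivity of the trace norm under partial trace give $|f(U)-f(V)| \leq \|UP_0U^\dagger - VP_0V^\dagger\|_1/d_C$; combining the bound $\|UP_0U^\dagger - VP_0V^\dagger\|_2 \leq 2\|U-V\|_2$ (from splitting $UP_0U^\dagger - VP_0V^\dagger = UP_0(U-V)^\dagger + (U-V)P_0V^\dagger$ and using $\|AB\|_2 \leq \|A\|_\infty\|B\|_2$) with the rank bound $\rk(UP_0U^\dagger - VP_0V^\dagger) \leq 2d_C$ yields a Lipschitz constant for $f$ with respect to the Hilbert-Schmidt metric on $\rmU(d_{12})$. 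A one-sided Levy-type concentration inequality then delivers the stated exponential tail bound. The main obstacle I anticipate is step (i): the swap-trick algebra must collapse cleanly to the compact expression for $t^2$, and obtaining exactly the constant $1/16$ in the exponent of step (ii) will likely require either using concentration directly on the complex Grassmannian $G(d_C, d_{12})$ or carefully tracking factors of $d_C \geq 1$ in the Lipschitz bound.
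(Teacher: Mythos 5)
Your proposal is correct and follows essentially the same route as the paper: bound $\bbE_{\caH_C}[D(\Pi_C^{(B_1)},\widetilde{\md{1}}_{B_1})]$ by $\sqrt{d_1}$ times the root of the Haar second moment (which indeed collapses to $t^2/d_1$), then view the quantity as a Lipschitz function of $U$ on the unitary group and apply the concentration inequality of \lref{lem:concentration_SU}. The only cosmetic difference is your rank-based Lipschitz estimate $\kappa=2\sqrt{2}/\sqrt{d_C}$, which already suffices for $d_C\ge 2$; the paper instead bounds $\|U\Pi_{C_0}U^\dagger-W\Pi_{C_0}W^\dagger\|_1\le 2\|U-W\|_2$ directly via H\"older with the normalized projector ($\|\Pi_{C_0}\|_1=1$), giving $\kappa=2$ uniformly and hence the stated constant $1/16$ without any case distinction.
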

Here the parameter $t$ is also an upper bound for the expectation value, that is, 
$\bbE_{\mathcal{H}_C}[D(\Pi_{C}^{(B_1)}, \widetilde{\md{1}}_{B_1})] \leq t$. When $1\ll d_1 \ll d_2$, we have the approximation
\begin{align}
    t\approx
    \sqrt{\frac{d_1}{d_2}\left(\frac{1}{d_C}-\frac{1}{d_{12}}\right)} \le \sqrt{\frac{d_1}{d_2d_C}},
\end{align}
which is very small. So the average trace distance between 
$\Pi_{C}^{(B_1)}$ and $\widetilde{\md{1}}_{B_1}$ is very small, and the probability of large deviation is exponentially small.

\section{Random approximate quantum information masking in multipartite systems}
\label{sec:aqim_multipartite}

Although a random isometry is in general not a good approximate masker in the bipartite setting,  the situation 
is completely different in the multipartite setting, as we shall see shortly. In this section, we first generalize the concentration results in \sref{sec:concentration_bipartite_identity_projector} from the bipartite scenario to the multipartite scenario and then discuss some significant implications.

First, it is straightforward to generalize \thref{theo:trace_identity_bipartite} to the multipartite setting, in which the composite system consists of $m$ parties with the Hilbert space $\caH_{B_1 \cdots B_m}$. 
To avoid cumbersome expressions, without loss of generality, here we assume that all subsystems have the same dimension, i.e., $d_i = d\ge 2$ for $1 \leq i \leq m$.

\begin{theorem}
\label{theo:subpsace_identity_multipartite}Suppose $\caH_C$ is a random subspace of the Hilbert space $\caH_{B_1\cdots B_m}$ of dimension $d_C$. Then, for any $\alpha>0$, the maximum subsystem inaccuracy $\subinaMID(\caH_C,k)$ satisfies
   \begin{align}
   \label{Eq:prob_multipartite_idenity}
       \Pr&\left\{\subinaMID(\caH_C,k)
     \le d^{k-m/2}+ \alpha \right\} \notag\\
       &\ge 1- 2\binom{m}{k}\left(\frac{10}{\alpha}\right)^{2d_C}\exp\left(-\frac{d^m\alpha^2}{72\pi^3\ln2} \right).
   \end{align}
\end{theorem}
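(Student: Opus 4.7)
The plan is to reduce the multipartite bound to the bipartite concentration result \thref{theo:trace_identity_bipartite} by a union bound over all size-$k$ subsystems. The crucial observation is that the exponent in \thref{theo:trace_identity_bipartite} depends only on the \emph{total} dimension $d_{12}$ of the ambient bipartite space, not on the individual factor dimensions $d_1$ and $d_2$. Consequently, the concentration rate will be the same regardless of which size-$k$ subsystem we single out as the ``small'' side of a bipartition, which is precisely what makes a clean union bound available.

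First, fix any subsystem $S\subset\{B_1,\ldots,B_m\}$ with $|S|=k$ and view $\caH_{B_1\cdots B_m}$ as the bipartite Hilbert space $\caH_S\otimes\caH_{S^c}$ of total dimension $d^m$, with $d_S=d^k$ and $d_{S^c}=d^{m-k}$. Applying \thref{theo:trace_identity_bipartite} under the identifications $B_1\mapsto S$ and $B_2\mapsto S^c$ yields
\begin{equation*}
\Pr\bigl\{\subinaMID_S(\caH_C)>r_S+\alpha\bigr\}\le 2(10/\alpha)^{2d_C}\exp\bigl(-d^m\alpha^2/(72\pi^3\ln 2)\bigr),
\end{equation*}
where $r_S=\sqrt{(d^{2k}-1)/(d^m+1)}$. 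A short manipulation gives $(d^{2k}-1)/(d^m+1)\le d^{2k}/d^m$, so $r_S\le d^{k-m/2}$, a bound that is uniform in $S$ because all size-$k$ subsystems have the same dimension under the assumption $d_i=d$.

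Finally, since $\subinaMID(\caH_C,k)=\max_{|S|=k}\subinaMID_S(\caH_C)$ by definition, the event $\bigl\{\subinaMID(\caH_C,k)>d^{k-m/2}+\alpha\bigr\}$ is the union of the $\binom{m}{k}$ events $\bigl\{\subinaMID_S(\caH_C)>d^{k-m/2}+\alpha\bigr\}$ ranging over $|S|=k$. A union bound combined with the previous inequality and passage to the complementary event produces \eref{Eq:prob_multipartite_idenity}. I do not anticipate any substantive obstacle: the bipartite theorem already packages the technical content (an $\varepsilon$-net for the unit sphere of $\caH_C$ together with a Levy-type concentration inequality on the Haar-random isometry), and its exponential rate scales with the full dimension $d^m$, so it does not deteriorate when $S$ has size $k>1$. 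The only items requiring care are the uniform upper bound on $r_S$ and the combinatorial counting in the union step; the same strategy carries over to nonuniform local dimensions at the cost of a more cumbersome expression for $r_S$ and the ambient dimension.
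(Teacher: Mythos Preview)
Your proposal is correct and mirrors the paper's own argument: apply \thref{theo:trace_identity_bipartite} to each bipartition $S\mid S^c$ with $|S|=k$, bound $r_S\le d^{k-m/2}$, and take a union bound over the $\binom{m}{k}$ cuts. The paper's proof is stated more tersely but is substantively identical.
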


\begin{proof}
    This result follows immediately from \thref{theo:trace_identity_bipartite} and the union bound on all $\binom{m}{k}$ bipartite cuts, which split the whole system into two subsystems of $k$ parties and  $m-k$ parties, respectively.
\end{proof}

By virtue of the union bound employed above we can also 
generalize \thref{theo:trace_projector_bipartite} to the multipartite setting.
\begin{theorem}
\label{theo:multipartite_trace_projector}Suppose $\caH_C$ is a random subspace of the Hilbert space $\caH_{B_1 \cdots B_m}$ of dimension $d_C$. Then, for any $\alpha > 0$, the maximum subsystem inaccuracy $\subinaM(\caH_C,k)$ satisfies
\begin{align}
\Pr&\left\{ \subinaM(\caH_C,k) \le u + \alpha \right\} \notag\\
&\ge 1- \binom{m}{k} \left( \frac{10}{\alpha} \right)^{2d_C} \exp\left( -\frac{d^m \alpha^2}{256} \right),
\label{eq:multipartite_trace_bound}
\end{align}
where
\begin{align}
u = \sqrt{\frac{d_C - 1}{d_C} \cdot \frac{d^m (d^{2k} - 1)}{d^{2m} - 1}}.
\end{align}
\end{theorem}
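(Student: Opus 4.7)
The proof plan is to mimic exactly the reduction used in the proof of \thref{theo:subpsace_identity_multipartite}: apply the bipartite concentration result of \thref{theo:trace_projector_bipartite} to each bipartite cut of the $m$-partite system that isolates a subset $S$ of $k$ parties, and then union-bound over all such cuts.

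First, I would fix a subset $S\subset\{1,\ldots,m\}$ with $|S|=k$ and regard $\caH_{B_1\cdots B_m}$ as the bipartite Hilbert space $\caH_S\otimes\caH_{S^c}$, so that the two factors have dimensions $d^k$ and $d^{m-k}$ respectively and total dimension $d^m$. The quantity $\subinaM_S(\caH_C)$ defined in \eref{Eq:subsystem_masking_inaccuracy} is then precisely the one bounded by \thref{theo:trace_projector_bipartite} with the substitutions $d_1\mapsto d^k$, $d_2\mapsto d^{m-k}$, $d_{12}\mapsto d^m$. Plugging these into the formula \eref{Eq:projector_expectation_upper_bound_beta} gives
\begin{align*}
s \;=\; \sqrt{\frac{d_C-1}{d_C}\cdot\frac{d^m(d^{2k}-1)}{d^{2m}-1}} \;=\; u,
\end{align*}
so \thref{theo:trace_projector_bipartite} yields, for each fixed $S$ of size $k$,
\begin{align*}
\Pr\bigl\{\subinaM_S(\caH_C)>u+\alpha\bigr\} \;\le\; \Bigl(\tfrac{10}{\alpha}\Bigr)^{2d_C}\exp\!\Bigl(-\tfrac{d^m\alpha^2}{256}\Bigr).
\end{align*}
Note that the distribution of $\caH_C$ is invariant under any unitary, in particular under permutations of the factors, so the same bound holds uniformly for every $S$.

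Second, I would recall the definition $\subinaM(\caH_C,k)=\max_{S:|S|=k}\subinaM_S(\caH_C)$ from \eref{Eq:total_masking_inaccuracy} and observe that the event $\{\subinaM(\caH_C,k)>u+\alpha\}$ is the union of the $\binom{m}{k}$ events $\{\subinaM_S(\caH_C)>u+\alpha\}$ over all $k$-subsets $S$. A standard union bound then gives
\begin{align*}
\Pr\bigl\{\subinaM(\caH_C,k)>u+\alpha\bigr\} \;\le\; \binom{m}{k}\Bigl(\tfrac{10}{\alpha}\Bigr)^{2d_C}\exp\!\Bigl(-\tfrac{d^m\alpha^2}{256}\Bigr),
\end{align*}
and taking complements yields the claimed inequality \eqref{eq:multipartite_trace_bound}.

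Since both ingredients, namely the bipartite concentration result and the union bound over bipartite cuts, are already spelled out in the excerpt, there is no genuine obstacle here; the only point requiring a moment of care is to verify that the parameter $s$ from \thref{theo:trace_projector_bipartite}, specialized to $d_1=d^k$ and $d_{12}=d^m$, reproduces exactly the expression for $u$ stated in the theorem. Everything else is a direct transcription of the proof strategy already employed for \thref{theo:subpsace_identity_multipartite}.
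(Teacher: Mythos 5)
Your proposal is correct and coincides with the paper's argument: the paper likewise obtains \thref{theo:multipartite_trace_projector} by applying the bipartite result \thref{theo:trace_projector_bipartite} to each of the $\binom{m}{k}$ bipartite cuts (with $d_1=d^k$, $d_{12}=d^m$, so that $s$ specializes to $u$) and then taking a union bound, exactly as in the proof of \thref{theo:subpsace_identity_multipartite}.
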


Next, we consider  the trace distance between $\Pi_{C}^{(S)}$ and $\widetilde{\md{1}}_{S}$ and generalize \pref{prop:trace_projector_identity_bipartite} to the multipartite setting following a similar proof as presented above.
\begin{proposition}
\label{prop:trace_projector_identity_multipartite}
Suppose $\caH_C$ is a random subspace of the Hilbert space $\caH_{B_1 \cdots B_m}$ of dimension $d_C$. Then, for any $\alpha > 0$, 
\begin{align}
\Pr&\left\{\max_{S:|S|=k} D\bigl( \Pi_{C}^{(S)}, \widetilde{\md{1}}_{S}\bigr) \le t + \alpha\right\} \notag\\
&\ge 1- \binom{m}{k} \exp\left(-\frac{d^m\alpha^2}{16}\right), 
\end{align}
where
\begin{align}
    t=\sqrt{\frac{d^m-d_C}{d_C}\frac{d^{2k}-1}{d^{2m}-1}}. 
\end{align}
\end{proposition}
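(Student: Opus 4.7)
The plan is to follow exactly the same pattern that \thref{theo:subpsace_identity_multipartite} and \thref{theo:multipartite_trace_projector} use to lift their bipartite counterparts: apply the bipartite estimate to every bipartition that isolates $k$ chosen parties, then take a union bound over the $\binom{m}{k}$ subsystems of size $k$.

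First, I would fix an arbitrary subsystem $S$ with $|S|=k$ and regard the cut $(S, S^c)$ as a bipartition of $\caH_{B_1\cdots B_m}$ into two factors of dimensions $d_S = d^k$ and $d_{S^c}=d^{m-k}$, whose product is $d^m$. Since $\caH_C$ is Haar-distributed on subspaces of $\caH_{B_1\cdots B_m}$, it is also a Haar-random subspace of this bipartite Hilbert space. Therefore \pref{prop:trace_projector_identity_bipartite}, applied with $d_1\leftarrow d^k$ and $d_{12}\leftarrow d^m$, directly yields
\begin{align}
\Pr\left\{ D\bigl(\Pi_{C}^{(S)},\widetilde{\md{1}}_{S}\bigr) > t+\alpha\right\}\le \exp\left(-\frac{d^m\alpha^2}{16}\right),
\end{align}
with exactly the $t$ stated in the proposition, namely $t=\sqrt{(d^m-d_C)(d^{2k}-1)/[d_C(d^{2m}-1)]}$. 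This step is the bulk of the content; it reuses the bipartite proof verbatim once one notices that the tail bound depends only on the total dimension $d^m$ and on $d_1=d^k$, and that the permutation invariance of the Haar measure means each cut inherits the same distribution.

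Second, I would close the argument with the union bound over the $\binom{m}{k}$ subsystems of size $k$: the event $\{\max_{|S|=k} D(\Pi_{C}^{(S)},\widetilde{\md{1}}_S) > t+\alpha\}$ is the union, over such $S$, of the individual tail events bounded above, so
\begin{align}
\Pr\left\{\max_{S:|S|=k} D\bigl(\Pi_{C}^{(S)},\widetilde{\md{1}}_S\bigr) > t+\alpha\right\}\le \binom{m}{k}\exp\left(-\frac{d^m\alpha^2}{16}\right),
\end{align}
and taking complements gives the stated bound. No new probabilistic tools are needed at this stage.

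I do not expect a real obstacle: all the technical work, including the concentration inequality (a Lipschitz-type estimate for $\caH_C\mapsto D(\Pi_{C}^{(S)},\widetilde{\md{1}}_S)$ and the computation of the expectation upper bound $t$), has already been carried out in the bipartite proof in \aref{appendix:concentration_bipartite_identity_projector}. The only minor point to verify is that substituting $d_1=d^k$ and $d_{12}=d^m$ into the bipartite formula for $t$ reproduces the multipartite expression; this is a direct algebraic check. If the subsystems were allowed to have unequal dimensions, the same argument would go through with $t_S=\sqrt{(d^m-d_C)(d_S^2-1)/[d_C(d^{2m}-1)]}$ inside the maximum, and the uniform-dimension assumption is used only to obtain a single clean $t$.
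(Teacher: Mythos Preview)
Your proposal is correct and matches the paper's own approach exactly: the paper states that this proposition follows from \pref{prop:trace_projector_identity_bipartite} by the same union-bound-over-bipartitions argument used for \thref{theo:subpsace_identity_multipartite} and \thref{theo:multipartite_trace_projector}, which is precisely what you outline.
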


If  $d^{m/2-k}\gg 1$, then 
$t\le d^{k-m/2}/\sqrt{d_C}$ is very small, so  the trace distance $D\bigl(\Pi_{C}^{(S)}, \widetilde{\md{1}}_{S}\bigr)$ for a random subspace $\caH_C$ is very small except for an exponentially small probability. Accordingly, the predictions of \thsref{theo:subpsace_identity_multipartite} and \ref{theo:multipartite_trace_projector} are quite similar. Here we shall discuss the consequences of \thref{theo:subpsace_identity_multipartite}.

First, we analyze the asymptotic consequences of \thref{theo:subpsace_identity_multipartite}.
Requiring the probability in \eref{Eq:prob_multipartite_idenity} being positive and using the inequality $\binom{m}{k}\le 2^{m H(k/m)}\le 2^m$, where $H(p)=-p\log _2p-(1-p)\log_2(1-p)$ is the binary entropy function, we get an upper bound for $d_C$ as follows,
\begin{align}
    d_C < \frac{1}{2\ln(10/\alpha)} \bigg\{\frac{d^m\alpha^2}{72\pi^3\ln2} -(\ln2) (m +1)\bigg\} .  
    \label{eq:identity_subspace_dimension_inequality}
\end{align}
Further, if $1\ll k\ll m$ and $\alpha=\caO(d^{k-m/2})$, then the maximum subsystem inaccuracy $\subinaMID(\caH_C,k)$ satisfies  $\subinaMID(\caH_C,k)\le 2d^{k-m/2}$ except for an exponentially small probability. 
If instead $k<m/2$ and $m\gg 1$, then $\subinaMID(\caH_C,k)\le 2d^{k/2-m/4}$ except for an exponentially small probability.

In conjunction with the definitions in \sref{sec:RQIMdef},
we can derive the following main results on  random AQIM in multipartite systems. 

\begin{corollary}
\label{cor:aqim_identity_multipartite}
Suppose $V:\caH_A\rightarrow \caH_{B_1\cdots B_m}$ is a random isometry with the image subspace $\caH_C$.  Then $V$ is a good approximate $k$-uniform masker except for an exponentially small probability if one of the following two conditions holds: \\
(1)  $d_C=\caO(d^{2k})$ and $1\ll k\ll m$; in this case the maximum subsystem inaccuracy is $2d^{k-m/2}$. \\
(2)  $d_C=\caO(d^{k+m/2})$, $m\gg 1$, and $k<m/2$; in this case the maximum subsystem inaccuracy is $2d^{k/2-m/4}$.
\end{corollary}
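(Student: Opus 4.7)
The plan is to derive Corollary~1 as a direct consequence of \Thref{theo:subpsace_identity_multipartite}, after choosing the deviation parameter $\alpha$ judiciously in each of the two regimes and then translating the resulting concentration bound on $\subinaMID(\caH_C,k)$ into the desired statement about approximate $k$-uniform masking via \Pref{prop:equivalence_aqim_inaccuracy_variation_maximal}.

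First I would invoke \Thref{theo:subpsace_identity_multipartite} verbatim: the bad event $\{\subinaMID(\caH_C,k) > d^{k-m/2} + \alpha\}$ carries probability at most $2\binom{m}{k}(10/\alpha)^{2d_C}\exp\!\bigl(-d^m\alpha^2/(72\pi^3\ln 2)\bigr)$. To promote this from ``positive'' to ``exponentially small'', I take logarithms and insist that the exponential suppression $d^m\alpha^2/(72\pi^3\ln 2)$ dominate the polynomial covering factor $2d_C\ln(10/\alpha)$ plus the combinatorial term $\ln\binom{m}{k}\le m\ln 2$. This is precisely the scaling condition \eqref{eq:identity_subspace_dimension_inequality}, with a little extra slack baked in to guarantee exponential rather than merely positive decay.

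For case (1), I set $\alpha = d^{k-m/2}$ so that the target inaccuracy $d^{k-m/2}+\alpha = 2d^{k-m/2}$ matches the claim. With this choice $d^m\alpha^2 = d^{2k}$ and $\ln(10/\alpha) = \caO((m/2-k)\ln d)$, so the hypothesis $d_C = \caO(d^{2k})$---read as $d_C \le c\, d^{2k}/(m\ln d)$ for a sufficiently small implicit constant $c$---makes the exponent $d^{2k}$ outrun all prefactors. The regime $1\ll k\ll m$ is what guarantees that such a $c$ is nontrivially available and simultaneously that $\alpha$ is small. For case (2), I instead set $\alpha = d^{k/2-m/4}$, giving $d^m\alpha^2 = d^{k+m/2}$ and the weaker claimed inaccuracy $2d^{k/2-m/4}$; the hypothesis $d_C = \caO(d^{k+m/2})$ with $m\gg 1$ and $k<m/2$ then plays exactly the analogous role.

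Finally, \Pref{prop:equivalence_aqim_inaccuracy_variation_maximal} gives $\subvarM(\caH_C,k) \le 2\subinaMID(\caH_C,k)$, so the exponentially-likely bound on $\subinaMID$ immediately upgrades (up to a factor of two that is absorbed into the ``good approximate masker'' terminology of \sref{sec:RQIMdef}) to the same-order bound on the maximum subsystem variation that defines an approximate $k$-uniform masker. The only genuine bookkeeping issue---not a conceptual obstacle---is tracking how the implicit constants hidden in the $\caO(\cdot)$ notation depend on $m$, $k$, and $\ln d$, so that neither the combinatorial factor $\binom{m}{k}$ nor the $\epsilon$-net factor $(10/\alpha)^{2d_C}$ can overtake the concentration exponent. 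This is routine, and is essentially already carried out in the paragraph preceding the corollary.
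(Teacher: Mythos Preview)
Your proposal is correct and essentially identical to the paper's own argument: the paper also derives the corollary directly from \Thref{theo:subpsace_identity_multipartite} by choosing $\alpha=d^{k-m/2}$ in case~(1) and $\alpha=d^{k/2-m/4}$ in case~(2), then reading off the constraint \eqref{eq:identity_subspace_dimension_inequality} on $d_C$ (using $\binom{m}{k}\le 2^m$) and invoking the definitions of \sref{sec:RQIMdef}. Your additional explicit mention of \Pref{prop:equivalence_aqim_inaccuracy_variation_maximal} to pass from $\subinaMID$ to $\subvarM$ is a welcome clarification that the paper leaves implicit.
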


Second, we analyze the non-asymptotic behaviors of \thref{theo:subpsace_identity_multipartite}, as detailed in \aref{app:consequences_multipartite}, by addressing the following question: Given a fixed subspace dimension $d_C$ and a fixed masking inaccuracy $\subinaMID=d^{k-m/2}+\alpha$, what conditions must $k$ and $m$ satisfy? 
The following corollary summarizes the main results in three cases depending on the relative magnitudes of $\alpha$ and $d^{k-m/2}$. 

\begin{corollary}
\label{cor:aqim_multipartite_mask_rate}
Suppose $V:\caH_A\rightarrow \caH_{B_1\cdots B_m}$ is a random isometry with the image subspace $\caH_C$. Then $V$ is an approximate  $k$-uniform masker with maximum subsystem inaccuracy $\subinaMID(\caH_C,k) = \alpha+d^{k-m/2}$ if $m$ is larger than the threshold $m^*$ specified below in three cases separately. \\
(1) If $\alpha$ is a fixed small constant but larger than $d^{k-m/2}$, then 
\begin{align}
m^* \propto \frac{\ln d_C}{\ln d}.
\end{align}
(2) If $\alpha=d^{k/2-m/4}$ is larger than $d^{k-m/2}$ (depending on $k,m$), then 
\begin{align}
m^* \propto \frac{\ln d_C}{(\zeta +1/2)\ln d},
\end{align}
where $\zeta = k/m$.\\
(3) If $\alpha = d^{k-m/2}$ is smaller than $d^{k-m/2}$, then
\begin{align}
m^* \propto \frac{\ln d_C}{2\zeta \ln d} .
\end{align}
\end{corollary}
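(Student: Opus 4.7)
The plan is to derive the corollary directly from Theorem \ref{theo:subpsace_identity_multipartite} by demanding that the probability on the right-hand side of \eref{Eq:prob_multipartite_idenity} be positive (or, with only cosmetic changes, exponentially close to $1$), and then to extract the threshold $m^*$ in each of the three regimes of $\alpha$. Using the crude estimate $\binom{m}{k}\le 2^m$, the probability is positive as long as
\begin{align}
\frac{d^m\alpha^2}{72\pi^3\ln 2} > (m+1)\ln 2 + 2 d_C \ln(10/\alpha),
\end{align}
which is essentially \eref{eq:identity_subspace_dimension_inequality}, rearranged as a lower bound on $m$. Throughout, I would keep only the leading behavior in $m$, treating subleading logarithms and constants implicitly, since the statement concerns only the scaling of $m^*$.

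Case 1 is simplest: when $\alpha$ is a fixed positive constant, $\ln(10/\alpha)$ is a constant and the left-hand side grows like $d^m$ while the right-hand side is $O(m)+O(d_C)$, so the binding requirement is $d^m\gtrsim d_C$, yielding $m^*\propto \ln d_C/\ln d$. In Case 2, $\alpha=d^{k/2-m/4}$, so $\alpha^2 d^m=d^{m/2+k}$ while $\ln(10/\alpha)\sim (m/4)\ln d$ for large $m$; one then needs $d^{m/2+k}\gtrsim d_C\,m\,\ln d$, which after taking logarithms and writing $k=\zeta m$ reduces to $m(\zeta+1/2)\ln d\gtrsim \ln d_C$, giving the stated threshold. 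In Case 3, $\alpha=d^{k-m/2}$, so $\alpha^2 d^m=d^{2k}$ and $\ln(10/\alpha)\sim (m/2-k)\ln d$; the requirement becomes $d^{2k}\gtrsim d_C\,m\,\ln d$, which with $k=\zeta m$ gives $2\zeta m\ln d\gtrsim \ln d_C$, and hence $m^*\propto \ln d_C/(2\zeta\ln d)$.

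The final step is to invoke the definition of a $\delta$-approximate $k$-uniform masker (see \sref{sec:RQIMdef}) to conclude that the resulting bound $\subinaMID(\caH_C,k)\le\alpha+d^{k-m/2}$ certifies $V$ as a $\delta$-approximate masker with $\delta=\alpha+d^{k-m/2}$ in the sense based on $\subinaMID$. The main obstacle is mostly careful bookkeeping: one has to verify that the $(m+1)\ln 2$ term and the $\ln(10/\alpha)$ prefactor on the right-hand side of the master inequality never override the dominant exponential/logarithmic balance just described, so that the claimed asymptotic identifications of $m^*$ are genuine and not merely formal. Once this verification is done case by case and the relevant constants are absorbed into the proportionality symbol, the corollary follows by direct substitution into \thref{theo:subpsace_identity_multipartite}.
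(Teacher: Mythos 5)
Your proposal follows essentially the same route as the paper's own analysis in Appendix~\ref{app:consequences_multipartite}: demand that the probability in \eref{Eq:prob_multipartite_idenity} be positive, control $\binom{m}{k}$ by an exponential bound (the paper uses $2^{mH(k/m)}\le 2^m$, you use $2^m$ directly), substitute the three choices of $\alpha$, and extract the leading balance $d^m\alpha^2$ versus $d_C\ln(10/\alpha)$ to get $m^*\propto \ln d_C/\ln d$, $\ln d_C/((\zeta+1/2)\ln d)$, and $\ln d_C/(2\zeta\ln d)$, respectively. Apart from minor bookkeeping (e.g.\ in Case~2 the paper keeps $\ln(10/\alpha)\sim m(1/4-\zeta/2)\ln d$ rather than your $(m/4)\ln d$, which does not affect the scaling), this is the same argument, so the proposal is correct.
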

More details can be found in \aref{app:consequences_multipartite}. The results above indicate that the minimum number of parties, $m^*$, is proportional to the number of qubits $l$ when $d_C=2^l$, and the proportion coefficient depends on the parameter $\alpha$ (which is related to the masking inaccuracy) and the ratio $\zeta$. Notably, the resources required to mask $l$ qubits scale linearly with $l$ rather than exponentially.

\section{Implications}
\label{sec:implications}

\subsection{Multipartite entanglement and approximate $k$-uniform states}
\label{subsec:entanglement}

A $k$-uniform masker can mask the quantum information contained in a set of states into the quantum correlations of a multipartite system completely and thus has a close connection with multipartite entanglement. Not surprisingly, our results in the previous section have significant implications for multipartite entanglement, including  $k$-uniform states in particular \cite{Scott_2004_multipartite}. As the multipartite analogue of the maximally entangled states in a bipartite system,  $k$-uniform states encompass locally maximally entangled (LME) states \cite{Bryan2019locallymaximally} and  the absolutely maximally entangled (AME) states \cite{helwig2013absolutelymaximallyentangledstates}. For simplicity, here we  assume again that all subsystems have the same dimension $d$.

Our result on approximate $k$-uniform states is summarized in the following corollary.
\begin{corollary}
\label{cor:appr_k_uniform_state}
Suppose $\ket{\psi}$ is a Haar random pure state in $\caH_{B_1\cdots B_m}$ and $k\leq m$ is a positive integer. Then, for $\alpha\ge0$, 
   \begin{align}
   \label{Eq:appr_k_uniform_state}
       \Pr&\left\{\subinaID(\psi,k) \le \alpha+d^{k-m/2} \right\} \notag\\
       &\ge 1-2\binom{m}{k} \exp\left(-\frac{d^m\alpha^2}{18\pi^3\ln2} \right).
   \end{align}
\end{corollary}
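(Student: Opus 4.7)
The plan is to prove this corollary by applying Lévy's concentration inequality (measure concentration on the unit sphere) directly to each function $\psi \mapsto D(\psi_S,\widetilde{\md{1}}_S)$ individually, and then closing up with a union bound over subsystems. The corollary is essentially the $d_C=1$ specialization of \thref{theo:subpsace_identity_multipartite}, but one cannot simply plug $d_C=1$ into that theorem and hope to save the factor $4$ in the exponential constant, because that theorem's proof invokes an $\epsilon$-net argument over the whole subspace $\caH_C$; for a single random state the $\epsilon$-net is unnecessary and the sharper constant $18\pi^3 \ln 2$ (rather than $72\pi^3\ln 2$) can be obtained.

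For a fixed subsystem $S$ of size $k$, I would set $f_S(\psi) := D(\psi_S,\widetilde{\md{1}}_S)$ and establish two ingredients. First, an expectation bound of the form
\begin{equation*}
\underset{|\psi\>\sim\mu(\caH_{B_1\cdots B_m})}{\bbE}[f_S(\psi)] \;\le\; \sqrt{\frac{d^{2k}-1}{d^m+1}} \;\le\; d^{k-m/2},
\end{equation*}
which is derived along the same lines as the bipartite expectation bound in the proof of \thref{theo:trace_identity_bipartite}: via the Cauchy--Schwarz inequality $\|\psi_S - \widetilde{\md{1}}_S\|_1 \le \sqrt{d^k}\,\|\psi_S-\widetilde{\md{1}}_S\|_2$, followed by evaluation of $\bbE\Tr[\psi_S^2]$ using standard $2$-design integrals over the Haar measure. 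Second, a Lipschitz bound: since $D(\rho,\sigma)\le 2\|\,|\psi\>-|\phi\>\,\|_2$ whenever $\rho = \proj{\psi}$ and $\sigma = \proj{\phi}$, and the partial trace is contractive in trace norm, $f_S$ is Lipschitz with constant $L=2$ on the unit sphere $S^{2d^m-1}$.

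With these two ingredients in hand, Lévy's lemma in its sharp form (constant $2/(9\pi^3)$, after conversion) yields
\begin{equation*}
\Pr\bigl\{ f_S(\psi) > \bbE[f_S] + \alpha \bigr\} \;\le\; 2\exp\!\left(-\frac{d^m\alpha^2}{18\pi^3\ln 2}\right),
\end{equation*}
and combining with the expectation bound gives the same tail bound with $\bbE[f_S]$ replaced by $d^{k-m/2}$. Finally, applying the union bound over the $\binom{m}{k}$ subsystems of size $k$ produces
\begin{equation*}
\Pr\bigl\{\subinaID(\psi,k) > d^{k-m/2} + \alpha\bigr\} \le 2\binom{m}{k}\exp\!\left(-\frac{d^m\alpha^2}{18\pi^3\ln 2}\right),
\end{equation*}
which is exactly the complement of \eref{Eq:appr_k_uniform_state}.

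The main obstacle, and really the only delicate part, is tracking the constants carefully enough to arrive at $18\pi^3\ln 2$ rather than a looser constant. This requires using the sharp form of Lévy's lemma (with the $2/(9\pi^3)$ prefactor) and Lipschitz constant exactly $L=2$; a naïve Lipschitz estimate of $L=4$ would yield $72\pi^3\ln 2$, which matches \thref{theo:trace_identity_bipartite} but not the stated corollary. The expectation bound $d^{k-m/2}$ itself is standard and falls out of Haar $2$-design integration, so the work lies entirely in the concentration step.
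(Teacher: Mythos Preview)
Your proposal is correct and follows essentially the same approach as the paper: the paper's proof invokes \lref{lem:concentrate_of_identity_function} (which is precisely the single-state Levy bound with Lipschitz constant $2$ and expectation bound $\sqrt{(d_1^2-1)/(d_{12}+1)}$) and then applies the union bound over the $\binom{m}{k}$ bipartite cuts, exactly as you describe. One minor quibble: the constant $72\pi^3\ln 2$ in \thref{theo:trace_identity_bipartite} actually arises from the $\epsilon$-net step replacing $\alpha$ by $\alpha/2$, not from a looser Lipschitz constant, but this does not affect your argument.
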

\begin{proof}
    This result follows immediately from \lref{lem:concentrate_of_identity_function} and the union bound on all $\binom{m}{k}$ bipartite cuts, which split the whole system into two subsystems of $k$ parties and $m-k$ parties, respectively.
\end{proof}
Corollary \ref{cor:appr_k_uniform_state} has a similar interpretation as \thref{theo:subpsace_identity_multipartite}: if $m>2k\gg 1$ and $\alpha\le d^{k-m/2}$, then a Haar random pure state is a $2d^{k-m/2}$-approximate $k$-uniform state with high probability. However, as $k$ approaches $\lfloor m/2\rfloor$, the deviation of $|\psi\>$ from a $k$-uniform state gets larger and larger.

Next, we turn to the generalized Meyer-Wallach (GMW) entanglement measure \cite{Scott_2004_multipartite}, which is defined as follows:
\begin{align}
\caQ_k(\ket{\psi}):= \frac{d^{k}}{d^{k}-1} \left[ 1 - \frac{k!(m-k)!}{m!} \sum_{S}\Tr(\psi_S^2) \right].
\end{align}
According to \coref{cor:appr_k_uniform_state} and the following fact:
\begin{align}
   \Tr(\psi_S^2)-\frac{1}{d_S}= \left\|\psi_S-\widetilde{\md{1}}_{S} \right\|_2^2  \le \left\|\psi_S-\widetilde{\md{1}}_{S} \right\|_1^2 , \notag
\end{align}
the GMW entanglement measure of a Haar random pure state is bounded from below by
\begin{align}
    \caQ_k(\ket{\psi})\ge 1-(4d^{2k-m}+d^{-k})\approx 1-4d^{2k-m},
\end{align} 
with probability approaching $1$.
If $k\ll m/2$, then $\caQ_m(\ket{\psi})\approx 1$ with probability approaching $1$. This result corroborates the fact that  a Haar random pure state of a multipartite system is very likely to be a highly entangled state \cite{Hayden_2006, aubrun2017alice}.

\subsection{Approximate quantum error-correction codes}
\label{subsec:aqecc}

According to  \cite{Shi_2021_k_uniform}, a subspace of a multipartite Hilbert space is $k$-uniformly maskable if and only if it is a QECC with code distance $k+1$. Not surprisingly,  AQIM is also  related to  AQECC \cite{Yi2024_complexity}. In this subsection, we shall first review the relation between AQECC and AQIM and  then show that  random AQIM can induce AQECC under certain conditions.

Intuitively speaking, an AQECC is a QECC in which the encoded logical information cannot be recovered  perfectly after the system undergoes a noise channel. 
To define an AQECC rigorously, we need to find a metric to characterize its deviation from an exact QECC. A common choice is the \emph{completely bounded purified distance} \cite{Tomamichel2010_duality, Yi2024_complexity}. 
Given the definition of fidelity between two states $\rho$ and $\sigma$ as $F(\rho,\sigma):=\|\sqrt{\rho}\sqrt{\sigma}\|_1$, the worst-case entanglement fidelity between two channels $\mathcal{A}$ and $\mathcal{B}$ is defined as follows:
\begin{align}
    \mathcal{F}(\mathcal{A},\mathcal{B}):=\min_\rho F((\mathcal{A}\otimes\mathrm{id})(\rho),(\mathcal{B}\otimes\mathrm{id})(\rho)),
\end{align}
where $\id$ denotes the identity channel on an extended system and the optimization runs over all input states on the joint system.  Then the completely bounded purified distance between $\mathcal{A}$ and $\mathcal{B}$ is defined as\footnote{Note that the completely bounded purified distance, following the convention taken in \cite{Yi2024_complexity}, is a bit different from the Bures distance in \cite{Beny_2010_general}.}
\begin{align}
    \mathfrak{D}(\mathcal{A},\mathcal{B})=\sqrt{1-\mathcal{F}(\mathcal{A},\mathcal{B})^2}\,.
\end{align}
When the states in a logical Hilbert space $\caH_L$ are encoded into a subspace $\caH_C$ of a physical system $\caH_{B_1\cdots B_m}$ via an encoding map $\mathcal{E}$, we say that the code is \emph{$\widetilde{\eta}$-correctable} under a given noise channel $\mathcal{N}$ if there exists a decoding channel $\mathcal{D}$ such that
\begin{align}
    \mathfrak{D}(\mathcal{D}\circ\mathcal{N}\circ\mathcal{E},\id_L)\leq \widetilde{\eta}.
\end{align}

Corresponding to an AQECC $\caH_C$ with the encoding channel $\caE$ and noise channel $\caN$, the quantum error-correction (QEC) \emph{inaccuracy} is defined as 
\begin{align}
\widetilde{\eta}(\caE, \caN):=\min_{\caD} \mathfrak{D}(\mathcal{D}\circ\mathcal{N}\circ\mathcal{E},\id_L).
\end{align} 
Then, in order to characterize the intrinsic properties of an AQECC $\caH_C$, which are independent of the noise, the authors of \cite{Yi2024_complexity} defined the \emph{overall subsystem variance} as 
\begin{align}
\eta(\caH_C, k):= \max_{S: |S|\le k} \max_{|\psi\>\in \caH_C} \left\| \psi_S-\Pi_{C}^{(S)} \right\|_1,
\end{align}
When the noise channel $\caN$ is a replacement channel $\mathcal{R}_S$ of the form $\mathcal{R}_S(\psi)=\mathrm{Tr}_S(\psi)\otimes \gamma_{S}$ with $\gamma_{S}$ being a fixed state of the subsystem $S$, the authors of \cite{Yi2024_complexity} proved an inequality between the QEC inaccuracy $\widetilde{\eta}(\caE, \caN)$ and the subsystem variance $\eta(\caH_C, k)$ as shown in the following lemma. 
\begin{lemma}
\label{lem:two_sides_bound_of_AQEC_inaccuracy}
For the replacement channel $\caR_S$, the QEC inaccuracy $\widetilde{\eta}(\caE, \caR_S)$ and the subsystem variance $\eta(\caH_C, k)$ satisfy the following inequalities:
\begin{align}
    \frac{1}{4}\eta(\caH_C, k) \leq \widetilde{\eta}(\caE, \caR_S)
    \leq \sqrt{d_C\eta(\caH_C, k)},
\end{align}
where $d_C$ is the dimension of the code space $\caH_C$.
\end{lemma}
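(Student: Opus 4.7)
I would prove the two inequalities by very different routes: the lower bound turns approximate correctability into smallness of the subsystem variance (a converse), while the upper bound constructs a decoder achieving the claimed inaccuracy (achievability).

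For the lower bound $\eta(\caH_C,k)/4\leq\widetilde{\eta}(\caE,\caR_S)$, the plan is to argue via the data processing inequality and Fuchs--van de Graaf. Pick $S$ with $|S|\leq k$ and a codeword $|\psi\>\in\caH_C$ attaining $\|\psi_S-\Pi_C^{(S)}\|_1=\eta(\caH_C,k)$, and compare its fate under $\caR_S$ with that of the uniformly mixed code state $\Pi_C$. Since $\caR_S$ retains only the $S^c$-marginal, contractivity under any decoder $\caD$ yields $D(\caD\circ\caR_S(\psi),\,\caD\circ\caR_S(\Pi_C))\leq D(\psi_{S^c},\Pi_C^{(S^c)})$. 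On the other hand, the hypothesis $\mathfrak{D}(\caD\circ\caR_S\circ\caE,\id_L)\leq\widetilde{\eta}$ translated through Fuchs--van de Graaf forces the two decoded outputs to remain within $O(\widetilde{\eta})$ in trace distance from their respective intended logical states, so the triangle inequality lower bounds the same quantity by $D(\psi,\Pi_C)-O(\widetilde{\eta})$. Exploiting the Uhlmann-type duality between reduced trace distances on complementary subsystems of the purification of $\Pi_C$, together with the partial-trace monotonicity $D(\psi,\Pi_C)\geq\|\psi_S-\Pi_C^{(S)}\|_1=\eta$, closes the estimate with the constant $1/4$ after tracking the factors of two inherited from the convention $D=\|\cdot\|_1$.

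For the upper bound $\widetilde{\eta}\leq\sqrt{d_C\,\eta(\caH_C,k)}$, I would exhibit an explicit decoder: the Petz transpose recovery map of $\caR_S\circ\caE$ with respect to the maximally mixed code state $\Pi_C$, which realises the exact inverse when the Knill--Laflamme conditions hold. Via the B\'eny--Oreshkov framework for approximate quantum error correction, its infidelity degrades continuously with the approximate Knill--Laflamme defect, and for the replacement channel $\caR_S$ that defect is directly controlled by the maximum reduced-state deviation $\eta(\caH_C,k)$: the Kraus operators of $\caR_S$ are supported on $S$, so the defect reads off the spread of $\psi_S$ around $\Pi_C^{(S)}$ as $|\psi\>$ varies over the code. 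Converting an operator-norm defect to the trace-norm quantity $\eta$ costs a factor of $\sqrt{d_C}$ via Cauchy--Schwarz on the rank-$d_C$ projector $\Pi_C$, and the final conversion of entanglement infidelity to purified distance delivers the stated bound.

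The main obstacle I anticipate is the tight accounting in the upper bound: preserving the exact $\sqrt{d_C\,\eta}$ scaling through the chain of conversions between operator norm, trace norm, fidelity, entanglement fidelity, and completely bounded purified distance requires careful bookkeeping, and one must verify that the Petz recovery indeed achieves this optimal dimensional dependence rather than a less favourable higher power of $d_C$. The lower bound is, by contrast, essentially a data-processing argument and should follow routinely once the Uhlmann-type complementary-subsystem duality is invoked and the trace-distance--purified-distance conversions are pinned down.
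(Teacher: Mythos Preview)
The paper does not prove this lemma at all: it is quoted verbatim from the cited reference \cite{Yi2024_complexity}, so there is no ``paper's own proof'' to compare against.

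That said, your lower-bound sketch has a genuine gap. The data-processing step you describe controls the wrong marginal. Since $\caR_S(\rho)=\rho_{S^c}\otimes\gamma_S$, contractivity under $\caD$ bounds $D(\caD\caR_S\psi,\caD\caR_S\Pi_C)$ by $D(\psi_{S^c},\Pi_C^{(S^c)})$, and together with the triangle inequality you obtain only
\[
D(\psi_S,\Pi_C^{(S)})\leq D(\psi,\Pi_C)\leq D(\psi_{S^c},\Pi_C^{(S^c)})+O(\widetilde\eta).
\]
This relates the $S$-marginal distance to the $S^c$-marginal distance, not to $\widetilde\eta$ alone. The ``Uhlmann-type duality between reduced trace distances on complementary subsystems'' you invoke would let you identify these two quantities if both states were pure, but $\Pi_C$ is mixed, so no such duality holds and the chain does not close. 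The correct route is through the \emph{complementary} channel: the environment of $\caR_S$ receives exactly $\psi_S$, and the information--disturbance tradeoff (or equivalently Uhlmann's theorem applied to the purification of the Choi state, as in B\'eny--Oreshkov) shows that near-perfect recovery on the main output forces the environment output to be nearly constant on the code, which is precisely $\|\psi_S-\Pi_C^{(S)}\|_1$ small. Your upper-bound strategy via Petz/B\'eny--Oreshkov is the standard one and should go through; the $\sqrt{d_C}$ factor indeed arises from the rank of the reference purification when converting between the relevant fidelity notions.
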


Note that the definition of the subsystem variance $\eta(\caH_C, k)$ is almost the same as the maximum subsystem inaccuracy in \eref{Eq:total_masking_inaccuracy}. Since the trace distance is contractive,  we can derive the following equivalent definition:
\begin{align}
    \eta(\caH_C, k)
    =\max_{S: |S|= k} \max_{\ket{\psi}\in \caH_C} \left\| \psi_S-\Pi_{C}^{(S)} \right\|_1.
\end{align}
Now, it is clear that the maximum subsystem inaccuracy $\subinaM(\caH_C,k)$ is equal to the subsystem variance $\eta(\caH_C, k)$. By virtue of \lref{lem:two_sides_bound_of_AQEC_inaccuracy} and this observation we can derive the following  relation between AQECC and AQIM.

\begin{corollary}
\label{cor:aqim_aqecc}
    Suppose $\caH_C$ is a subspace of $\caH_{B_1\cdots B_m}$. If $\caH_C$ is $k$-uniformly maskable with inaccuracy $\subinaM(\caH_C,k)$, then it is also a $[\sqrt{d_C\subinaM(\caH_C,k)}\lsp]$-correctable AQECC being capable of correcting replacement errors on any subsystem with no more than $k$ parties. Conversely, if $\caH_C$ is a $\widetilde{\eta}$-correctable AQECC for replacement errors acting on any subsystem with no more than $k$ parties, then it is also $k$-uniformly maskable with inaccuracy $4\widetilde{\eta}$.
\end{corollary}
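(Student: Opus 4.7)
The plan is to derive \coref{cor:aqim_aqecc} directly from \lref{lem:two_sides_bound_of_AQEC_inaccuracy} once the two relevant figures of merit are identified. The first step will be to observe that the subsystem variance $\eta(\caH_C,k)$ and the maximum subsystem inaccuracy $\subinaM(\caH_C,k)$ are literally the same quantity. Indeed, $\Pi_S(\caH_C)=\Tr_{S^c}\Pi(\caH_C)=\Pi_C^{(S)}$ by definition, so both quantities are the maximum of $\|\psi_S-\Pi_C^{(S)}\|_1$ over all $|\psi\>\in\caH_C$ and all admissible $S$. The only nominal difference is that $\eta$ permits $|S|\le k$ whereas $\subinaM$ restricts to $|S|=k$, but this is immaterial because tracing out extra parties cannot increase the trace norm, so the maximum is always attained at $|S|=k$. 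The excerpt already uses this contraction argument for $\eta$, and the same reasoning applies to $\subinaM$.

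With the identification $\eta(\caH_C,k)=\subinaM(\caH_C,k)$ in place, both halves of the corollary will follow by applying \lref{lem:two_sides_bound_of_AQEC_inaccuracy} uniformly over all subsystems $S$ with $|S|\le k$. For the forward direction I would set $\epsilon:=\subinaM(\caH_C,k)$ and invoke the right-hand inequality of the lemma to conclude that $\widetilde{\eta}(\caE,\caR_S)\le\sqrt{d_C\,\epsilon}$ for every such $S$. Since $\widetilde{\eta}(\caE,\caR_S)$ is by definition the infimum over decoders $\caD$ of $\mathfrak{D}(\caD\circ\caR_S\circ\caE,\id_L)$, this yields the existence of a decoder achieving correction inaccuracy at most $\sqrt{d_C\,\subinaM(\caH_C,k)}$, which is precisely the $\sqrt{d_C\,\subinaM(\caH_C,k)}\lsp$-correctable AQECC property. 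For the converse, I would use the left-hand inequality: assuming $\mathfrak{D}(\caD_S\circ\caR_S\circ\caE,\id_L)\le\widetilde{\eta}$ for some family of decoders $\caD_S$ and every $|S|\le k$, one gets $\max_{|\psi\>\in\caH_C}\|\psi_S-\Pi_C^{(S)}\|_1\le 4\widetilde{\eta}$ for each such $S$, and taking the maximum over $S$ yields $\subinaM(\caH_C,k)\le 4\widetilde{\eta}$.

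The whole argument amounts to bookkeeping on top of \lref{lem:two_sides_bound_of_AQEC_inaccuracy}, so I do not anticipate any serious technical obstacle. The one point that requires mild care is that both conclusions are uniform over all $S$ with $|S|\le k$, whereas the lemma is phrased for a single fixed replacement channel $\caR_S$; the remedy is simply to invoke the lemma separately for every such $S$, which is legitimate because its constants do not depend on the particular choice of subsystem. Once this is observed, the maxima on the two sides of each inequality line up exactly with the definitions of $\subinaM(\caH_C,k)$ and of being \emph{$\widetilde{\eta}$-correctable against replacement errors on any subsystem of at most $k$ parties}, completing the proof.
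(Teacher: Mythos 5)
Your argument is correct and coincides with the paper's own reasoning: the corollary is obtained by identifying the subsystem variance $\eta(\caH_C,k)$ with the maximum subsystem inaccuracy $\subinaM(\caH_C,k)$ (using contractivity of the trace distance under partial trace to reconcile $|S|\le k$ with $|S|=k$) and then applying the two sides of \lref{lem:two_sides_bound_of_AQEC_inaccuracy} for each replacement channel $\caR_S$. Nothing further is needed.
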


Note that the equivalence between $k$-uniform masking and QECC with distance $k+1$ proved in \cite{Shi_2021_k_uniform} is a natural consequence of the above Corollary with both inaccuracies set to $0$.

Combing \coref{cor:aqim_aqecc} and \thref{theo:multipartite_trace_projector}, we can derive the following result on random codes.
\begin{theorem}
\label{theo:random_codes}
Suppose $\caH_C$ is a random subspace of a multipartite Hilbert space $\caH_{B_1\cdots B_m}$ with $d_i=d$ for $1\leq i\leq m$ and its dimension $d_C$ satisfies
\begin{align}
    d_C < \frac{1}{2\ln(10/\alpha)} \left[\frac{d^m\alpha^2}{256} -(\ln2)m H\left(\frac{k}{m}\right) \right].\label{eq:random_codes_ineq_of_dc}
\end{align}
Then for $\alpha\ge 0$, the probability that $\caH_C$ is an AQECC with inaccuracy $\widetilde{\eta}(\caE, \caR_S)=\sqrt{d_C(u+\alpha)}$ and distance $k+1$, being capable of correcting replacement errors on any subsystem with no more than $k$ parties, is bounded from below by
\begin{align}
\label{Eq:random_code_probability}
   1- \binom{m}{k} \left( \frac{10}{\alpha} \right)^{2d_C} \exp\left( -\frac{d^m \alpha^2}{256} \right).
\end{align}
\end{theorem}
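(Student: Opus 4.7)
The plan is to obtain Theorem~\ref{theo:random_codes} essentially as a direct composition of Theorem~\ref{theo:multipartite_trace_projector} (concentration of the maximum subsystem inaccuracy) with the equivalence between AQIM and AQECC articulated in Corollary~\ref{cor:aqim_aqecc} (which in turn rests on Lemma~\ref{lem:two_sides_bound_of_AQEC_inaccuracy}). First I would apply Theorem~\ref{theo:multipartite_trace_projector} to the random subspace $\caH_C$ with parameters $m$, $k$, $d_C$, $\alpha$, yielding that the maximum subsystem inaccuracy satisfies $\subinaM(\caH_C,k)\le u+\alpha$ except with probability at most $\binom{m}{k}(10/\alpha)^{2d_C}\exp(-d^m\alpha^2/256)$. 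This is exactly the probability bound claimed in \eqref{Eq:random_code_probability}, so no further probabilistic work is needed.

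Next I would translate this bound on the masking inaccuracy into a bound on the QEC inaccuracy. As noted in the paragraph preceding the theorem, the subsystem variance $\eta(\caH_C,k)$ used in Lemma~\ref{lem:two_sides_bound_of_AQEC_inaccuracy} coincides with $\subinaM(\caH_C,k)$. Hence on the event $\{\subinaM(\caH_C,k)\le u+\alpha\}$, the upper bound in Lemma~\ref{lem:two_sides_bound_of_AQEC_inaccuracy} gives
\begin{align}
\widetilde{\eta}(\caE,\caR_S)\le \sqrt{d_C\lsp\eta(\caH_C,k)}\le \sqrt{d_C(u+\alpha)},
\end{align}
which yields the claimed inaccuracy and, via Corollary~\ref{cor:aqim_aqecc}, the correctability for replacement errors on any subsystem of at most $k$ parties, i.e.\ distance $k+1$.

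Finally, I would verify that the hypothesis \eqref{eq:random_codes_ineq_of_dc} is precisely the condition making the probability lower bound \eqref{Eq:random_code_probability} positive (and hence the theorem non-vacuous). Taking the logarithm of $\binom{m}{k}(10/\alpha)^{2d_C}\exp(-d^m\alpha^2/256)<1$ and applying the standard entropy bound $\binom{m}{k}\le 2^{mH(k/m)}$, one arrives after rearrangement at
\begin{align}
d_C<\frac{1}{2\ln(10/\alpha)}\left[\frac{d^m\alpha^2}{256}-(\ln 2)\lsp m\lsp H\!\left(\frac{k}{m}\right)\right],
\end{align}
matching \eqref{eq:random_codes_ineq_of_dc}. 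This completes the argument.

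There is no real obstacle: the heavy lifting has already been done in Theorem~\ref{theo:multipartite_trace_projector} (whose proof is deferred to the appendix and uses a net-plus-Levy-concentration argument on the unitary group) and in Lemma~\ref{lem:two_sides_bound_of_AQEC_inaccuracy} (taken from prior work). The only point that requires mild care is the identification $\eta(\caH_C,k)=\subinaM(\caH_C,k)$, which follows from contractivity of the trace distance under partial trace, and the bookkeeping of the entropy bound to produce \eqref{eq:random_codes_ineq_of_dc} in the exact form stated.
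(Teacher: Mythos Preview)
Your proposal is correct and follows essentially the same approach as the paper, which explicitly states that the theorem is obtained by ``combining \coref{cor:aqim_aqecc} and \thref{theo:multipartite_trace_projector}.'' Your additional bookkeeping---showing that hypothesis~\eqref{eq:random_codes_ineq_of_dc} is exactly the condition for the probability bound to be positive via the entropy estimate $\binom{m}{k}\le 2^{mH(k/m)}$---is consistent with how the paper treats this inequality in the surrounding discussion.
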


\Thref{theo:random_codes} shows that random  AQIM can induce  AQECCs.  In \aref{appendix:Consequences_of_random_codes} we consider two cases in more detail: in the first case, the QEC inaccuracy $\widetilde{\eta}(\caE, \caR_S)$ is a fixed small constant, while in the second case, the QEC inaccuracy $\widetilde{\eta}(\caE, \caR_S)$ scales exponentially as $d^{a(k-m/2)}$ with $0<a<1/2$.

First, suppose $\gamma=k/m<1/2$ and the QEC inaccuracy $\widetilde{\eta}(\caE, \caR_S)$ is a fixed small number, when $d_C=d^l\gg m \gg 1 $, the minimum of $m$ satisfying \eref{eq:random_codes_ineq_of_dc} scales as 
\begin{align}
m^*\propto\max\left\{3l, \frac{2l}{1-2\gamma} \right\}.
\end{align} 
This result implies that, to encode a logical Hilbert space of $l$ qubits, the minimum number $m^*$ of physical qubits required scales linearly with $l$, and the code rate $l/m^*$ is a constant, assuming that $\gamma$ is fixed. More specifically: we have $l/m^*\propto (1-2\gamma)/2$ when $1/6<\gamma<1/2$ and $l/m^*\propto 1/3$ when $\gamma\leq1/6$.

Second, we require that the QEC inaccuracy decreases exponentially as $\widetilde{\eta}(\caE, \caR_S)=d^{am(\gamma-1/2)}$, since $u\approx d^{k-m/2}$ and $k/m=\gamma$. In this case, the minimum number of physical qubits required satisfies
\begin{align}
m^*\propto\max\left\{3l, \frac{2l}{(1-2a)(1-2\gamma)} \right\}.
\end{align} 
If $1/2> a\ge 1/6$, then $2/((1-2\gamma)(1-2a))$ is always larger than $3$, and the code rate is always smaller than $1/3$. If instead  $a<1/6$, then the code rate becomes a piecewise function of $\gamma$ with the maximum being $1/3$. 

In summary, the above result suggests that a random subspace  from a multipartite Hilbert space with sufficiently many parties can serve as an AQECC, with both the failure probability and inaccuracy of the code vanishing exponentially.

\section{Discussions}
\label{sec:discussion}

In this work, we generalize the notion of $k$-uniform QIM to $k$-uniform AQIM. In particular, we introduce two versions of AQIM: the maximal version and the average version, and some figures of merit to characterize the deviation of AQIM from QIM. 
Next, we employs random isometries to construct approximate quantum maskers, and evaluate its feasibility and efficiency for both bipartite and multipartite systems.
In the bipartite case, we find that the average subsystem variation admits a finite lower bound and the probability for a random subspace whose average subsystem variation being smaller than such a lower bound is exponential small, thus we conclude there does not exist AQIM in bipartite systems.
While in the multipartite case, we prove that a random isometry can approximately and $k$-uniformly mask a space of states into a multipartite space with high probability. Both results are based some concentration results about random subspaces of a Hilbert space. 
Since QIM has deep relationships with multipartite entanglement and QECC, our results naturally have important implications for them.
For the multipartite entanglement, our results imply that a random pure state of a multipartite system is an approximate $k$-uniform state with probability exponentially approaching $1$ and a random subspace is a highly entangled subspace with high probability. 
Honestly speaking, these implications are not surprising, since there have been many similar results \cite{Hayden_2006,aubrun2017alice}.
However, an astonishing implication is that about the AQECC, i.e., a randomly masked subspace can be a AQECC with high probability under certain conditions, the inaccuracy of this subspace can be exponentially small and its code rate is constant at the same time. 

Our work also reveals some new problems. 
First, the most direct question is how to realize random AQIM in experiments. Since it is hard to realize the Haar random unitary experimentally,  then we can ask if the unitary $k$-designs can serve as the random masker.
Second, although it is obvious that the states of a maskable are entangled states, we don't know if there exists some necessary conditions about the entanglement of states for this set to be maskable. 
Thirdly, our results about AQIM and AQECC highlight the importance of studying AQECC and the random code. There have been some explicit AQECCs, like the covariant code \cite{Faist_2020_continuous, Kong_2022_near_optimal, Wang_2020_quasi}. But as a quasi-code, the covariant code's QEC inaccuracy doesn't approach $0$ quickly enough \cite{wang2024_universal}, then it may be necessary to construct other AQECCs. And a promising candidate may be the random code, as the existing results \cite{Kong_2022_near_optimal, Gullans_2021_low_depth, liu2025approximatequantumerrorcorrection}. 
Fourthly, although we noticed the relation between the AQIM and the ETH and the thermodynamic code, the connections between AQIM and the ETH and thermalization deserve to be explored further.

\section*{Acknowledgments}
We thank Zhiyao Yang, Datong Chen for helpful discussions. This work is supported by Shanghai Science and Technology Innovation Action Plan (Grant No.~24LZ1400200), Shanghai Municipal Science and Technology Major Project (Grant No.~2019SHZDZX01), and the National Key Research and Development Program of China (Grant No.~2022YFA1404204).

\emph{Note added.} Recently, we became aware of a related work by Guo and Shi \emph{et al.} Their result shows that  approximate $k$-uniform states can be constructed with
high probability from  Haar random pure states and shallow random quantum circuits with respect to the  purities of reduce states. In contrast, our work focuses on AQIM implemented by random isometries with respect 
to the trace distance, which has a clear optional interpretation.

\bibliography{ref}

\appendix
\clearpage
\onecolumngrid

\section{Contents of Appendix and list of expectation values}
For convenience, we outline the contents of the sections in Appendix and provide a table summarizing the important results about the expectation values over random subspaces or random states.
In \aref{appendix:no_random_aqim_bipartite}, we provide the detailed proofs for the main results mentioned in \sref{sec:no_random_aqim_bipartite}.
In \aref{appendix:concentration_bipartite_identity_projector}, we prove the concentration results in the bipartite system shown in \sref{sec:concentration_bipartite_identity_projector}.
In \aref{app:consequences_multipartite}, we analyze the implications of \thref{theo:subpsace_identity_multipartite} in details, resulting in \coref{cor:aqim_identity_multipartite} and \coref{cor:aqim_multipartite_mask_rate}.
In \aref{appendix:Consequences_of_random_codes}, we explore the consequences of \thref{theo:random_codes}, which reveal important results about the code rate and QEC inaccuracy of random QECCs.

Finally, for readability, we list the values or bounds of these expectation values appearing in this paper.
In the following, we denote a random subspace $\caH_C$ sampled from the uniform distribution as $\caH_C \sim \mathrm{Gr}(\caH_{B_1B_2},d_C)$, where the Grassmannian $\mathrm{Gr}(\caH_{B_1B_2},d_C)$ is defined as the set of all subspaces of $\caH_{B_1B_2}$ with dimension $d_C$ and endowed with an unitarily invariant measure.
Other notations are the same as mentioned in the main body. 
\\

\setlength{\cellspacetoplimit}{5pt}   % 设置单元格顶部间距
\setlength{\cellspacebottomlimit}{5pt} % 设置单元格底部间距

% 定义新列类型 C（居中对齐）和 L（左对齐），并指定宽度
\newcolumntype{C}[1]{>{\centering\arraybackslash}p{#1}}
\newcolumntype{T}[1]{>{\centering\arraybackslash}m{#1}}

\begin{center}
	\begin{tabular}{|c|c|c|}
		\hline
		Distribution of variables & Expectation values & Proofs \\
		\hline     
		$\ket{\psi}\sim \mu(\caH_{B_1B_2})$
		& $\subinaAID_{B_1}(\caH_{B_1B_2}) \le \sqrt{\frac{d_1^2-1}{d_{12}+1}}$ 
		& \makecell{\aref{app:concentration_psi_identity}:  \\\eref{eq:proof_expectation_psi_identity_upper_bound}} \\
		\cline{2-3}
		&$\underset{\ket{\psi}}{\bbE}
		\left[\left\|\psi_{B_1}-\widetilde{\md{1}}_{B_1}\right\|_2^2\right]=\frac{d_1+d_2}{d_{12}+1}-\frac{1}{d_1}$ 
		& \makecell{\aref{app:concentration_psi_identity}:  \\\eref{eq:proof_expectation_psi_identity_2norm}} \\
		\hline     
		$\caH_C\sim \mathrm{Gr}(\caH_{B_1B_2},d_C)$
		& $\underset{\caH_C}{\bbE}[D( \Pi_{C}^{(B_1)}, \widetilde{\md{1}}_{B_1})] \le  \sqrt{\frac{d_{12}-d_C}{d_C}\frac{d_1^2-1}{d_{12}^2-1}}$
		&  \makecell{\aref{appendix:trace_identity_projector}:\\\eref{eq:proof_expectation_pi_identity_upper_bound}} \\  
		\cline{2-3}
		& $\underset{\caH_C}{\bbE}\left[\left\| \Pi_{C}^{(B_1)}, \widetilde{\md{1}}_{B_1}\right\|_2^2\right] =  \frac{d_{12}-d_C}{d_1d_C}\frac{d_1^2-1}{d_{12}^2-1}$
		& \makecell{\aref{appendix:trace_identity_projector}:\\ \eref{eq:proof_expectation_pi_identity_2norm}} \\  
		\hline    
		\makecell{$\caH_C\sim \mathrm{Gr}(\caH_{B_1B_2},d_C)$ \\ $\ket{\psi}\sim \mu(\caH_C)$}
		& $\underset{\caH_C}{\bbE}[\subinaA_{B_1}(\caH_C)]\le \sqrt{\frac{d_C-1}{d_C}\frac{d_{12}(d_1^2-1)}{d_{12}^2-1}}$
		& \makecell{\aref{app:projector_trace_distance}:\\ \eref{eq:proof_expectation_psi_pi_upper_bound}} \\ 
		\cline{2-3}
		& $\underset{\caH_C,\ket{\psi}}{\bbE}\left[\left\| \psi_{{B_1}}, \Pi_{C}^{(B_1)}\right\|_2^2\right]= \frac{d_C-1}{d_C}\frac{d_2(d_1^2-1)}{d_{12}^2-1}$
		& \makecell{\aref{app:projector_trace_distance}:\\ \eref{eq:proof_expectation_psi_pi_2norm}} \\ 
		\hline  
		$\ket{\psi},\ket{\phi}\sim \mu(\caH_{B_1B_2})$
		&$\subvarA_{B_2}(\caH_{B_1B_2}) \ge 2-2\sqrt{\frac{d_1}{d_2}}$
		&\makecell{\aref{app:Concentration_results_in_B1B2}:\\\eref{eq:proof_expectation_psi_phi_Hb1b2_lower_bound2}} \\
		% \cline{2-3}
		%     &$\subinaAID_{X}
		%     \le \subvarA_X(\caH_{B_1B_2})
		%     \le 2\subinaAID_{X}, \quad X\in\{B_1,B_2\}$
		%     &\makecell{\sref{sec:no_random_aqim_bipartite}:\\ \eref{eq:average_over_2states_in_whole_space_bound_by_idenity}}\\
		% \cline{2-3}
		%     &$\begin{aligned}
			%         \underset{\ket{\psi},\ket{\phi}}{\bbE} &\left[F(\psi_{B_1}, \phi_{B_1}) \right]\\[-0.2em]
			%         =&\frac{2}{[(d_{12})_{1/2}]^2} \sum_{k=1}^{d_1} \frac{(-1)^{d_1-k}(k)_{1/2}[(k+d_2-d_1)_{1/2}]^2}{\Gamma(d_1-k+1)\Gamma(k-d_1+1/2)}
			%     \end{aligned}$
		%     &\cite{PhysRevA.104.022438}\\
		% \cline{2-3}
		%     &$\underset{\ket{\psi},\ket{\phi}}{\bbE}\left[ D(\psi_{B_1},\phi_{B_1}) \right] \ge 2-2\underset{\ket{\psi},\ket{\phi} \sim \mu(\caH_{B_1B_2})}{\bbE} \left[F(\psi_{B_1}, \phi_{B_1}) \right]$
		%     & \makecell{\sref{sec:no_randomly_aqim}\\ \eref{eq:expect_D_of_two_states_reduced_B1}}\\
		\hline     
		\makecell{$\caH_C\sim \mathrm{Gr}(\caH_{B_1B_2},d_C)$ \\
			$\ket{\psi},\ket{\phi}\sim \mu(\caH_C)$} 
		% &$\underset{\caH_C,\ket{\psi},\ket{\phi}}{\bbE}\left[ D(\psi_{B_2},\phi_{B_2}) \right] \ge 2 - 2\sqrt{ \frac{d_{12}d_C(d_1^2-1)+d_1^2(d_2^2-1)}{d_C(d_{12}^2-1)} }$
		% &\makecell{\aref{app:no_randomly_aqim_part2}:\\ \eref{eq:proof_expectation_psi_phi_Hc_lower_bound}} \\
		% \cline{2-3}
		&$ \underset{\caH_C}{\bbE} \left[\subvarA_{X}(\caH_C)\right]
		= \frac{(2d_C-2)(2d_{12}-1)}{(2d_C-1)(2d_{12}-2)}\,\subvarA_{X}(\caH_{B_1B_2}), \quad X\in\{B_1,B_2\}$
		&\makecell{\sref{sec:no_random_aqim_bipartite}:\\ \eref{eq:expectation_trace_distance_of_subspace_and_states_in_subspace}}\\
		\hline
	\end{tabular}
\end{center}

\section{Relationships among figures of merit}
\label{appendix:equivalence_aqim}
The following is the proof of \pref{prop:equivalence_aqim_inaccuracy_variation_maximal}.
\begin{proof}
	(1) Suppose $\scrC=V(\scrA)$ with $V$ being the corresponding masker. Then according to the triangle inequality
	\begin{align}
		\max_{\ket{\psi},\ket{\psi^{\prime}}\in \scrC}\|\psi_{S}-\psi_{S}^{\prime}\|_1
		=\max_{\ket{\psi},\ket{\psi^{\prime}}\in \scrC}\|\psi_{S}-\sigma_{S}+\sigma_{S}-\psi_{S}^{\prime}\|_{1} 
		\leq\max_{\ket{\psi}\in \scrC}\|\psi_{S}-\sigma_{S}\|_{1}+\max_{\ket{\psi^{\prime}}\in \scrC}\|\psi_{S}^{\prime}-\sigma_{S}\|_{1}.
	\end{align}
	Here, $\sigma_S$ can be chosen as any state on $S$, including $\widetilde{\md{1}}$ and $\Pi_{\scrC}^{(S)}$. Therefore, the right-hand side of the inequalities can be proved immediately.
	
	As for the left-hand side, we have
	\begin{align}
		\subinaM_S=\max_{\ket{\psi^{\prime}}\in \scrC}\left\|\psi_{S}^{\prime}-\Pi_{\scrC}^{(S)}\right\|_{1}
		&=\max_{\ket{\psi^{\prime}}\in \scrC}\left\|\underset{\ket{\psi} \in \scrC}{\bbE}\left(\psi_{S}^{\prime}-\psi_{S}\right)\right\|_{1} 
		\leq  \max_{\ket{\psi^{\prime}}\in \scrC}\,\underset{\ket{\psi} \in \scrC}{\bbE}\|\psi_{S}^{\prime}-\psi_{S}\|_{1} \notag\\
		&\leq  \,\underset{\ket{\psi} \in \scrC}{\bbE}\,\max_{\ket{\psi^{\prime}}\in \scrC}\|\psi_{S}^{\prime}-\psi_{S}\|_{1} 
		\leq \max_{\ket{\psi},\ket{\psi^{\prime}}\in \scrC}\|\psi_{S}^{\prime}-\psi_{S}\|_{1}
		=\subvarM_S,
	\end{align}
	and therefore
	\begin{align}
		\subinaM=\max_S \subinaM_S\leq \max_S \subvarM_S =\subvarM.
	\end{align}
	% Note that when $\scrA$ is a Hilbert space, then $\scrC$ is also a Hilbert space and $\Pi_{\scrC}$ becomes the normalized projector on $\scrC$.
	(2) Similar to the previous case, the right-hand side of the inequalities follows directly from the triangle inequality. It is also obviously that
	\begin{align}
		\subinaA_S=\underset{\ket{\psi^{\prime}}\in \scrC}{\bbE}\left\|\psi_{S}^{\prime}-\Pi_{\scrC}^{(S)}\right\|_{1}
		&=\underset{\ket{\psi^{\prime}}\in \scrC}{\bbE}\left\|\underset{\ket{\psi} \in \scrC}{\bbE}\left(\psi_{S}^{\prime}-\psi_{S}\right)\right\|_{1} 
		\leq  \underset{\ket{\psi^{\prime}}\in \scrC}{\bbE}\,\underset{\ket{\psi} \in \scrC}{\bbE}\|\psi_{S}^{\prime}-\psi_{S}\|_{1} 
		=\subvarA_S,
	\end{align}
	and
	\begin{align}
		\subinaA=\max_S \subinaA_S\leq \max_S \subvarA_S =\subvarA.
	\end{align}
	(3) This part can be simply proved by the definition
	\begin{align}
		\subinaA_S
		= \underset{\ket{\psi}\in \scrC}{\bbE}\left\|\psi_{S}^{\prime}-\sigma_S\right\|_{1}\le \max_{\ket{\psi}\in\scrC}\left\|\psi_{S}-\sigma_S\right\|_{1}
		=\subinaM_S,
	\end{align}
	and the proof of the other cases follows from the same reasoning.
\end{proof}

\section{No-random-AQIM theorem in bipartite systems}
\label{appendix:no_random_aqim_bipartite}

In this section, we prove \thref{theo:lower_bound_expectation_sum_subspace_bipartite} and the no-random-AQIM theorem. As explained in the main body, the main idea is to obtain a lower bound of 
\begin{align}
	\label{eq:expectation_sum_subspace_two_reduced_pure}
	\underset{\caH_{C}}{\bbE}\, [\subvarA_{B_1}(\caH_C)+\subvarA_{B_2}(\caH_C)]=\underset{\caH_{C}}{\bbE}\,\underset{\ket{\psi},\ket{\phi}\sim\mu(\caH_C)}{\bbE}[D(\psi_{B_1},\phi_{B_1})+D(\psi_{B_2},\phi_{B_2})],
\end{align} 
and further derive a concentration inequality for the expectation value.

\subsection{Lower bound for expectation value of the trace distance}

\subsubsection{Proof of \pref{prop:expectation_trace_distance_of_subspace_and_states_in_subspace}}
\label{app:average_trace_distance_2states_B1_random subspace}

First, we introduce the following lemma, which allows us to deal with the expectation value taken over the random subspace.
The key idea in proving this proposition is to decompose the expectation—taken over pairs of distinct Haar-random pure states in random subspaces—into two parts: (1) the average trace distance between pairs of states with a fixed fidelity, and (2) the average over the distribution of fidelities. To this end, we begin by stating the following lemma.

\begin{lemma}
	\label{lem:explicit_form_for_expectation_of_subspace_and_states_in_subspace}
	Suppose $\caH_C$ is a random subspace of $\caH_B$, uniformly sampled from $\mathrm{Gr}(\caH_B,d_C)$, then the expectation value of an continuous and bounded function function $P(\psi, \phi)$ of two distinct Haar-random pure states $\ket{\psi}, \ket{\phi}\in \caH_C$ has the following decomposition
	\begin{align}
		\underset{\caH_C}{\bbE}\,\underset{\ket{\psi},\ket{\phi}\sim \mu(\caH_C)}{\bbE}P(\psi,\phi)
		=\int_0^1 E_P(a)d\nu(a)
	\end{align}
	where
	\begin{align}
		E_P(a)&=\underset{U\sim \mu(d_B)}{\bbE}P\left(U\ket{v_a}\bra{v_a}U^\dag,U\ket{0} \bra{0}U^\dag\right), \\
		d\nu(a)&=\frac{2}{\pi^{1/2}} \frac{2d_C-2}{2d_C-1} \frac{ \Gamma(d_C+1/2)}{ \Gamma(d_C)} (1-a^2)^{d_C-3/2} da,
	\end{align}
	and $\ket{v_a}=a\ket{0}+\sqrt{1-a^2}\,\ket{1}$ with $\ket{0},\ket{1}$ being two fixed orthogonal states.
\end{lemma}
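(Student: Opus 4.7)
The proof plan rests on the unitary invariance of the joint distribution of $(\psi, \phi)$. By construction, the measure obtained by first sampling $\caH_C$ uniformly from the Grassmannian and then drawing $\ket{\psi}, \ket{\phi}$ independently from the Haar measure on $\caH_C$ is invariant under the diagonal action $(\psi, \phi) \mapsto (U \psi U^\dag, U \phi U^\dag)$ for every $U \in \rmU(d_B)$: indeed, the Grassmannian measure is $\rmU(d_B)$-invariant, and the Haar measure on $\caH_C$ is pushed to the Haar measure on $U\caH_C$. Two pairs of pure states in $\caH_B$ lie in the same $\rmU(d_B)$-orbit if and only if they share the same overlap magnitude $a = |\langle \psi | \phi \rangle|$, so the joint distribution must disintegrate as a marginal law on $a \in [0,1]$ times a Haar average over each orbit.

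To make this disintegration concrete, I would pick the canonical orbit representative $(\ket{v_a}, \ket{0})$, whose overlap magnitude is $|\langle 0 | v_a\rangle| = a$. Unitary invariance then forces the conditional distribution of $(\psi, \phi)$ given $|\langle \psi | \phi\rangle| = a$ to coincide with the law of $(U\ket{v_a}, U\ket{0})$ for $U \sim \mu(d_B)$, so the conditional expectation of $P$ is exactly $E_P(a)$. Consequently
\begin{align}
\underset{\caH_C}{\bbE}\underset{\ket{\psi},\ket{\phi}\sim \mu(\caH_C)}{\bbE} P(\psi,\phi) = \int_0^1 E_P(a)\, d\nu(a),
\end{align}
where $d\nu$ is the marginal law of $a$ under the combined sampling; only the identification of $d\nu$ is left.

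The main substantive step is computing $d\nu(a)$. I would invoke the tower property: conditional on a fixed $\caH_C$, the states $\ket{\psi}, \ket{\phi}$ are iid Haar-random in a Hilbert space isometric to $\bbC^{d_C}$, so the conditional law of $a$ depends on $d_C$ alone and is independent of the choice of $\caH_C$. Averaging over $\caH_C$ is then trivial, and the task reduces to finding the density of $|\langle \psi | \phi \rangle|$ for two iid Haar-random unit vectors in $\bbC^{d_C}$. Fixing one vector and letting the other be Haar-random, the squared amplitudes follow a $\mathrm{Dirichlet}(1,\ldots,1)$ distribution on the simplex, so $|\langle \psi | \phi \rangle|^2$ is $\mathrm{Beta}(1, d_C - 1)$-distributed. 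A change of variables to $a$ together with the Legendre duplication identity $\Gamma(d_C)\Gamma(d_C + 1/2) = \sqrt{\pi}\, 2^{1-2d_C}\,\Gamma(2d_C)$ then recasts the resulting density into the stated combination of Gamma functions and a power of $(1-a^2)$.

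The main obstacle is purely computational: tracking the change of variables from $|\langle \psi | \phi\rangle|^2$ to $a$ and massaging the prefactor into the precise form displayed in the lemma. Everything conceptual -- $\rmU(d_B)$-invariance of the pair distribution, conditioning on the overlap, and the reduction to a purely $d_C$-dimensional problem -- follows immediately from the orbit structure and the tower property, so no deeper input is needed.
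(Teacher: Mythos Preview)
Your overall strategy---reduce by $\rmU(d_B)$-invariance to an average over orbit representatives indexed by the overlap $a$, and then identify the marginal law of $a$---is exactly the route the paper takes, only phrased in the language of disintegration rather than via explicit parameterization by Haar unitaries. The paper fixes a reference subspace $\caH_{C_0}$ and state $\ket{0}$, absorbs one of the two inner unitaries into the outer Haar integral, and arrives at the same conditional expectation $E_P(a)$.

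The substantive discrepancy is in the computation of $d\nu$. Your Beta$(1,d_C-1)$ route is correct and yields the density $2(d_C-1)\,a\,(1-a^2)^{d_C-2}$ on $[0,1]$. This is \emph{not} the formula displayed in the lemma, and no amount of Legendre duplication will bridge the gap: the $a$-dependence itself differs (you obtain an extra factor of $a$ and exponent $d_C-2$ rather than $d_C-3/2$), and the stated $d\nu$ is not even a probability measure (for $d_C=2$ it integrates to $\pi/4$). The paper derives its $d\nu$ via a sphere-area argument that effectively treats the space of rays in $\bbC^{d_C}$ as if it were the round sphere $S^{2d_C-2}$, which loses a Jacobian factor. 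So your plan to ``massage the prefactor into the precise form displayed'' will fail---but only because the displayed form is itself erroneous, not because your method is wrong.

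None of this jeopardizes the downstream result: the lemma is only ever applied with $E_P(a)=\sqrt{1-a^2}\,E_P(0)$, and both your (correct) density and the paper's formula---the latter via a compensating power in the subsequent sine integral---produce the same value $\int_0^1\sqrt{1-a^2}\,d\nu(a)=(2d_C-2)/(2d_C-1)$.
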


\begin{proof}
	First, we assume there is a fixed subspace $\caH_{C_0}$ of $\caH_B$ with $d_{C_0}=d_C$ and a fixed state $\ket{0} \in \caH_{C_0}$, then a random subspace satisfies $\caH_C=U\caH_{C_0}$ for a random unitary $U\sim \mu(d_B)$ and random pure states $\ket{\psi}, \ket{\phi}\in \caH_C$ satisfies $\ket{\psi}=UW_1\ket{0}, \ket{\phi}=UW_2\ket{0}$ with $W_1, W_2 \sim \mu(d_C)$. 
	Then repeatedly applying the left and right invariance of the Haar measure, we can obtain
	\begin{align}
		\underset{\caH_C}{\bbE}\,\underset{\ket{\psi},\ket{\phi}\sim \mu(\caH_C)}{\bbE}P(\psi,\phi)
		&=\underset{U\sim \mu(d_B)}{\bbE}\underset{\substack{W_1, W_2\\ \sim \mu(d_C)} }{\bbE}P\left(UW_1\ket{0}\bra{0}W_1^\dag U^\dag,UW_2\ket{0}\bra{0}W_2^\dag U^\dag\right)\notag \\
		&=\underset{U\sim \mu(d_B)}{\bbE}\underset{\substack{W_1, W_2\\ \sim \mu(d_C)} }{\bbE}P\left(UW_2^\dag W_1\ket{0}\bra{0}W_1^\dag W_2U^\dag,U\ket{0}\bra{0}U^\dag\right)\notag \\
		&=\underset{U\sim \mu(d_B)}{\bbE}\,\underset{W \sim \mu(d_C)}{\bbE}P\left(U W\ket{0}\bra{0}W^\dag U^\dag,U\ket{0}\bra{0}U^\dag\right),
	\end{align}
	Since the function $P(\psi,\phi)$ is continuous and bounded, it is integrable. Then according to the Fubini's theorem, we can exchange the ordering of the integrations of $U$ and $W$ as
	\begin{align}
		\underset{U\sim \mu(d_B)}{\bbE}\,\underset{W \sim \mu(d_C)}{\bbE}P\left(U W\ket{0}\bra{0}W^\dag U^\dag,U\ket{0}\bra{0}U^\dag\right)
		&=\underset{W \sim \mu(d_C)}{\bbE}\, \underset{U\sim \mu(d_B)}{\bbE}P\left(U W\ket{0}\bra{0}W^\dag U^\dag,U\ket{0}\bra{0}U^\dag\right) 
		\notag \\
		&=\underset{\ket{\psi_0} \sim \mu(\caH_{C_0})}{\bbE}\, \underset{U\sim \mu(d_B)}{\bbE}P\left(U \ket{\psi_0}\bra{\psi_0} U^\dag,U\ket{0}\bra{0}U^\dag\right),
	\end{align}
	where in the last line, we denote $\ket{\psi_0}=W\ket{0}$.
	The state $\ket{\psi_0}$ can be parameterized as $\ket{\psi_0}= a \ket{0} +e^{i \alpha} \sqrt{1 - a^2}\,\ket{0^\perp}$
	where $\ket{0^\perp}$ is orthogonal to $\ket{0}$, $|a|=|\<0|\psi_0\>|$ is the fidelity between $\ket{0}$ and $\ket{\psi_0}$ and $0\leq\alpha<2\pi$. 
	Since $\ket{0^\perp}$ is orthogonal to $\ket{0}$, there exists a unitary $U'\sim \mu(d_B)$ satisfying $U'\ket{0}=\ket{0}$ and $e^{i\alpha}\ket{0^\perp}=U'\ket{1}$, then we get
	\begin{align}
		\underset{U\sim \mu(d_B)}{\bbE}P\left(U \ket{\psi_0}\bra{\psi_0} U^\dag,U\ket{0}\bra{0}U^\dag\right)
		&=\underset{U\sim \mu(d_B)}{\bbE}P\left(UU'\ket{v_a}\bra{v_a}U'^\dag U^\dag,U\ket{0} \bra{0}U^\dag\right) \notag \\
		&=\underset{U\sim \mu(d_B)}{\bbE}P\left(U\ket{v_a}\bra{v_a}U^\dag,U\ket{0} \bra{0}U^\dag\right)
	\end{align}
	where $\ket{v_a}=a\ket{0}+\sqrt{1-a^2}\,\ket{1}$. The above equation implies that the above expectation value only depends on the fidelity $a$, therefore
	\begin{align}
		\underset{\caH_C}{\bbE}\,\underset{\ket{\psi},\ket{\phi}\sim \mu(\caH_C)}{\bbE}P(\psi,\phi)
		=\underset{a \sim \nu(a)}{\bbE}\,\underset{U\sim \mu(d_B)}{\bbE}P\left(U\ket{v_a}\bra{v_a}U^\dag,U\ket{0} \bra{0}U^\dag\right)=2\int_{0}^1 E_P(a) \,d\nu(a),
	\end{align}
	where $\nu(a)$ is the integration measure of the fidelity $a$ and the constant $2$ appears because we transform the integration region $[-1,1]$ into $[0,1]$.
	
	Next, we determine the explicit form of the measure $\nu(a)$. Since a pure state is described by a ray in Hilbert space, then a pure state in $d_C$-dimensional Hilbert space corresponds to a point on a $(2d_C-2)$-dimensional unit sphere.
	According to the parameterization of $\ket{v_a}$, the set of states satisfying $\braket{0}{v_a}=a$ constitute a $(2d_C-3)$-dimensional sphere with radius $\sqrt{1-a^2}$. Hence, the differential volume of $d\nu(a)$ is
	\begin{align}
		d\nu(a) = \frac{ S_{2d_C-3}(\sqrt{1-a^2}\,) da }{\int_{-1}^1 S_{2d_C-3}(\sqrt{1-a^2}\,) da }
		= \frac{ S_{2d_C-3}(\sqrt{1-a^2}\,) da }{S_{2d_C-2}(1)}
		=\frac{1}{\pi^{1/2}} \frac{2d_C-2}{2d_C-1} \frac{ \Gamma(d_C+1/2)}{ \Gamma(d_C)} (1-a^2)^{d_C-3/2} da
	\end{align}
	where $S_{n-1}(x)$ denotes the area of a $(n-1)$-dimensional sphere with radius $x$,
	\begin{align}
		S_{n-1}(x) = \frac{n\pi^{n/2}x^{n-1}}{\Gamma(n/2+1)}.
	\end{align}
\end{proof}

With the help of the above proposition, we can now present the proof of \pref{prop:expectation_trace_distance_of_subspace_and_states_in_subspace}.

\begin{proof}
	The proofs for the cases of $X=B_1$ or $B_2$ are identical, then without loss of generality, we consider $X=B_1$.  According to \lref{lem:explicit_form_for_expectation_of_subspace_and_states_in_subspace}, the desired equation can be expressed as
	\begin{align}
		\underset{\caH_C}{\bbE}[\subvarA_{B_1}(\caH_C)]=
		\underset{\caH_C}{\bbE}\,\underset{\ket{\psi},\ket{\phi}\sim \mu(\caH_C)}{\bbE}[D(\psi_{B_1},\phi_{B_1})]=\int_0^1 E_D(a)d\nu(a),
	\end{align}
	where $E_D(a)$ has the form
	\begin{align}
		E_D(a)=\underset{U\sim \mu(d_{12})}{\bbE} \left\| \mathrm{Tr}_{B_2} \left[ U (\ket{0} \bra{0}-\ket{v_a} \bra{v_a}) U^\dagger\right] \right\|_1.
		\label{eq:expectation_value_for_2random_states_with_fixed_inner_product}
	\end{align}
	where $\ket{v_a}=a\ket{0}+\sqrt{1-a^2}\ket{1}$ with $\ket{0},\ket{1}$ being two fixed orthogonal states. To simplify the above equation, we note that the operator $\ket{0} \bra{0} - \ket{v_0} \bra{v_0}$ can be diagonalized by a unitary $W$ as
	\begin{align}
		\ket{0} \bra{0} - \ket{v_a} \bra{v_a}=W\left(\sqrt{1-a^2}\ket{0} \bra{0}-\sqrt{1-a^2}\ket{1} \bra{1}\right)W^\dag,
	\end{align}
	then \eref{eq:expectation_value_for_2random_states_with_fixed_inner_product} becomes
	\begin{align}
		E_D(a)&=\underset{U\sim \mu(d_{12})}{\bbE} \left\| \mathrm{Tr}_{B_2} \left[ U W\left(\sqrt{1-a^2}\ket{0} \bra{0}-\sqrt{1-a^2}\ket{1} \bra{1}\right)W^\dag U^\dagger\right] \right\|_1\notag\\
		&=\sqrt{1-a^2}\underset{U\sim \mu(d_{12})}{\bbE} \left\| \mathrm{Tr}_{B_2} \left[ U (\ket{0} \bra{0} - \ket{1} \bra{1}) U^\dagger\right] \right\|_1 \notag\\
		& =\sqrt{1-a^2}E_D(0).
	\end{align}
	where the second equality above follows the left and right invariance of the Haar measure. Thus, let $a=\cos\theta$ with $\theta\in [0,\pi/2)$, then
	\begin{align}
		\int_0^\frac{\pi}{2} E_D(\cos\theta)\,d\nu
		=\frac{2}{\sqrt{\pi}}\,\frac{2d_C-2}{2d_C-1}\,\frac{\Gamma(d_C+\frac{1}{2})}{\Gamma(d_C)}E_D(0)\int_0^\frac{\pi}{2}(\sin\theta)^{2d_C-2}\,d\theta
		=\frac{2d_C-2}{2d_C-1}E_D(0).
		\label{eq:integral_for_expectation_value}
	\end{align}
	When we set $d_C=d_{12}$, the expectation value over $\caH_{B_1B_2}$ can be expressed as
	\begin{align}
		\underset{\ket{\psi},\ket{\phi}\sim \mu(\caH_{B_1B_2})}{\bbE}\|\psi_{B_1}-\phi_{B_1}\|_1=\frac{2d_{12}-2}{2d_{12}-1}E_D(0),
		\label{eq:expectation_value_for_2random_states_B1B2_orthogonal_repersentation}
	\end{align}    
	which leads to the following expression
	\begin{align}
		\label{eq:expectation_trace_distance_of_subspace_and_states_in_subspace_appendix}
		\underset{\caH_C}{\bbE}[\subvarA_{B_1}(\caH_C)]=
		\underset{\caH_C}{\bbE}\,\underset{\ket{\psi},\ket{\phi}\sim \mu(\caH_C)}{\bbE}\|\psi_{B_1}-\phi_{B_1}\|_1=\frac{(2d_C-2)(2d_{12}-1)}{(2d_C-1)(2d_{12}-2)}\underset{\ket{\psi},\ket{\phi} \sim \mu(\caH_{B_1B_2})}{\bbE} \|\psi_{B_1}-\phi_{B_1}\|_1.
	\end{align}
\end{proof}

\subsubsection{Proof of \thref{theo:expectation_subspace_sum_trace_distance_lower_bound}}
\label{appendix:no_masking_bipartite}

Here we first provide analytical lower bounds for $\subinaAID_{B_1}(\caH_{B_1B_2})$ and $\subinaAID_{B_2}(\caH_{B_1B_2})$.
Next, we prove the summation of them has a finite lower bound $1/3$. 
Hence, we can conclude that the expectation value in \eref{eq:expectation_sum_subspace_two_reduced_pure} also has a lower bound as stated in \thref{theo:expectation_subspace_sum_trace_distance_lower_bound}.

\begin{proposition}
	\label{prop:lowerbound_B1B2}
	Suppose $\ket{\psi}$ is a Haar random pure state in the bipartite Hilbert space $\caH_{B_1B_2}$ with $d_1\le d_2$. Then
	\begin{gather}
		\subinaAID_{B_1}(\caH_{B_1B_2})\geq 2-\frac{4}{\sqrt{d_1}\left(d_{12}\right)_{1/2}}\sum_{i=1}^{d_1}\binom{1/2}{i}\binom{1/2}{i-1}\frac{(d_2)_{3/2-i}}{(d_1+1)_{-i}},\notag \\
		\subinaAID_{B_2}(\caH_{B_1B_2})\geq 2-\frac{4}{\sqrt{d_2}\left(d_{12}\right)_{1/2}}\sum_{i=1}^{d_1}\binom{1/2}{i}\binom{1/2}{i-1}\frac{(d_2)_{3/2-i}}{(d_1+1)_{-i}}.
		\label{eq:proof_expectation_psi_phi_Hb1b2_lower_bound}
	\end{gather}
	where $(\alpha)_\beta:=\Gamma(\alpha+\beta)/\Gamma(\alpha)$ is the Pochhammer symbol and $\binom{\alpha}{\beta}:=\Gamma(\alpha+1)/(\Gamma(\alpha-\beta+1)\Gamma(\beta+1))$ is the binomial coefficient.
\end{proposition}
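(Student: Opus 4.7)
The plan is to convert the trace distance to a fidelity via the Fuchs--van de Graaf inequality and then evaluate the resulting Haar average in closed form. Since this paper normalizes $D(\rho,\sigma)=\|\rho-\sigma\|_1$ without the customary factor of $1/2$, the inequality reads $D(\rho,\sigma)\ge 2\bigl(1-F(\rho,\sigma)\bigr)$ with $F(\rho,\sigma)=\|\sqrt\rho\sqrt\sigma\|_1$; applying it with $\sigma=\widetilde{\md{1}}_{B_j}$ gives $F(\psi_{B_j},\widetilde{\md{1}}_{B_j})=\Tr\sqrt{\psi_{B_j}}/\sqrt{d_j}$. Because the Schmidt spectra of $\psi_{B_1}$ and $\psi_{B_2}$ agree on their common nonzero part, both $\Tr\sqrt{\psi_{B_1}}$ and $\Tr\sqrt{\psi_{B_2}}$ equal $\sum_{i=1}^{d_1}\sqrt{\lambda_i}$, so both claimed bounds collapse to the single identity
\[
\underset{\ket{\psi}\sim\mu(\caH_{B_1B_2})}{\bbE}\!\Bigl[\sum_{i=1}^{d_1}\sqrt{\lambda_i}\,\Bigr]=\frac{2}{(d_{12})_{1/2}}\sum_{i=1}^{d_1}\binom{1/2}{i}\binom{1/2}{i-1}\frac{(d_2)_{3/2-i}}{(d_1+1)_{-i}},
\]
and this is what I would prove.

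Next I would isolate the overall normalization by using the standard representation in which $\psi_{B_1}$ has the same law as $X/\Tr X$, where $X=WW^\dagger$ is a complex Wishart matrix built from a $d_1\times d_2$ matrix $W$ with i.i.d.\ standard complex Gaussian entries. A classical property (equivalent to the statement that the direction and norm of a Gaussian vector are independent) is that $X/\Tr X$ and $\Tr X$ are independent, with $\Tr X\sim\mathrm{Gamma}(d_1d_2,1)$. Scale-invariance of $\Tr\sqrt{\cdot}/\sqrt{\Tr\,\cdot\,}$ then factorizes the expectation as $\bbE[\sum_i\sqrt{\lambda_i}]\,\bbE[\sqrt{\Tr X}\,]=\bbE[\Tr\sqrt X\,]$, and the Gamma moment $\bbE[\sqrt{\Tr X}]=\Gamma(d_1d_2+\tfrac12)/\Gamma(d_1d_2)=(d_{12})_{1/2}$ produces exactly the prefactor appearing in the target. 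It then remains only to evaluate the unnormalized Wishart moment $\bbE[\Tr\sqrt X]$.

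For this step I would use the one-point correlation function $\rho_1(x)$ of the complex Wishart ensemble, which is expressible in Christoffel--Darboux form as a finite sum of squared associated Laguerre polynomials $L_k^{(d_2-d_1)}(x)$ for $k=0,\ldots,d_1-1$ weighted by $x^{d_2-d_1}e^{-x}$. Each summand in $\bbE[\Tr\sqrt X]=d_1\int_0^\infty\sqrt x\,\rho_1(x)\,dx$ then becomes a Beta--Gamma integral $\int_0^\infty x^{d_2-d_1+1/2}e^{-x}[L_k^{(d_2-d_1)}(x)]^2\,dx$, each evaluable in closed form through $\Gamma$ values at half-integer arguments. Regrouping the resulting double sum by a single index $i\in\{1,\ldots,d_1\}$, recognising the ratios of half-integer $\Gamma$ factors as the Pochhammer symbols $(d_2)_{3/2-i}$ and $(d_1+1)_{-i}$, and identifying the remaining constants as $\binom{1/2}{i}\binom{1/2}{i-1}$ will yield the announced identity; substituting back into the Fuchs--van de Graaf bound then proves both inequalities of the proposition simultaneously.

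The main obstacle is the fractional moment $\bbE[\Tr\sqrt X]$ itself: polynomial moments $\bbE[\Tr X^k]$ are routinely computed by Wick/Weingarten calculus, but $\sqrt{\cdot}$ admits no analogous combinatorial expansion, so one really must go through the explicit Laguerre-kernel representation and carry out the half-integer Beta integrals. The bookkeeping that collapses a double sum over Laguerre polynomial indices and their coefficient expansions into the single half-integer binomial sum in the statement is the most delicate part, and the appearance of $\binom{1/2}{i}\binom{1/2}{i-1}$ reflects precisely that rearrangement rather than a more direct probabilistic interpretation.
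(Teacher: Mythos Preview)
Your approach is correct and matches the paper's at the top level: both begin with the Fuchs--van de Graaf inequality $\|\rho-\sigma\|_1\ge 2-2F(\rho,\sigma)$ (with the paper's unhalved trace-distance convention) and use the Schmidt decomposition to relate $F(\psi_{B_2},\widetilde{\md{1}}_{B_2})=\sqrt{d_1/d_2}\,F(\psi_{B_1},\widetilde{\md{1}}_{B_1})$, so that both inequalities reduce to a single formula for $\bbE[\sum_i\sqrt{\lambda_i}]$. The difference is that the paper does not compute this average itself: it simply imports the closed-form expression for the expected fidelity between $\psi_{B_1}$ and $\widetilde{\md{1}}_{B_1}$ from the random-matrix literature (the cited reference \cite{PhysRevA.104.022438}) and plugs it in. Your Wishart factorization via the independence of $X/\Tr X$ and $\Tr X$, together with the Laguerre-kernel computation of the half-integer moment $\bbE[\Tr\sqrt X]$, is exactly how such formulas are derived in that literature, so you are in effect re-deriving the cited result rather than invoking it. This buys self-containedness at the cost of the ``delicate'' bookkeeping you flag; the paper's proof is a two-line citation precisely because it offloads that step.
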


\begin{proof}
	Combining the Fuchs-van de Graaf inequality \cite{Nielsen_Chuang_2010} and the result of the expected fidelity of $\psi_{B_1}$ and the maximally mixed state $\widetilde{\mathds{1}}_{B_1}$, which has been analytically computed in \cite{PhysRevA.104.022438} by the random matrix theory, we can easily prove the above conclusion as 
	\begin{align}
		\subinaAID_{B_1}(\caH_{B_1B_2})
		&=\underset{\ket{\psi}\sim\mu(\caH_{B_1B_2})}{\bbE}\left\|\psi_{B_1}-\widetilde{\mathds{1}}_{B_1} \right\|_1 
		\geq 2-2\underset{\ket{\psi}\sim\mu(\caH_{B_1B_2})}{\bbE} F\left(\psi_{B_1},\widetilde{\mathds{1}}_{B_1} \right)\notag \\
		&= 2-\frac{4}{\sqrt{d_1}\left(d_{12}\right)_{1/2}}\sum_{i=1}^{d_1}\binom{1/2}{i}\binom{1/2}{i-1}\frac{(d_2)_{3/2-i}}{(d_1+1)_{-i}}.
	\end{align}
	
	% Similarly,  we can transform the expectation value into the following form, 
	%     \begin{align}
		%         \underset{\ket{\psi}\sim \mu(\caH_{B_1B_2})}{\bbE}
		%         [D(\psi_{B_2},\widetilde{\mathds{1}}_{B_2})]
		%         &\ge 2- 2\underset{\ket{\psi}\sim \mu(\caH_{B_1B_2})}{\bbE} F(\psi_{B_2}, \widetilde{\mathds{1}}_{B_2}). 
		%     \end{align}
	For another inequality, according to the Schmidt decomposition, the non-zero eigenvalues of $\psi_{B_1}$ and $\psi_{B_2}$ are identical. Let the eigenvalues of $\psi_{B_1}$ be $\{\lambda_i\}$, then
	\begin{align}
		F\left(\psi_{B_2},\widetilde{\mathds{1}}_{B_2}\right)
		=\sum_{i=1}^{d_1}\sqrt{\frac{\lambda_i}{d_2}}
		=\sqrt{\frac{d_1}{d_2}}F\left(\psi_{B_1},\widetilde{\mathds{1}}_{B_1}\right).
	\end{align}
	So similarly, the expectation vlaue for trace distances can be lower-bounded by the average fidelity
	\begin{align}
		\subinaAID_{B_1}(\caH_{B_1B_2})
		&=\underset{\ket{\psi}\sim\mu(\caH_{B_1B_2})}{\bbE}  \left\|\psi_{B_2}-\widetilde{\mathds{1}}_{B_2}\right\|_1
		\geq 2-2\left[\underset{\ket{\psi}\sim\mu(\caH_{B_1B_2})}{\bbE} F\left(\psi_{B_1},\widetilde{\mathds{1}}_{B_2}\right)\right]\notag \\
		&\geq 2-\frac{4}{\sqrt{d_2}\left(d_{12}\right)_{1/2}}\sum_{i=1}^{d_1}\binom{1/2}{i}\binom{1/2}{i-1}\frac{(d_2)_{3/2-i}}{(d_1+1)_{-i}}.
	\end{align}
\end{proof}

Combining two previous propositions, the summation of these two expectation values has a lower bound as summarized in the following lemma.
\begin{lemma}
	\label{lem:average_identity_whole_space}
	Suppose $\ket{\psi}$ is a Haar random pure state in the bipartite Hilbert space $\caH_{B_1B_2}$ with $d_1\le d_2$. Then
	\begin{align}
		\subinaAID_{B_1}(\caH_{B_1B_2})+\subinaAID_{B_2}(\caH_{B_1B_2})\geq \frac{1}{3}.
	\end{align}
\end{lemma}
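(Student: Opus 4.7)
The plan is to invoke the preceding proposition, which supplies explicit Fuchs--van de Graaf lower bounds for each of $\subinaAID_{B_1}$ and $\subinaAID_{B_2}$. Setting $S := \bbE_{\ket{\psi}\sim \mu(\caH_{B_1B_2})}[\sum_i \sqrt{\lambda_i}]$ with $\{\lambda_i\}_{i=1}^{d_1}$ the Schmidt coefficients squared of $\ket{\psi}$, and using that $\psi_{B_1}$ and $\psi_{B_2}$ share the same non-zero spectrum, the two bounds combine to yield
$$\subinaAID_{B_1}(\caH_{B_1B_2}) + \subinaAID_{B_2}(\caH_{B_1B_2}) \geq 4 - 2 S\left(\frac{1}{\sqrt{d_1}} + \frac{1}{\sqrt{d_2}}\right).$$
Thus the lemma reduces to the uniform bound $S(d_1^{-1/2} + d_2^{-1/2}) \leq 11/6$.

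I would then split into two regimes (recall $d_1 \leq d_2$). When $d_2 \geq 3 d_1/2$, the lemma is immediate from an elementary rank argument: $\psi_{B_2}$ has rank at most $d_1$, so at least $d_2 - d_1$ of its eigenvalues vanish, contributing $(d_2 - d_1)/d_2 \geq 1/3$ to $\|\psi_{B_2} - \widetilde{\md{1}}_{B_2}\|_1$ for every pure state, giving $\subinaAID_{B_2} \geq 1/3$ on its own. In the complementary regime $d_1 \leq d_2 < 3 d_1/2$, by symmetry the worst case is $d_1 = d_2 = d$, where the target sharpens to the one-variable inequality $S/\sqrt{d} \leq 11/12$. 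Here I would use the closed-form hypergeometric sum for $S$ in the preceding proposition specialized to $d_1 = d_2$. Two endpoint checks exhibit ample slack: at $d = 2$ one computes $S/\sqrt{d} = 22\sqrt{2}/35 \approx 0.889$, while asymptotically $S/\sqrt{d} \to 8/(3\pi) \approx 0.849$ via the Marchenko--Pastur density on $[0,4]$.

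The main obstacle is establishing $S/\sqrt{d} \leq 11/12$ uniformly for all finite $d \geq 2$. Since the endpoint values leave roughly $7\%$ slack, a monotonicity argument in $d$ applied to the Pochhammer-symbol expression for $S$ should close the gap. Alternatively, one may avoid the closed form entirely and bound $S$ via moment inequalities: the naive Cauchy--Schwarz estimate $S \leq \sqrt{d}$ needs only an $O(1)$ correction, which should be extractable from the exact second moment $\bbE[\mathrm{Tr}(\psi_{B_1}^2)] = (d_1 + d_2)/(d_{12}+1)$ together with the concavity of $\sqrt{\cdot}$, since the Schmidt spectrum is typically spread in a band of width $O(1/\sqrt{d})$ around $1/d$ rather than being exactly uniform.
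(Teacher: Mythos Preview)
Your high-level reduction is exactly the paper's: Fuchs--van de Graaf plus the shared Schmidt spectrum give
\[
\subinaAID_{B_1}+\subinaAID_{B_2}\;\geq\; 4 - 2S\Bigl(\tfrac{1}{\sqrt{d_1}}+\tfrac{1}{\sqrt{d_2}}\Bigr),
\]
and the task becomes $S(d_1^{-1/2}+d_2^{-1/2})\le 11/6$. Your rank shortcut for $d_2\ge 3d_1/2$ is correct and pleasant (in fact it gives the stronger pointwise bound $\|\psi_{B_2}-\widetilde{\md 1}_{B_2}\|_1\ge 2(1-d_1/d_2)$, so even $d_2\ge 6d_1/5$ would have sufficed).

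The genuine gap is in the near-balanced regime. The sentence ``by symmetry the worst case is $d_1=d_2=d$'' is not justified: there is no symmetry argument available once you have restricted to $d_1\le d_2$, and $S$ depends on both dimensions through the induced eigenvalue law, so you are facing a two-parameter family, not a one-parameter one. You would need a monotonicity statement in $d_2$ (at fixed $d_1$) for the product $\bbE[F]\,(1+\sqrt{d_1/d_2})$, and the two factors move in opposite directions. Even after reducing to $d_1=d_2$, checking $d=2$ and $d\to\infty$ does not establish $S/\sqrt{d}\le 11/12$ for all $d$; your proposed monotonicity in $d$ is plausible but is another nontrivial step.

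The paper avoids both issues by working directly with the explicit hypergeometric sum for $S$ and handling all $(d_1,d_2)$ at once. The key observation is a sign analysis: every term with $i\ge 2$ in the sum is nonpositive, so $S$ is bounded above by the $i\in\{1,2\}$ truncation, which collapses to
\[
\Bigl(1+\sqrt{\tfrac{d_1}{d_2}}\Bigr)\,\frac{\sqrt{d_1}\,(d_2)_{1/2}}{(d_{12})_{1/2}}\Bigl(1-\tfrac{1}{8}\tfrac{d_1-1}{d_2-1/2}\Bigr).
\]
Two short calculus lemmas then finish it: $\sqrt{x}\,(y)_{1/2}/(xy)_{1/2}\le 1$ (via concavity of the digamma function) and $(1+\sqrt{n/m})(1-\tfrac{1}{8}\tfrac{n-1}{m-1/2})\le 11/6$ for $m\ge n\ge 2$ (by direct optimization). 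This is the missing technical ingredient; your moment-inequality idea could perhaps substitute, but the series truncation is what makes the argument go through cleanly and uniformly.
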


\begin{proof}
	Using the results of \pref{prop:lowerbound_B1B2}, we can obtain
	\begin{align}
		\subinaAID_{B_1}(\caH_{B_1B_2})+\subinaAID_{B_2}(\caH_{B_1B_2})
		% &=\underset{\ket{\psi}\sim\mu(\caH_{B_1B_2})}{\bbE} 
		% \left[\left\|\psi_{B_1}-\widetilde{\mathds{1}}_{B_1}\right\|_1+ \left\|\psi_{B_2}-\widetilde{\mathds{1}}_{B_2}\right\|_1\right] \notag \\
		\geq 4- \left(\frac{1}{\sqrt{d_1}} +\frac{1}{\sqrt{d_2}} \right)  \frac{4}{\left(d_{12}\right)_{1/2}}\sum_{i=1}^{d_1}\binom{1/2}{i}\binom{1/2}{i-1}\frac{(d_2)_{3/2-i}}{(d_1+1)_{-i}}.
	\end{align}
	To simplify the above inequality, we observe that the terms with $i\ge 2$ in the above summation satisfy
	\begin{align}
		\binom{1/2}{i}\binom{1/2}{i-1}\frac{(d_2)_{3/2-i}}{(d_1+1)_{-i}}
		&=\frac{\sqrt{\pi}}{2\Gamma(i+1) \Gamma(3/2-i)}\frac{\sqrt{\pi}}{2 \Gamma(i) \Gamma(5/2-i)}\frac{(d_2)_{3/2-i}}{(d_1+1)_{-i}} \notag \\
		&=\frac{(-1)^{2i-3}}{4^{2i-2}}\frac{(2i-2)!}{i!(i-1)!}\frac{(2i-4)!}{(i-1)!(i-2)!}\frac{(d_2)_{3/2-i}}{(d_1+1)_{-i}}\notag \\
		&\leq 0,
	\end{align}
	where the second equation follows from the fact that $\Gamma(1/2-n)=(-4)^n {n!}/{(2n)!}$ for non-negative integer $n$. Therefore, 
	\begin{align}
		\subinaAID_{B_1}(\caH_{B_1B_2})+\subinaAID_{B_2}(\caH_{B_1B_2})
		&\geq 4-\left(\frac{1}{\sqrt{d_1}} +\frac{1}{\sqrt{d_2}} \right)  \frac{4}{\left(d_{12}\right)_{1/2}} \sum_{i=1}^{2}\binom{1/2}{i}\binom{1/2}{i-1}\frac{(d_2)_{3/2-i}}{(d_1+1)_{-i}}\notag \\
		&=4-2\left(1 +\sqrt{\frac{d_1}{d_2}} \right)\frac{\sqrt{d_1}(d_2)_{1/2}}{(d_{12})_{1/2}}\left(1-\frac{1}{8}\frac{d_1-1}{d_2-1/2}\right) \notag \\
		&\ge 4-2\left(1 +\sqrt{\frac{d_1}{d_2}} \right)\left(1-\frac{1}{8}\frac{d_1-1}{d_2-1/2}\right) \notag\\
		&\ge \frac{1}{3},
	\end{align}
	where in the second and third line, we have used the results of \pref{prop:upperbound_for_parameter_in_average_fidelity} shown below.
	
\end{proof}

\begin{proposition}
	\label{prop:upperbound_for_parameter_in_average_fidelity}
	(1) Suppose $x,y\geq 1$. Then
	\begin{align}
		\sqrt{x}\frac{(y)_{1/2}}{(xy)_{1/2}}\leq 1.
	\end{align}  
	(2) Suppose $m\ge n\ge 2$. Then
	\begin{align}
		\left(1 +\sqrt{\frac{n}{m}} \right)\left(1-\frac{1}{8}\frac{n-1}{m-1/2}\right) \le \frac{11}{6}.
	\end{align}
\end{proposition}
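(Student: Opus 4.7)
My plan splits into two independent parts. For part~(1), the key move is to rewrite the inequality $\sqrt{x}\,(y)_{1/2} \le (xy)_{1/2}$ by dividing both sides by $\sqrt{xy}$, reducing it to $g(y) \le g(xy)$ where $g(z) := \Gamma(z+1/2)/(\sqrt{z}\,\Gamma(z))$. Since $x \ge 1$ gives $xy \ge y$, it then suffices to show that $g$ is non-decreasing on $(0,\infty)$. Differentiating $\log g(z) = \log\Gamma(z+1/2) - \tfrac{1}{2}\log z - \log\Gamma(z)$ reduces this to the classical digamma inequality $\psi(z+1/2) - \psi(z) \ge 1/(2z)$. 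The cleanest proof of this uses the Gauss integral representation of $\psi$ together with the factorization $(1-\sqrt{t})/(1-t) = 1/(1+\sqrt{t})$, which gives
\[
\psi(z+1/2) - \psi(z) = \int_0^1 \frac{t^{z-1}}{1+\sqrt{t}}\,dt,
\]
whereas $1/(2z) = \tfrac{1}{2}\int_0^1 t^{z-1}\,dt$; the inequality then follows from the pointwise bound $1/(1+\sqrt{t}) \ge 1/2$ on $[0,1]$.

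For part~(2), my plan is to reduce to the corner $n=m=2$ via two monotonicity steps. Set $h(n,m) := (1+\sqrt{n/m})(1-(n-1)/(8(m-1/2)))$. I will first argue that for each fixed $m \ge 2$ the function $h(n,m)$ is non-decreasing in $n \in [2,m]$, and then that the diagonal value $h(m,m)$ is decreasing in $m$ on $[2,\infty)$. Chaining these gives $h(n,m) \le h(m,m) \le h(2,2) = 11/6$.

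The diagonal step is elementary: one writes $h(m,m) = 2 - (m-1)/(4(m-1/2))$, whose derivative works out to $-1/(8(m-1/2)^2) < 0$, and checks that $h(2,2) = 2 - 1/6 = 11/6$. The substantive step is the monotonicity in $n$. Setting up $\partial h/\partial n \ge 0$, clearing denominators, and squaring (both sides are positive for $m \ge n \ge 2$, since $8m - 3n - 3 \ge 5n - 3 > 0$) reduces the claim to the quadratic inequality
\[
64m^2 - (48 + 52n)m + (9n^2 + 18n + 9) \ge 0
\]
in the variable $m$. A short discriminant computation shows that both roots of this quadratic (in $m$) are bounded above by $n$ exactly when $21(n-1)(n - 3/7) \ge 0$, which holds for all $n \ge 1$. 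This discriminant estimate is the only nontrivial calculation, and it is what forces the sharper bound $11/6$ over the weaker asymptotic value $7/4$ obtained by letting $m \to \infty$ with $n/m$ fixed.
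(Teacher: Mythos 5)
Your proof is correct, and its skeleton matches the paper's: part (1) reduces to the digamma inequality $\Psi(z+1/2)-\Psi(z)\ge 1/(2z)$, and part (2) maximizes first in $n$ (increasing on $[2,m]$) and then along the diagonal $n=m$ (decreasing in $m$), giving $11/6$ at $n=m=2$. The differences are only in how the two key facts are certified. For the digamma bound, the paper uses concavity of $\Psi$ (from $\Psi''<0$) to get $\Psi(z+1/2)\ge\tfrac12[\Psi(z)+\Psi(z+1)]=\Psi(z)+\tfrac{1}{2z}$, whereas you use the Gauss integral representation together with $1/(1+\sqrt{t})\ge 1/2$; both are fine, and your monotone function $g(z)=\Gamma(z+1/2)/(\sqrt{z}\,\Gamma(z))$ is just the paper's monotonicity in $x$ recast in the single variable $z=xy$. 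For the sign of $\partial h/\partial n$, the paper bounds the numerator directly, $8m-3n-2\sqrt{nm}-3\ge 8m-3m-2m-3=3(m-1)\ge 0$ using $n\le m$, a one-line argument; your route of squaring $8m-3n-3\ge 2\sqrt{nm}$ and analyzing the quadratic $64m^2-(52n+48)m+9(n+1)^2$ is valid (indeed $Q(n)=21(n-1)(n-3/7)\ge 0$ and the vertex $(13n+12)/32\le n$ for $n\ge 1$) but noticeably heavier. Two cosmetic points: your ``exactly when'' characterization of both roots lying below $n$ tacitly uses the vertex condition, which you should state; and the closing remark comparing $11/6$ with the asymptotic $7/4$ is worded backwards, since $11/6>7/4$ and $7/4$ is not a valid bound ($h(2,2)=11/6$) --- this is only commentary and does not affect the proof.
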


\begin{proof}
	(1) To prove the above proposition, we need to use two properties of the digamma function $\Psi(z):=\frac{d}{dz}\Gamma(z)$, shown as
	\begin{gather}
		\Psi(z+1)=\Psi(z)+\frac{1}{z}, \quad
		\frac{d^m}{dz^m}\Psi(z)=(-1)^{m+1}m!\sum_{k=0}^{\infty}\frac{1}{(z+k)^{m+1}}.
	\end{gather}
	When $z$ is real and positive, the second derivative of $\Psi(z)$ is negative, then $\Psi(z)$ is concave, so
	\begin{align}
		\Psi\left(z+1/2\right)\ge \frac{1}{2}\left[\Psi\left(z+1\right) + \Psi\left(z\right) \right]= \Psi\left(z\right)+ \frac{1}{2z}.
	\end{align}
	Then we can find that 
	\begin{align}
		\frac{d}{d x}\left[\sqrt{x}\,\frac{(y)_{1/2}}{(xy)_{1/2}}\right]
		=\sqrt{x}\,y\frac{(y)_{1/2}}{(xy)_{1/2}}\left[\frac{1}{2xy}+\Psi(xy)-\Psi\left(xy+1/2\right)\right]\leq 0,
	\end{align}
	which means the function is monotonically decreasing with respect to $x$, and thus
	\begin{align}
		\sqrt{x}\,\frac{(y)_{1/2}}{(xy)_{1/2}}\leq \sqrt{x}\,\frac{(y)_{1/2}}{(xy)_{1/2}}\Bigg|_{x=1}=1.
	\end{align}
	(2) This conclusion can be directly derived by differentiating the function on the left hand side that
	\begin{align}
		\frac{d}{d n}\left(1 +\sqrt{\frac{n}{m}} \right)\left(1-\frac{1}{8}\frac{n-1}{m-1/2}\right)
		= \frac{8m-3n-2\sqrt{nm}-3}{8(2m-1)\sqrt{nm}} \ge \frac{3m-3}{8(2m-1)\sqrt{nm}} \ge 0,
	\end{align} 
	then the function $\left(1 +\sqrt{\frac{n}{m}} \right)\left(1-\frac{1}{8}\frac{n-1}{m-1/2}\right)$ is increasing for variable $n$. Since $m\ge n \ge 2$,
	\begin{align}
		\left(1 +\sqrt{\frac{n}{m}} \right)\left(1-\frac{1}{8}\frac{n-1}{m-1/2}\right)
		\le \left(1 +\sqrt{\frac{n}{m}} \right)\left(1-\frac{1}{8}\frac{n-1}{m-1/2}\right)\Bigg|_{n=m}
		= 2\,\frac{7m-3}{8m-4} 
		\le 2\,\frac{7m-3}{8m-4}\Bigg|_{m=2}
		=\frac{11}{6}.
	\end{align}
\end{proof}

Having proved \lref{lem:average_identity_whole_space}, for now, we can easily prove \thref{theo:expectation_subspace_sum_trace_distance_lower_bound}.
\begin{proof}
	According to \pref{prop:equivalence_aqim_inaccuracy_variation_maximal}, when $\scrC=\caH_{B_1B_2}$, $\Pi_{\scrC}^{(X)}=\Pi_{B_1B_2}^{(X)}=\widetilde{\md{1}}_X$ for $X\in\{B_1,B_2\}$ and then
	\begin{align}
		\subinaAID_X(\caH_{B_1B_2})\leq \subvarA_X(\caH_{B_1B_2}) \leq 2\subinaAID_X(\caH_{B_1B_2}).
	\end{align}
	Then combine this result with \pref{prop:expectation_trace_distance_of_subspace_and_states_in_subspace} and \lref{lem:average_identity_whole_space} and we can get
	\begin{align}
		\underset{\caH_C}{\bbE}\,\left[\subvarA_{B_1}(\caH_C)+\subvarA_{B_2}(\caH_C)\right] 
		&= \frac{(2d_C-2)(2d_{12}-1)}{(2d_C-1)(2d_{12}-2)}
		[\subvarA_{B_1}(\caH_{B_1B_2})+\subvarA_{B_2}(\caH_{B_1B_2})] \notag \\
		&\ge \frac{(2d_C-2)(2d_{12}-1)}{(2d_C-1)(2d_{12}-2)} [\subinaAID_{B_1}(\caH_{B_1B_2})+\subinaAID_{B_2}(\caH_{B_1B_2})] \notag \\
		&\geq \frac{1}{3}\frac{(2d_C-2)(2d_{12}-1)}{(2d_C-1)(2d_{12}-2)}.
	\end{align}
	Thus,
	\begin{align}
		\underset{\caH_C}{\bbE}\,[\subvarA(\caH_C)]\geq \frac{1}{2}\,\underset{\caH_C}{\bbE}\left[\subvarA_{B_1}(\caH_C)+\subvarA_{B_2}(\caH_C)\right] \geq \frac{1}{6}\frac{(2d_C-2)(2d_{12}-1)}{(2d_C-1)(2d_{12}-2)}.
	\end{align}
\end{proof}

\subsection{Proof of \thref{theo:lower_bound_expectation_sum_subspace_bipartite}}
\label{appendix:lower_bound_expectation_sum_subspace_bipartite}

To get the concentration result about $\subvarA_X(\caH_C)$ on the Grassmannian $\mathrm{Gr}(\caH_{B_1B_2},d_C)$, we also need to calculate the Lipschitz constant. Here, according to Appendix B.4. of \cite{aubrun2017alice}, the corresponding metric $M_{\mathrm{Gr}}(\caH_{C_1},\caH_{C_1})$ satisfies the following condition.

Given a fixed subspace $\caH_{C_0}$ in the Grassmannian $\mathrm{Gr}(\caH_{B_1B_2},d_C)$, any subspace $\caH_{C}\in\mathrm{Gr}(\caH_{B_1B_2},d_C)$ is equal to the fixed subspace $\caH_{C_0}$ up to a unitary $U$ as $\caH_C=U\caH_{C_0}$. Suppose $\caH_{C_1}=U_1\caH_{C_0}$ and $\caH_{C_2}=U_2\caH_{C_0}$, then the metric on $\mathrm{Gr}(\caH_{B_1B_2},d_C)$ satisfies
\begin{align}
	M_{\mathrm{Gr}}(\caH_{C_1},\caH_{C_2})\geq \sqrt{2}\, \| U_1-U_2\|_{\infty}.
\end{align}

% At the same time, $\ket{\psi_0}, \ket{\phi_0}$ are transformed into $\ket{\psi}=U\ket{\psi_0}$ and $\ket{\phi}=U\ket{\phi_0}$. 

Next, we introduce the following result about $\kappa$-Lipschitz function on a Grassmannian space, which generalizes the result about $1$-Lipschitz function in \cite{aubrun2017alice}. 
\begin{lemma}
	\label{lem:concentration_grass}
	Suppose $f$ is a $\kappa$-Lipschitz function defined on the Grassmannian $\mathrm{Gr}(\caH_{B_1B_2},d_C)$, then for $\alpha\ge 0$,
	\begin{align}
		\Pr\left\{ f(\caH_C) \gtrless \bbE[f] \pm \alpha \right\}
		\le \exp\left(-\frac{d_{12}\alpha^2}{2\kappa^2}\right),
	\end{align}
	where $\bbE[f]$ is the expectation value of $f$ with respect to the uniform distribution on $\mathrm{Gr}(\caH_{B_1B_2},d_C)$. 
\end{lemma}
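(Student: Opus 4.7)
The plan is to reduce the concentration statement on the Grassmannian to a Lévy-type concentration inequality on the unitary group $\rmU(d_{12})$, which is a well-developed tool (see, e.g., Appendix F of \cite{aubrun2017alice}). The three steps are: (i) lift $f$ to a function $\tilde{f}$ on $\rmU(d_{12})$; (ii) transfer the Lipschitz property using the quoted comparison between $M_{\mathrm{Gr}}$ and the operator-norm distance of unitary lifts; (iii) apply the standard concentration inequality on $\rmU(d_{12})$.

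Concretely, I would fix an arbitrary reference subspace $\caH_{C_0}$ of dimension $d_C$ and define $\tilde{f}: \rmU(d_{12}) \to \bbR$ by $\tilde{f}(U) := f(U \caH_{C_0})$. Because the map $U \mapsto U \caH_{C_0}$ pushes the Haar measure on $\rmU(d_{12})$ forward to the uniform measure on $\mathrm{Gr}(\caH_{B_1B_2}, d_C)$, we have $\bbE_{U\sim\mu(d_{12})}[\tilde{f}(U)] = \bbE_{\caH_C}[f]$ and the corresponding tail probabilities coincide. Hence it suffices to prove the stated bound for $\tilde{f}$ under Haar sampling of $U$.

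Next, I would use the relationship between the Grassmannian quotient metric and the operator-norm distance of unitary lifts stated just before the lemma to transfer Lipschitz continuity: for suitable choices of unitary lifts,
\begin{align}
|\tilde{f}(U_1) - \tilde{f}(U_2)| \le \kappa\, M_{\mathrm{Gr}}(U_1 \caH_{C_0}, U_2 \caH_{C_0}) \le \sqrt{2}\,\kappa\, \|U_1 - U_2\|_\infty,
\end{align}
so $\tilde{f}$ is $(\sqrt{2}\kappa)$-Lipschitz on $\rmU(d_{12})$ with respect to the operator norm. Finally, I would invoke the standard Lévy-type concentration on the unitary group: for any $L$-Lipschitz function $g$ on $\rmU(d)$ with respect to the operator norm, $\Pr\{g(U) \gtrless \bbE[g] \pm t\} \le \exp(-d\, t^2 / L^2)$ for $U \sim \mu(d)$. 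Substituting $d = d_{12}$, $L = \sqrt{2}\kappa$, and $t = \alpha$ yields the claimed bound $\exp(-d_{12}\alpha^2/(2\kappa^2))$.

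The principal technical hurdle will be bookkeeping the numerical constants: one must confirm that the metric comparison is applied in the direction needed to upper-bound $|\tilde{f}(U_1) - \tilde{f}(U_2)|$ by a multiple of $\|U_1 - U_2\|_\infty$ (complementary to the lower bound displayed in the excerpt), and that the concentration inequality on $\rmU(d_{12})$ carries exactly the exponent $-d\,t^2/L^2$ rather than a marginally different one. Once these constants are pinned down, the lemma is a direct concatenation of the pullback, the Lipschitz transfer, and the unitary-group concentration, mirroring the analogous argument for Haar-random pure states but with the exponential coefficient adjusted to account for the quotient structure of the Grassmannian.
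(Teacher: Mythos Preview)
The paper does not actually prove this lemma; it simply states it as ``generalizing the result about $1$-Lipschitz functions in \cite{aubrun2017alice}'', the generalization being the trivial rescaling $f\mapsto f/\kappa$. So there is no paper proof to compare against beyond a citation.

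Your reduction to the unitary group is the right general mechanism for passing concentration to a homogeneous quotient, but the argument as written has a real gap at step (ii). You need an \emph{upper} bound $M_{\mathrm{Gr}}(U_1\caH_{C_0},U_2\caH_{C_0})\le c\,\|U_1-U_2\|$ in order to conclude that $\tilde f$ is Lipschitz on the group, whereas the inequality displayed just before the lemma is the \emph{lower} bound $M_{\mathrm{Gr}}\ge\sqrt{2}\,\|U_1-U_2\|_\infty$. You correctly flag the direction mismatch but do not resolve it. The displayed lower bound is used by the paper for the \emph{opposite} purpose---to certify that a concrete function on the Grassmannian has small Lipschitz constant by bounding $|f(\caH_{C_1})-f(\caH_{C_2})|$ in terms of $\|U_1-U_2\|_\infty$ and then dividing by $M_{\mathrm{Gr}}$---not to transfer Lipschitzness \emph{to} the group. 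The bound you actually need is the generic fact that the quotient map is $1$-Lipschitz for the quotient metric, i.e.\ $M_{\mathrm{Gr}}(U_1\caH_{C_0},U_2\caH_{C_0})\le d_{\mathrm{SU}}(U_1,U_2)$, which holds with respect to the metric from which $M_{\mathrm{Gr}}$ is induced, not with the factor $\sqrt{2}$ and the operator norm you wrote.

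A second, related issue is the constant: the concentration statement you quote on $\rmU(d)$ with exponent exactly $-d\,t^2/L^2$ for the operator norm is not the standard form---compare the paper's own \lref{lem:concentration_SU}, which uses the Hilbert--Schmidt metric and has exponent $-d\,t^2/(4L^2)$. If you lift $f$ with the same Lipschitz constant $\kappa$ and apply that lemma, you obtain $\exp(-d_{12}\alpha^2/(4\kappa^2))$, a factor of two short of the stated bound. The sharper exponent $d_{12}/(2\kappa^2)$ in the Grassmannian lemma comes from the intrinsic log-Sobolev/Ricci constant of the complex Grassmannian as computed directly in \cite{aubrun2017alice}, not from the quotient reduction. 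Your route therefore proves a qualitatively equivalent but numerically weaker version of the lemma unless you invoke the Grassmannian's own curvature bound.
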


Now we can introduce the following proposition.

\begin{proposition}
	\label{prop:Lipschitz_constant_for_1norm_2states_subspace}
	Suppose $\caH_C$ is a random subspace in the Grassmannian $\mathrm{Gr}(\caH_{B_1B_2},d_C)$. Then the Lipschitz constant of $\subvarA_X(\caH_C)$ is $4/\sqrt{2}$ for both $X=B_1$ and $B_2$.    
\end{proposition}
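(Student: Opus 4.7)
The plan is to bound the difference $|\subvarA_X(\caH_{C_1}) - \subvarA_X(\caH_{C_2})|$ for two arbitrary subspaces in $\mathrm{Gr}(\caH_{B_1B_2},d_C)$ in terms of the Grassmannian metric, using the characterization $\caH_{C_i}=U_i\caH_{C_0}$ relative to a fixed reference subspace $\caH_{C_0}$. By the left invariance of the Haar measure on $\caH_{C_0}$, I can rewrite
\begin{equation*}
\subvarA_X(U_i\caH_{C_0})=\underset{\ket{\psi_0},\ket{\psi_0'}\sim\mu(\caH_{C_0})}{\bbE}\bigl[D\bigl((U_i\psi_0 U_i^\dagger)_X,(U_i\psi_0' U_i^\dagger)_X\bigr)\bigr],
\end{equation*}
so that the two expectations are defined over the \emph{same} sample space; the difference can then be pushed inside the expectation.

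Next, I would apply the standard triangle inequality $|D(a,b)-D(c,d)|\le D(a,c)+D(b,d)$ pointwise to bound the integrand by $D((U_1\psi_0 U_1^\dagger)_X,(U_2\psi_0 U_2^\dagger)_X)+D((U_1\psi_0' U_1^\dagger)_X,(U_2\psi_0' U_2^\dagger)_X)$. Contractivity of the trace distance under the partial trace $\Tr_{X^c}$ then reduces each term to the full-system quantity $\|U_1\psi_0 U_1^\dagger-U_2\psi_0 U_2^\dagger\|_1$, which is a difference of two pure-state projectors.

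The main analytic input is the well-known pure-state bound: for any unit vector $\ket{\psi_0}$,
\begin{equation*}
\bigl\|U_1\ket{\psi_0}\!\bra{\psi_0}U_1^\dagger-U_2\ket{\psi_0}\!\bra{\psi_0}U_2^\dagger\bigr\|_1\le 2\bigl\|(U_1-U_2)\ket{\psi_0}\bigr\|\le 2\|U_1-U_2\|_\infty,
\end{equation*}
obtained via $\||\phi\>\<\phi|-|\chi\>\<\chi|\|_1=2\sqrt{1-|\<\phi|\chi\>|^2}\le 2\||\phi\>-|\chi\>\|$ after absorbing the optimal phase, followed by the operator-norm bound. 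Putting the two contributions together yields $|\subvarA_X(\caH_{C_1}) - \subvarA_X(\caH_{C_2})|\le 4\|U_1-U_2\|_\infty$, and combining with the Grassmannian metric inequality $M_{\mathrm{Gr}}(\caH_{C_1},\caH_{C_2})\ge\sqrt{2}\,\|U_1-U_2\|_\infty$ recalled just before the statement produces the claimed Lipschitz constant $4/\sqrt{2}$.

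The proof is essentially a telescoping estimate, so I do not anticipate a serious obstacle; the only subtle point is making sure the pure-state trace-norm inequality is applied with the correct constant (the factor of $2$ rather than $\sqrt{2}$), which is where the final coefficient $4/\sqrt{2}$ rather than $2$ or $\sqrt{2}$ comes from. Once this is in place, both cases $X=B_1$ and $X=B_2$ are handled identically, since contractivity of the partial trace does not distinguish the subsystem.
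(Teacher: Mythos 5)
Your proposal is correct and follows essentially the same route as the paper: couple the two average variations over a common sample space via $\caH_{C_i}=U_i\caH_{C_0}$, push the difference inside the expectation, use the triangle inequality and contractivity of the partial trace to reduce to $\|U_1\psi_0U_1^\dagger-U_2\psi_0U_2^\dagger\|_1$, and compare with the Grassmannian metric bound $M_{\mathrm{Gr}}\ge\sqrt{2}\,\|U_1-U_2\|_\infty$. The only cosmetic difference is the last estimate, where you use the exact pure-state trace-norm formula to get $2\|(U_1-U_2)\ket{\psi_0}\|\le 2\|U_1-U_2\|_\infty$, while the paper telescopes $U_1\psi_0U_1^\dagger-U_2\psi_0U_2^\dagger$ and applies $\|ABC\|_1\le\|A\|_\infty\|B\|_1\|C\|_\infty$; both give the same constant $4/\sqrt{2}$.
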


\begin{proof}
	Since the proofs for the cases of $X=B_1$ and $B_2$ are identical, thus we focus on the case of $X=B_1$. Suppose $\ket{\psi_1},\ket{\phi_1}\in \caH_{C_1}$ and $\ket{\psi_2},\ket{\phi_2}\in \caH_{C_2}$. Then the Lipschitz constant of $\subvarA_{B_1}$ can be calcuated as
	\begin{align}
		&\left| \subvarA_{B_1}(\caH_{C_1}) -\subvarA_{B_1}(\caH_{C_2}) \right| \notag\\
		&= \left|\underset{\substack{W_1, W_2\\ \sim \mu(d_C)} }{\bbE}\left[ \left\| \Tr_{B_2}( W_1\psi_1 W_1^{\dagger} ) -\Tr_{B_2}( W_2\phi_1 W_2^{\dagger} ) \right\|_1 \right] -
		\underset{\substack{W_1, W_2\\ \sim \mu(d_C)} }{\bbE}\left[ \left\| \Tr_{B_2}( W_1\psi_2 W_1^{\dagger} ) -\Tr_{B_2}( W_2\phi_2 W_2^{\dagger} ) \right\|_1 \right] \right| \notag\\
		&\le \underset{\substack{W_1, W_2\\ \sim \mu(d_C)} }{\bbE} \left| \left\| \Tr_{B_2}( W_1\psi_1 W_1^{\dagger} ) -\Tr_{B_2}( W_2\phi_1 W_2^{\dagger} ) \right\|_1 -
		\left\| \Tr_{B_2}( W_1\psi_2 W_1^{\dagger} ) -\Tr_{B_2}( W_2\phi_2 W_2^{\dagger} ) \right\|_1\right| \notag\\
		&\le \underset{\substack{W_1, W_2\\ \sim \mu(d_C)} }{\bbE}  \left\| \Tr_{B_2}( W_1\psi_1 W_1^{\dagger} ) -\Tr_{B_2}( W_2\phi_1 W_2^{\dagger} ) -
		\Tr_{B_2}( W_1\psi_2 W_1^{\dagger} ) +\Tr_{B_2}( W_2\phi_2 W_2^{\dagger} ) \right\|_1 \notag\\
		&\le  \underset{\substack{W_1, W_2\\ \sim \mu(d_C)} }{\bbE} \left( \left\| \Tr_{B_2}( W_1\psi_1 W_1^{\dagger} )  -
		\Tr_{B_2}( W_1\psi_2 W_1^{\dagger} )\right\|_1
		+   \left\| \Tr_{B_2}( W_2\phi_2 W_2^{\dagger} )  -\Tr_{B_2}( W_2\phi_1 W_2^{\dagger} ) \right\|_1 \right) \notag\\
		&\le  \underset{\substack{W_1, W_2\\ \sim \mu(d_C)} }{\bbE} \left( \left\| \psi_1 - \psi_2 \right\|_1
		+   \left\| \phi_2  -\phi_1 \right\|_1 \right).
	\end{align}
	where we have used the triangle inequality and the contractive property of trace distance $\|\mathrm{Tr}_B(\rho)\|_1\leq\|\rho\|_1$. Next, we can further assume that $\caH_{C_1}=U_1\caH_C$ and $\caH_{C_2}=U_2\caH_C$ with $U_1,U_2\in \mathrm{SU}(d_{12})$, then $\psi_1=U_1 \psi_0 U_1^{\dagger}$ and $\psi_2=U_2 \psi_0 U_2^{\dagger}$, and thus
	\begin{align}
		\left\| \psi_1 - \psi_2 \right\|_1
		&= \left\| U_1 \psi_0 U_1^{\dagger} - U_2 \psi_0 U_2^{\dagger} \right\|_1 
		= \left\| (U_1-U_2) \psi_0 U_1^{\dagger} - U_2 \psi_0 (U_1^{\dagger}-U_2^{\dagger} )\right\|_1 \notag \\
		&\le \| U_1-U_2\|_{\infty} \|\psi_0\|_1 \|U_1\|_{\infty} + \|U_2\|_{\infty} \|\psi_0\|_1 \| U_1-U_2\|_{\infty} 
		\le 2 \| U_1-U_2\|_{\infty}.
		\label{eq:clucuate_lip_constant_function_q_1}
	\end{align}
	Similarly, we can find that $\left\| \phi_1  -\phi_2 \right\|_1\leq 2 \| U_1-U_2\|_{\infty}$. Hence, we get 
	\begin{align}
		\frac{\left| \subvarA_{B_1}(\caH_{C_1}) -\subvarA_{B_1}(\caH_{C_2}) \right|}{M_{\mathrm{Gr}}(\caH_{C_1},\caH_{C_1})}
		\le\frac{\left| \subvarA_{B_1}(\caH_{C_1}) -\subvarA_{B_1}(\caH_{C_2}) \right|}{\sqrt{2}\,\| U_1-U_2\|_{\infty}}
		\le \frac{4}{\sqrt{2}}.
		\label{eq:clucuate_lip_constant_function_q_2}
	\end{align}
	So the Lipschitz constant of $\subvarA_{B_1}(\caH_C)$ is $4/\sqrt{2}$.
\end{proof}

Now, the proof of \thref{theo:lower_bound_expectation_sum_subspace_bipartite} is straightforward as follows.
\begin{proof}
	According to \pref{prop:Lipschitz_constant_for_1norm_2states_subspace}, we can easily find that the Lipschitz constant of function $\subvarA_{B_1}(\caH_{C_1})+\subvarA_{B_2}(\caH_{C_1})$ is $4\sqrt{2}$ as
	\begin{align}
		&\big| [\subvarA_{B_1}(\caH_{C_1})+\subvarA_{B_2}(\caH_{C_1})] -[\subvarA_{B_1}(\caH_{C_2})+\subvarA_{B_2}(\caH_{C_2})] \big| \notag \\
		&\quad\leq \left| \subvarA_{B_1}(\caH_{C_1}) -\subvarA_{B_1}(\caH_{C_2}) \right|+\left| \subvarA_{B_2}(\caH_{C_1}) -\subvarA_{B_2}(\caH_{C_2}) \right| 
		\leq 4\sqrt{2}.
	\end{align}
	Therefore, through combining \lref{lem:concentration_grass} and \thref{theo:expectation_subspace_sum_trace_distance_lower_bound}, we can directly obtain the following concentration result
	\begin{align}
		\Pr\left\{ \subvarA(\caH_C)< \frac{1}{6}\frac{(2d_C-2)(2d_{12}-1)}{(2d_C-1)(2d_{12}-2)}-\alpha\right\}
		&\leq\Pr\left\{ \subvarA_{B_1}(\caH_C)+\subvarA_{B_2}(\caH_C) < \frac{1}{3}\frac{(2d_C-2)(2d_{12}-1)}{(2d_C-1)(2d_{12}-2)}-2\alpha\right\} \notag \\
		&\leq \Pr\left\{ \subvarA_{B_1}(\caH_C)+\subvarA_{B_2}(\caH_C) < \underset{\caH_C}{\bbE}[\subvarA_{B_1}(\caH_C)+\subvarA_{B_2}(\caH_C)]-2\alpha\right\} \notag \\
		&\leq \exp\left( -\frac{d_{12}\alpha^2}{16} \right).
	\end{align}
	
\end{proof}

\subsection{Additional concentration results on bipartite systems}

\subsubsection{Concentration results on $\subvarA_{B_1}(\caH_C)$ and $\subvarA_{B_2}(\caH_C)$ separately }

In the previous subsections, we have obtained universal lower bounds for the expectation values of $\bbE_{\caH_C}[\subvarA_{B_1}(\caH_C)]$ and $\bbE_{\caH_C}[\subvarA_{B_2}(\caH_C)]$. Using a similar method, we can also easily obtain concentration results on $\subvarA_{B_1}(\caH_C)$ and $\subvarA_{B_2}(\caH_C)$ separately, which are similar to \thref{theo:lower_bound_expectation_sum_subspace_bipartite}.

\begin{lemma}
	\label{lem:lower_bound_B1B2_bipartite}
	Suppose $\caH_C$ is a random subspace of $\caH_{B_1B_2}$, uniformly sampled from $\mathrm{Gr}(\caH_{B_1B_2},d_C)$. Then for $\alpha>0$
	\begin{gather}
		\Pr\left\{ \subvarA_{B_1}(\caH_C)< v_1-\alpha\right\}\le \exp\left( -\frac{d_{12}\alpha^2}{16} \right), \\
		\Pr\left\{ \subvarA_{B_2}(\caH_C)< v_2-\alpha\right\} 
		\le \exp\left( -\frac{d_{12}\alpha^2}{16} \right),
	\end{gather}
	where
	\begin{gather}
		v_1=\frac{(2d_C-2)(2d_{12}-1)}{(2d_C-1)(2d_{12}-2)}\left[2-\frac{4}{\sqrt{d_1}\left(d_1d_1\right)_{1/2}}\sum_{i=1}^{d_1}\binom{1/2}{i}\binom{1/2}{i-1}\frac{(d_1)_{3/2-i}}{(d_1+1)_{-i}}\right],\\
		v_2=\frac{(2d_C-2)(2d_{12}-1)}{(2d_C-1)(2d_{12}-2)}\left[ 2-\frac{4}{\sqrt{d_2}\left(d_{12}\right)_{1/2}}\sum_{i=1}^{d_1}\binom{1/2}{i}\binom{1/2}{i-1}\frac{(d_2)_{3/2-i}}{(d_1+1)_{-i}} \right].
	\end{gather}
\end{lemma}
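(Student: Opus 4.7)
The plan is to mimic the proof of \thref{theo:lower_bound_expectation_sum_subspace_bipartite}, applied to a single reduced state rather than the sum over both subsystems. The argument factors naturally into two steps: first, a deterministic lower bound on $\bbE_{\caH_C}[\subvarA_X(\caH_C)]$ for $X\in\{B_1,B_2\}$; second, a Gaussian-type concentration inequality on the Grassmannian. I would treat $X=B_1$ and $X=B_2$ in parallel, since they are structurally identical and the input bounds from \pref{prop:lowerbound_B1B2} differ only in which factor of $\sqrt{d_1}$ or $\sqrt{d_2}$ appears.

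For the first step, I would apply \pref{prop:expectation_trace_distance_of_subspace_and_states_in_subspace} to rewrite the expectation as $\frac{(2d_C-2)(2d_{12}-1)}{(2d_C-1)(2d_{12}-2)}\,\subvarA_X(\caH_{B_1B_2})$. Observing that $\Pi_X(\caH_{B_1B_2}) = \widetilde{\md{1}}_X$, the general inequality $\subinaA_X \leq \subvarA_X$ from \pref{prop:equivalence_aqim_inaccuracy_variation_maximal} specializes in this case to $\subinaAID_X(\caH_{B_1B_2}) \leq \subvarA_X(\caH_{B_1B_2})$. Then \pref{prop:lowerbound_B1B2}, which couples the Fuchs--van de Graaf inequality with the explicit random-matrix-theory formula for $\bbE[F(\psi_X,\widetilde{\md{1}}_X)]$, supplies the closed-form lower bound on $\subinaAID_X(\caH_{B_1B_2})$ that, after multiplication by the universal prefactor, produces exactly the constants $v_1$ and $v_2$ stated in the lemma.

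For the second step, I would invoke \lref{lem:concentration_grass} for $\kappa$-Lipschitz functions on the Grassmannian. \pref{prop:Lipschitz_constant_for_1norm_2states_subspace} already identifies the Lipschitz constant of $\caH_C \mapsto \subvarA_X(\caH_C)$ as $\kappa = 4/\sqrt{2}$ with respect to the operator-norm metric, so $2\kappa^2 = 16$ and the concentration factor becomes $\exp(-d_{12}\alpha^2/16)$. Combining with the first step, the event $\{\subvarA_X(\caH_C) < v_X - \alpha\}$ is contained in $\{\subvarA_X(\caH_C) < \bbE_{\caH_C}[\subvarA_X(\caH_C)] - \alpha\}$, and the concentration inequality immediately delivers the desired tail bound.

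No step presents a substantive obstacle: every ingredient has been assembled earlier in the appendix, and the present lemma is essentially a bookkeeping consolidation of \pref{prop:expectation_trace_distance_of_subspace_and_states_in_subspace}, \pref{prop:lowerbound_B1B2}, \pref{prop:Lipschitz_constant_for_1norm_2states_subspace}, and \lref{lem:concentration_grass}. The only subtlety worth highlighting in the write-up is the identification $\Pi_X(\caH_{B_1B_2}) = \widetilde{\md{1}}_X$, which is what allows the $\subinaAID_X$ bound from \pref{prop:lowerbound_B1B2} to feed directly into an estimate for $\subvarA_X$ without any additional looseness beyond the single application of triangle inequality encoded in \pref{prop:equivalence_aqim_inaccuracy_variation_maximal}.
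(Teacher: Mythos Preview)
Your proposal is correct and follows essentially the same route as the paper's own proof: identify $\Pi_X(\caH_{B_1B_2})=\widetilde{\md{1}}_X$ so that \pref{prop:equivalence_aqim_inaccuracy_variation_maximal} gives $\subinaAID_X(\caH_{B_1B_2})\le \subvarA_X(\caH_{B_1B_2})$, then combine \pref{prop:expectation_trace_distance_of_subspace_and_states_in_subspace}, \pref{prop:lowerbound_B1B2}, and \lref{lem:concentration_grass} (with the Lipschitz constant $4/\sqrt{2}$ from \pref{prop:Lipschitz_constant_for_1norm_2states_subspace}). The paper's proof is a one-line reference to exactly these ingredients, and your write-up simply spells them out.
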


\begin{proof}
	Similar to the previous case, since $\Pi_{B_1B_2}^{(X)}=\widetilde{\md{1}}_{X}$, according to \pref{prop:equivalence_aqim_inaccuracy_variation_maximal}
	\begin{align}
		\subinaAID_X(\caH_{B_1B_2})=\subinaA_X(\caH_{B_1B_2})\leq \subvarA_X(\caH_{B_1B_2}) \leq 2\subinaAID_X(\caH_{B_1B_2}).
	\end{align}
	Then the lemma can be proved directly by applying Propositions \ref{prop:expectation_trace_distance_of_subspace_and_states_in_subspace}, \ref{prop:lowerbound_B1B2} and \lref{lem:concentration_grass}.
\end{proof}

\subsubsection{Concentration results on Haar random states in $\caH_{B_1B_2}$}
\label{app:Concentration_results_in_B1B2}

In this section, we demonstrate that there exists a lower bound for the expectation value of the trace distance of
two distinct Haar random pure states in $\caH_{B_1B_2}$. Here we are interested in the function $E_X:\caH_{B_1B_2}\times \caH_{B_1B_2}\rightarrow \bbR^{+}$ defined as
\begin{align}
	E_X(\ket{\psi}, \ket{\phi})
	:=D(\psi_{X},\phi_{X})
	=\left\| \psi_{X}-\phi_{X} \right\|_1.
\end{align}
for $X=\{B_1,B_2\}$. Note that the lower bound of the expectation value of $E_{X}(\ket{\psi}, \ket{\phi})$ taken over Haar random pure states in $\caH_{B_1B_2}$ can be computed by Propositions \ref{prop:equivalence_aqim_inaccuracy_variation_maximal} and \ref{prop:lowerbound_B1B2}, then we only need to consider computing the Lipschitz constants.  

As function $E_X(\ket{\psi},\ket{\phi})$ are defined on the Cartesian product $\caH_{B_1B_2}\times \caH_{B_1B_2}$, then the Euclidean $2$-norm in the product space is defined as
\begin{align}
	\|(\ket{\psi_1}, \ket{\psi_2})-(\ket{\phi_1}, \ket{\phi_2}) \|_2
	:= \sqrt{ \|\ket{\psi_1}- \ket{\phi_1}\|_2^2 + \|\ket{\psi_2}- \ket{\phi_2}\|_2^2 }.
\end{align}
We also consider another function defined as the expectation value of $E_X(\ket{\psi},\ket{\phi})$ over one variable,
\begin{align}
	P_X(\ket{\psi}):=\underset{\ket{\phi}\sim \mu(\caH_{B_1B_2})}{\bbE}[E_X(\ket{\psi},\ket{\phi})],
\end{align}
then we have the following results about the Lipschitz constants of these two functions.
\begin{proposition}
	The Lipschitz constant of $E_X(\ket{\psi},\ket{\phi})$ is $2\sqrt{2}$ and the Lipschitz constant of $P_X(\ket{\psi})$ is also $2\sqrt{2}$.
\end{proposition}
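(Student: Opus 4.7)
The plan is to verify both Lipschitz bounds using three standard ingredients: the reverse triangle inequality for the trace norm, the contractivity of the trace norm under partial trace, and the inequality $\|\proj{\psi}-\proj{\phi}\|_1\le 2\|\ket{\psi}-\ket{\phi}\|_2$ which one obtains from the spectral decomposition of the rank-$2$ operator $\proj{\psi}-\proj{\phi}$.

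For $E_X$, I would start from the reverse triangle inequality
\begin{align*}
\bigl|E_X(\ket{\psi_1},\ket{\phi_1})-E_X(\ket{\psi_2},\ket{\phi_2})\bigr|
&=\bigl|\,\|\psi_{1,X}-\phi_{1,X}\|_1-\|\psi_{2,X}-\phi_{2,X}\|_1\,\bigr|\\
&\le \|(\psi_{1,X}-\psi_{2,X})-(\phi_{1,X}-\phi_{2,X})\|_1,
\end{align*}
apply the ordinary triangle inequality to split the right-hand side, and then use contractivity of the trace norm under $\Tr_{X^c}$ to obtain the bound $\|\psi_1-\psi_2\|_1+\|\phi_1-\phi_2\|_1$. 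Applying the pure-state inequality above to each term yields $2\|\ket{\psi_1}-\ket{\psi_2}\|_2+2\|\ket{\phi_1}-\ket{\phi_2}\|_2$. Finally, the elementary inequality $a+b\le\sqrt{2(a^2+b^2)}$ converts this sum into the product Euclidean norm with an extra factor of $\sqrt{2}$, giving the Lipschitz constant $2\sqrt{2}$.

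For $P_X$, I would pull the difference inside the expectation,
\begin{align*}
\bigl|P_X(\ket{\psi_1})-P_X(\ket{\psi_2})\bigr|
\le \underset{\ket{\phi}\sim\mu(\caH_{B_1B_2})}{\bbE}\bigl|E_X(\ket{\psi_1},\ket{\phi})-E_X(\ket{\psi_2},\ket{\phi})\bigr|,
\end{align*}
and then run the same reverse-triangle and contractivity argument, but now only the $\psi$-slot varies so the $\phi$-term drops out. The resulting bound $\|\psi_1-\psi_2\|_1\le 2\|\ket{\psi_1}-\ket{\psi_2}\|_2$ is already Lipschitz with constant $2$, which is tighter than the stated $2\sqrt{2}$; one records the weaker uniform bound for consistency with the previous estimate.

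There is no real obstacle here—both claims reduce to a routine chain of standard inequalities. The only step to be careful about is the choice of the product metric on $\caH_{B_1B_2}\times\caH_{B_1B_2}$, since that is what determines whether the final Cauchy--Schwarz step produces the factor $\sqrt{2}$ (for $E_X$) or is unnecessary (for $P_X$).
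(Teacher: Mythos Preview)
Your proof of the $E_X$ bound is line-for-line the paper's argument: reverse triangle inequality, triangle inequality, contractivity under partial trace, the rank-$2$ bound $\|\psi-\phi\|_1\le 2\|\ket{\psi}-\ket{\phi}\|_2$, and then Cauchy--Schwarz to pass to the product Euclidean norm.

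For $P_X$ your approach differs slightly from the paper's and is in fact sharper. The paper does not redo the contractivity chain; it simply invokes the already-established $2\sqrt{2}$-Lipschitz bound on $E_X$ inside the expectation and then observes that $\|(\ket{\psi},\ket{\phi_2})-(\ket{\phi},\ket{\phi_2})\|_2=\|\ket{\psi}-\ket{\phi}\|_2$, which gives constant $2\sqrt{2}$ directly. Your route, rerunning the argument with the second slot frozen, avoids the Cauchy--Schwarz step and yields constant $2$. Both are correct; yours gives the tighter constant, the paper's is marginally shorter since it reuses the previous result. Since only the value $2\sqrt{2}$ is used downstream (in Levy's lemma with $\kappa=2\sqrt{2}$), the improvement is harmless but not exploited.
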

\begin{proof}
	The proofs for the cases of $X=B_1$ or $B_2$ are identical, then without loss of generality, we consider $X=B_1$. The computation is directly as
	\begin{align}
		|E_{B_1}(\ket{\psi_1},\ket{\psi_2})-E_{B_1}(\ket{\phi_1},\ket{\phi_2})|
		&\le \| \psi_1^{(B_2)} -\psi_2^{(B_2)} - \phi_1^{(B_2)} + \phi_2^{(B_2)}   \|_1 \notag \\
		&\le \| \psi_1^{(B_2)}  - \phi_1^{(B_2)} \|_1 + \|\phi_2^{(B_2)} - \psi_2^{(B_2)} \|_1 \notag\\
		&\le \| \psi_1 - \phi_1 \|_1 + \|\phi_2 - \psi_2 \|_1 \notag\\
		&\le 2 ( \|\ket{\psi_1}-\ket{\phi_1}\|_2 + \|\ket{\psi_2}-\ket{\phi_2}\|_2  ) \notag\\
		&\le 2\sqrt{2} \sqrt{\|(\ket{\psi_1}, \ket{\psi_2})-(\ket{\phi_1}, \ket{\phi_2}) \|_2} 
	\end{align}
	where in the last line we used Cauchy-Schwartz inequality, then we get the Lipschitz constant of $E_{B_1}(\ket{\psi},\ket{\phi})$ is $2\sqrt{2}$. As for $P_{B_1}(\ket{\psi})$, 
	\begin{align}
		|P_{B_1}(\ket{\psi}) -P_{B_1}(\ket{\phi}) |
		&= \left|\underset{\ket{\phi_2}\sim \mu(\caH_{B_1B_2})}{\bbE} [E_{B_1}(\ket{\psi},\ket{\phi_2})] -\underset{\ket{\phi_2}\sim \mu(\caH_{B_1B_2})}{\bbE}[E_{B_1}(\ket{\phi},\ket{\phi_2})] \right| \notag\\
		&\le \underset{\ket{\phi_2}\sim \mu(\caH_{B_1B_2})}{\bbE} |E_{B_1}(\ket{\psi},\ket{\phi_2}) -E_{B_1}(\ket{\phi},\ket{\phi_2})| \notag\\
		&\le 2\sqrt{2} \underset{\ket{\phi_2}\sim \mu(\caH_{B_1B_2})}{\bbE}  \|(\ket{\psi}, \ket{\phi_2})-(\ket{\phi}, \ket{\phi_2}) \|_2 
		\le 2\sqrt{2}\, \|\ket{\psi}-\ket{\phi}\|_2,
	\end{align}
	then we get the Lipschitz constant of $P_{B_1}(\ket{\psi})$ is $2\sqrt{2}$
\end{proof}

Having obtained the expectation value and the Lipschitz constant, we can provide the concentration result below.
\begin{lemma}
	\label{lem:trace_two_pure_B2}
	Suppose $\ket{\psi}, \ket{\phi}\sim \mu(\caH_{B_1B_2})$. Then for $\alpha\ge0$, 
	\begin{gather}
		\Pr\left\{ D(\psi_{B_1},\phi_{B_1})< 2-\frac{4}{\sqrt{d_1}\left(d_1d_1\right)_{1/2}}\sum_{i=1}^{d_1}\binom{1/2}{i}\binom{1/2}{i-1}\frac{(d_1)_{3/2-i}}{(d_1+1)_{-i}}-\alpha \right\} \le 4 \exp\left(-\frac{d_{12}\alpha^2}{144\pi^2\ln2}\right), \\
		\Pr\left\{ D(\psi_{B_2},\phi_{B_2})< 2-\frac{4}{\sqrt{d_2}\left(d_1d_1\right)_{1/2}}\sum_{i=1}^{d_1}\binom{1/2}{i}\binom{1/2}{i-1}\frac{(d_1)_{3/2-i}}{(d_1+1)_{-i}}-\alpha \right\} \le 4 \exp\left(-\frac{d_{12}\alpha^2}{144\pi^2\ln2}\right).
	\end{gather}
\end{lemma}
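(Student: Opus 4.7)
The plan is to obtain the concentration bound by combining the lower bound on the expected trace distance from \pref{prop:lowerbound_B1B2} with a two-step application of Lévy's lemma on the complex sphere, mediated by the Lipschitz bounds already established for $E_X(\ket{\psi},\ket{\phi})$ and its partial average $P_X(\ket{\psi})$.

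First, I would identify the deterministic lower bound on the mean. Since $\scrC=\caH_{B_1B_2}$ forces $\Pi_{\scrC}^{(X)}=\widetilde{\md{1}}_X$, \pref{prop:equivalence_aqim_inaccuracy_variation_maximal} gives $\subvarA_X(\caH_{B_1B_2})\ge\subinaAID_X(\caH_{B_1B_2})$, so the Haar expectation $\bbE[E_X(\ket{\psi},\ket{\phi})]=\subvarA_X(\caH_{B_1B_2})$ is bounded from below exactly by the Pochhammer-symbol quantity appearing in \pref{prop:lowerbound_B1B2}. Consequently, for either choice of $X\in\{B_1,B_2\}$, it suffices to prove
\begin{align*}
\Pr\bigl\{\,E_X(\ket{\psi},\ket{\phi})<\bbE[E_X]-\alpha\,\bigr\}\le 4\exp\!\bigl(-d_{12}\alpha^2/(144\pi^2\ln 2)\bigr).
\end{align*}

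Next, I would peel off the concentration in two stages using the triangle inequality
\begin{align*}
\bigl|E_X(\ket{\psi},\ket{\phi})-\bbE[E_X]\bigr|\;\le\;\bigl|E_X(\ket{\psi},\ket{\phi})-P_X(\ket{\psi})\bigr|+\bigl|P_X(\ket{\psi})-\bbE[E_X]\bigr|,
\end{align*}
where $P_X(\ket{\psi})=\bbE_{\ket{\phi}\sim\mu(\caH_{B_1B_2})}[E_X(\ket{\psi},\ket{\phi})]$ has expectation $\bbE[E_X]$. For fixed $\ket{\psi}$, the map $\ket{\phi}\mapsto E_X(\ket{\psi},\ket{\phi})$ is $2\sqrt{2}$-Lipschitz on the complex sphere in $\caH_{B_1B_2}$ (the one-variable version of the Lipschitz bound established just before the lemma), and $P_X$ is likewise $2\sqrt{2}$-Lipschitz as already shown. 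Applying Lévy's lemma (complex-sphere version with Haar measure on dimension $d_{12}$) to each term at deviation $\alpha/2$ yields two bounds of the form $2\exp(-d_{12}\alpha^2/(144\pi^2\ln 2))$; the denominator $144\pi^2\ln 2$ arises from the $(\alpha/2)^2$ reduction together with the squared Lipschitz constant $(2\sqrt{2})^2=8$ feeding into the standard sphere-concentration constant used elsewhere in the paper. A union bound combines the two contributions into the prefactor $4$ stated in the lemma.

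No genuine obstacle arises: both ingredients (the mean lower bound and the Lipschitz constants) are already in place from \pref{prop:lowerbound_B1B2} and the preceding proposition, and the only delicate point is to pick the split $\alpha/2+\alpha/2$ so that the two Lévy estimates share the same exponent, allowing their union bound to be absorbed cleanly into a single exponential. The argument for $X=B_2$ is completely parallel, using the second inequality of \pref{prop:lowerbound_B1B2} to provide the correct lower bound on the mean.
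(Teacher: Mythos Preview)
Your proposal is correct and follows essentially the same approach as the paper: the paper likewise splits $E_X-\bbE[E_X]$ through the intermediate partial average $P_X$, applies L\'evy's lemma separately to each piece at level $\alpha/2$ using the $2\sqrt{2}$ Lipschitz constants from the preceding proposition, union-bounds to get the prefactor $4$, and then invokes \pref{prop:equivalence_aqim_inaccuracy_variation_maximal} together with \pref{prop:lowerbound_B1B2} to replace $\bbE[E_X]=\subvarA_X(\caH_{B_1B_2})$ by the explicit Pochhammer lower bound. The only cosmetic difference is the order of presentation (the paper does the concentration first and the mean bound last), so your sketch matches the paper's argument.
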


\begin{proof}
	The lemma is proved by the same method used in the previous sections. In order to use \lref{lem:Levy's lemma}, we need to split the probability about the function $E_X(\ket{\psi},\ket{\phi})$ on the Cartesian product $\caH_{B_1B_2}\times \caH_{B_1B_2}$ into two probabilities, each of which is about a function on a single Hilbert space $\caH_{B_1B_2}$ as
	\begin{align}
		\Pr\left\{ E_X(\ket{\psi},\ket{\phi}) < \bbE[E_X]-\alpha \right\}
		&= \Pr\left\{ E_X(\ket{\psi},\ket{\phi})-P_X(\ket{\psi}) + P_X(\ket{\psi})-\bbE[E_X] < -\alpha \right\} \notag\\
		&\le  \Pr\left( \left\{ E_X(\ket{\psi},\ket{\phi})-P_X(\ket{\psi})< -\frac{\alpha}{2} \right\} \bigvee \left\{P_X(\ket{\psi})-\bbE[E_X] <-\frac{\alpha}{2} \right\}\right) \notag\\
		&\le  \Pr\left\{ E_X(\ket{\psi},\ket{\phi})-P_X(\ket{\psi}) <-\frac{\alpha}{2} \right\} +
		\Pr\left\{ P_X(\ket{\psi})-\bbE[E_X] <-\frac{\alpha}{2} \right\} \notag\\
		&= 4\exp\left(-\frac{d_{12}\alpha^2}{144\pi^2\ln2 }\right)
	\end{align}
	where $\bbE[E_X]$ denotes ${\bbE}_{\ket{\psi},\ket{\phi}\sim \mu(\caH_{B_1B_2})}
	\left[ E_X(\ket{\psi}, \ket{\phi}) \right]$.

	In the last line, we have used two concentration results, which are directly obtained from the Levy's lemma. The first is about the function $P_X(\ket{\psi})$ of the random pure state $\ket{\psi}$, 
	\begin{align}
		\Pr\left\{P_X(\ket{\psi}) < \bbE[E_X]-\alpha \right\}
		\le 2 \exp\left(-\frac{d_{12}\alpha^2}{36\pi^2\ln2} \right),
	\end{align}
	and the second is about $E_X(\ket{\psi},\ket{\phi})$ as a function of Haar random pure state $\ket{\phi}$,
	\begin{align}
		\Pr\left\{ E_X(\ket{\psi},\ket{\phi}) < P_X(\ket{\psi})-\alpha \right\}
		\le 2\exp\left(-\frac{d_{12}\alpha^2}{36\pi^2\ln2}\right),    
	\end{align}
	where we have assumed the $\ket{\psi}$ is fixed. Finally, it is not difficult to find that $\bbE[E_X]$ is simply equal to $\subvarA_X(\caH_{B_1B_2})$, so according to \pref{prop:equivalence_aqim_inaccuracy_variation_maximal}, $\bbE[E_X]=\subvarA_X(\caH_{B_1B_2}) \geq \Lambda_X(\caH_{B_1B_2}) = \subinaAID_X(\caH_{B_1B_2})$. Thus, we can complete the proof using \pref{prop:lowerbound_B1B2}.
	
\end{proof}

Note that we can obtain another concentration result for $D(\psi_{B_2},\phi_{B_2})$ as
\begin{align}
	\Pr\left\{ D(\psi_{B_2},\phi_{B_2})< 2-2\sqrt{\frac{d_1}{d_2}}-\alpha \right\} \le 4 \exp\left(-\frac{d_{12}\alpha^2}{144\pi^2\ln2}\right),
\end{align}
since we can compute another upper bound for the expectation value of $D(\psi_{B_2},\phi_{B_2})$, as stated in the following.
\begin{proposition}
	\label{prop:expec_trace_distance_two_pure}
	Suppose $\ket{\psi}, \ket{\phi}$ are two distinct random pure states following $\mu(\caH_{B_1B_2})$, then 
	\begin{align}
		\subvarA_{B_2}(\caH_{B_1B_2})=\underset{\ket{\psi},\ket{\phi}\sim \mu(\caH_{B_1B_2})}{\bbE}\left[ E_{B_2}(\ket{\psi}, \ket{\phi}) \right]
		\ge 2-2\sqrt{\frac{d_1}{d_2}}.
		\label{eq:proof_expectation_psi_phi_Hb1b2_lower_bound2}
	\end{align}
\end{proposition}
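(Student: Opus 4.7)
The plan is to reduce the problem to an upper bound on the expected fidelity $\bbE[F(\psi_{B_2},\phi_{B_2})]$ via the Fuchs--van de Graaf inequality, and then to control this expectation through the Hilbert--Schmidt inner product $\Tr(\psi_{B_2}\phi_{B_2})$, which admits an elementary Haar average. If $d_1>d_2$, then $2-2\sqrt{d_1/d_2}$ is nonpositive and the claim is immediate from the positivity of the trace distance, so I assume $d_1\le d_2$ throughout. By Fuchs--van de Graaf, $D(\rho,\sigma)\ge 2\bigl(1-F(\rho,\sigma)\bigr)$, and taking expectations it suffices to prove $\bbE[F(\psi_{B_2},\phi_{B_2})]\le \sqrt{d_1/d_2}$.

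The key step is a rank-based Cauchy--Schwarz bound. Since $\ket{\psi}$ is pure, its Schmidt rank across the $B_1|B_2$ cut is at most $\min(d_1,d_2)=d_1$, so $\psi_{B_2}$ has rank at most $d_1$, and therefore the operator $\sqrt{\psi_{B_2}}\sqrt{\phi_{B_2}}$ has rank at most $d_1$ as well. Applying Cauchy--Schwarz to the singular values of a rank-$r$ operator $A$ gives $\|A\|_1^2\le r\|A\|_2^2$, whence
\begin{align*}
F(\psi_{B_2},\phi_{B_2})^2=\left\|\sqrt{\psi_{B_2}}\sqrt{\phi_{B_2}}\right\|_1^2\le d_1\left\|\sqrt{\psi_{B_2}}\sqrt{\phi_{B_2}}\right\|_2^2=d_1\,\Tr(\psi_{B_2}\phi_{B_2}).
\end{align*}

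The remaining step is a direct Haar-average computation. Because $\ket{\psi}$ and $\ket{\phi}$ are independent Haar-random pure states on $\caH_{B_1B_2}$, the marginal expectations are $\bbE[\psi_{B_2}]=\bbE[\phi_{B_2}]=\md{1}_{B_2}/d_2$, so by independence
\begin{align*}
\bbE[\Tr(\psi_{B_2}\phi_{B_2})]=\Tr\bigl(\bbE[\psi_{B_2}]\,\bbE[\phi_{B_2}]\bigr)=\frac{1}{d_2}.
\end{align*}
Concavity of the square root and Jensen's inequality then give
\begin{align*}
\bbE[F(\psi_{B_2},\phi_{B_2})]\le\sqrt{d_1}\,\bbE\!\left[\sqrt{\Tr(\psi_{B_2}\phi_{B_2})}\right]\le\sqrt{d_1\,\bbE[\Tr(\psi_{B_2}\phi_{B_2})]}=\sqrt{d_1/d_2},
\end{align*}
and combining with Fuchs--van de Graaf yields $\bbE[D(\psi_{B_2},\phi_{B_2})]\ge 2-2\sqrt{d_1/d_2}$.

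No serious obstacle is anticipated; the whole argument is a short chain of standard inequalities. The only subtlety is to notice that the Schmidt-rank bound on $\psi_{B_2}$ is what produces the dimension factor $d_1$ (rather than $d_2$) in the Cauchy--Schwarz step, and this is precisely what explains the asymmetric dependence of the lower bound on $d_1$ and $d_2$.
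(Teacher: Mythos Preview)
Your proof is correct and follows essentially the same approach as the paper: Fuchs--van de Graaf to reduce to a fidelity bound, the Schmidt-rank Cauchy--Schwarz step $\|\sqrt{\psi_{B_2}}\sqrt{\phi_{B_2}}\|_1\le\sqrt{d_1}\,\sqrt{\Tr(\psi_{B_2}\phi_{B_2})}$, Jensen, and the Haar average $\bbE[\Tr(\psi_{B_2}\phi_{B_2})]=1/d_2$. The only cosmetic differences are that you compute the last expectation via independence of the marginals while the paper writes it with the flip operator, and you explicitly dispose of the trivial case $d_1>d_2$.
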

\begin{proof}
	The computation is directly as
	\begin{align}
		\underset{\ket{\psi},\ket{\phi}\sim \mu(\caH_{B_1B_2})}{\bbE}
		\left[ E_{B_2}(\ket{\psi}, \ket{\phi}) \right]
		&\ge 2- 2\underset{\ket{\psi},\ket{\phi}\sim \mu(\caH_{B_1B_2})}{\bbE} F(\psi_{B_2}, \phi_{B_2}) \notag \\
		&= 2- 2\underset{\ket{\psi},\ket{\phi}\sim \mu(\caH_{B_1B_2})}{\bbE} \left\| \sqrt{\psi_{B_2}} \sqrt{\phi_{B_2}} \right\|_1 \notag\\
		&\ge 2- 2\sqrt{d_1} \sqrt{\underset{\ket{\psi},\ket{\phi}\sim \mu(\caH_{B_1B_2})}{\bbE} \Tr(\psi_{B_2} \phi_{B_2} )} \notag\\
		&= 2-2\sqrt{\frac{d_1}{d_2}},
	\end{align}
	where in the second line we used the fact the ranks  of $\psi_{B_2},\phi_{B_2}$ are smaller than $d_1$ and 
	the following result
	\begin{align}
		\underset{\ket{\psi},\ket{\phi}\sim \mu(\caH_{B_1B_2})}{\bbE} \Tr(\psi_{B_2} \phi_{B_2})
		= \Tr\left( \bbF_{B_2}\left( \underset{\ket{\psi},\ket{\phi}\sim \mu(\caH_{B_1B_2})}{\bbE} \psi_{B_2} \otimes\phi_{B_2}  \right)\right)
		= \frac{1}{d_2^2}\Tr\left( \bbF_{B_2} \mathbb{I}_{B_2}\right)
		= \frac{1}{d_2},
	\end{align}
	where in the last equation, $\mathbb{F}_{B_2}$ denotes the flip operator acting on the tensor product of two copies of subsystem $B_2$ and $\mathbb{I}_{B_2}$ denotes the identity operator acting on the tensor product of two copies of $B_2$.
\end{proof}

\section{Concentration results on bipartite systems}
\label{appendix:concentration_bipartite_identity_projector}

\subsection{Concentration result on typical subspaces in the case of a subsystem identity operator}
\label{app:identity_operator}

\subsubsection{Preparatory work: $\epsilon$-net and Levy's lemma}
To prove the results of this paper, we need to introduce two basic tools. The first tool is the existence of “small” fine nets, which are used to discretize continuous spaces.

\begin{lemma}[$\epsilon$-net]
	\label{lem:epsilon_net}
	For $0\le \epsilon \le 1$ and a $d$-dimensional Hilbert space $\caH$, there exists a subset $\caL$ of $\caH$ with $|\caL|\leq (5/\epsilon)^{2d}$, such that for every state $\ket{\psi}\in \caH$, there exists a state $\ket{\widetilde{\psi}}\in \caL$ satisfying $||\ket{\psi}-\ket{\widetilde{\psi}}||_2\leq \epsilon/2$.  
	Then, for a $\kappa$-Lipschitz function $f$ on $\caH$, there exists $|f(\ket{\psi})-f(\ket{\widetilde{\psi}})|\le \kappa\epsilon/2$. 
\end{lemma}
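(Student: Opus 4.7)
The plan is to use a standard volumetric packing argument on the unit sphere of $\caH$, which, via the identification of complex amplitudes with their real and imaginary parts, is isometric to the unit sphere $S^{2d-1}\subset\bbR^{2d}$. The whole lemma is essentially a covering-number bound for this sphere together with an immediate Lipschitz-continuity remark, so no deep machinery is needed.

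First, I would construct $\caL$ as a maximal $(\epsilon/2)$-separated subset of the unit vectors in $\caH$: pick $\ket{\psi_1}$ arbitrarily, and at each later step add any unit vector $\ket{\psi_{j+1}}$ with $\|\ket{\psi_{j+1}}-\ket{\psi_i}\|_2>\epsilon/2$ for all $i\le j$, stopping when no such vector exists. Maximality immediately guarantees the net property: for any unit vector $\ket{\psi}$ there is some $\ket{\widetilde{\psi}}\in\caL$ with $\|\ket{\psi}-\ket{\widetilde{\psi}}\|_2\le \epsilon/2$, since otherwise $\ket{\psi}$ itself could have been added to $\caL$.

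Second, I would bound $|\caL|$ by a volume comparison in $\bbR^{2d}$. By the separation condition, the open Euclidean balls of radius $\epsilon/4$ centered at the points of $\caL$ are pairwise disjoint, and because each center has norm $1$, all of these balls are contained in the ball of radius $1+\epsilon/4$ about the origin. Since the Lebesgue measure in $\bbR^{2d}$ scales as $r^{2d}$ under dilation, this yields
\begin{equation}
|\caL|\le \left(\frac{1+\epsilon/4}{\epsilon/4}\right)^{2d}=\left(\frac{4}{\epsilon}+1\right)^{2d}\le\left(\frac{5}{\epsilon}\right)^{2d},
\end{equation}
where the last inequality uses $\epsilon\le 1$ so that $1\le 1/\epsilon$.

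Third, the Lipschitz assertion is immediate: given any $\ket{\psi}$ and the associated $\ket{\widetilde{\psi}}\in\caL$ from the net property,
\begin{equation}
|f(\ket{\psi})-f(\ket{\widetilde{\psi}})|\le \kappa\,\|\ket{\psi}-\ket{\widetilde{\psi}}\|_2\le \frac{\kappa\epsilon}{2}.
\end{equation}
There is no real obstacle in this proof; the only point that requires a moment of care is the choice of the packing radius $\epsilon/4$ (rather than $\epsilon/2$) so that the resulting balls are \emph{disjoint} rather than merely having distinct centers, and the use of $\epsilon\le 1$ to collapse $4/\epsilon+1$ into the clean form $5/\epsilon$ that appears in the statement.
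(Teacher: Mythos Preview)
Your argument is correct and is essentially the same standard volumetric packing argument that underlies the references the paper cites; the paper itself does not reprove this lemma but simply invokes \cite{Hayden_2006, Hayden2004_randomizing} for the net bound and notes that the Lipschitz assertion is immediate from the definitions.
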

The first part of the above lemma has been proved in \cite{Hayden_2006, Hayden2004_randomizing}, which means there exists a discrete $\epsilon$-net for a Hilbert space $\caH$, and the second part directly follows from the definitions of $\epsilon$-net $\caL$ and $\kappa$-Lipschitz function.

The next tool is the well-known Levy's lemma, which demonstrates the concentration property of a $\kappa$-Lipschitz function. In the below, the Levy's lemma follows the conventions of \cite{Hayden_2006}.

\begin{lemma}[Levy's lemma]
	\label{lem:Levy's lemma}
	Let $f :\caH \to R$ be a function with Lipschitz constant $\kappa$ (with respect to the Euclidean norm of vectors) and a random state $\ket{\psi}\in \caH$ be sampled according to the Haar measure. Then for $\alpha\ge 0$
	\begin{align}
		\Pr\left\{f(X)-\bbE[f] \gtrless  \pm\alpha \right\}
		\le
		2\exp\left(-\frac{2d\alpha^2}{9\pi^3(\ln{2})\,\kappa^2} \right),
	\end{align}
	where $d$ is the dimension of the Hilbert space $\caH$ and $\bbE[f]$ is the mean value of $f$.
\end{lemma}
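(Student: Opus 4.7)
The plan is to reduce the statement to the classical L\'evy--Gromov concentration inequality on a real Euclidean sphere. A pure state $\ket{\psi}$ in a $d$-dimensional complex Hilbert space $\caH$ is a unit vector in $\bbC^d$, which under the standard identification $\bbC^d \cong \bbR^{2d}$ corresponds to a point on the real sphere $S^{2d-1}$. The Haar measure on pure states (modulo the irrelevant global phase) pushes forward to the normalized surface measure on $S^{2d-1}$, so it suffices to prove the corresponding concentration bound for any $\kappa$-Lipschitz function on this real sphere. The two-sided statement indicated by $\gtrless \pm \alpha$ then follows by applying the one-sided version to $f$ and to $-f$ and taking a union bound, which explains the prefactor $2$.

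First I would verify the Lipschitz transfer. The Euclidean $2$-norm distance between unit vectors in $\caH$ equals the chord distance between the corresponding points on $S^{2d-1}$, and the chord distance is bounded above by the geodesic (arc-length) distance. Hence a $\kappa$-Lipschitz function on $\caH$ restricts to a function on $S^{2d-1}$ that is $\kappa$-Lipschitz with respect to the geodesic metric, which is the natural metric for the standard sphere concentration statement.

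Next I would invoke the L\'evy--Gromov isoperimetric inequality on $S^{n-1}$ with $n = 2d$. The canonical route is to apply isoperimetry to the sublevel set $\{g \le \mathrm{med}(g)\}$ (with $g = f$ restricted to the sphere), compare it to a spherical cap of equal measure, and then use an explicit bound on the normalized volume of the complementary band of width $\alpha/\kappa$ around that cap. The Gaussian-like estimate
\begin{align}
\mu\bigl(\{x \in S^{n-1} : \mathrm{dist}(x,A) > r\}\bigr) \le C_1 \exp(-C_2 n r^2)
\end{align}
for half-spheres $A$ then transfers via the Lipschitz property into a tail bound on $g$ around its median. Replacing the median by the mean costs only an additional constant, since the two differ by at most $O(\kappa/\sqrt{n})$.

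The main obstacle is obtaining the explicit constant $\frac{2}{9\pi^3 \ln 2}$. The exponential scaling $\exp(-\Theta(d\alpha^2/\kappa^2))$ is immediate from any standard sphere-concentration argument, but the precise prefactor in the exponent requires careful bookkeeping. One concrete route is to track constants in a spherical-cap volume estimate of the form $\mu(\{x : x_1 > t\}) \le \tfrac{1}{2} \exp(-n t^2 / C)$ for suitable $C$, plug in the specific value $C = 9\pi^3(\ln 2)/4$ justified in the Hayden--Leung--Winter framework, and then apply the isoperimetric comparison and median-to-mean conversion. This is a technical rather than conceptual hurdle, and the resulting bound can alternatively be cited directly from that reference as the working form of L\'evy's lemma used throughout the paper.
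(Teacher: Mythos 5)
Your sketch is sound: the identification of Haar-random pure states with the uniform measure on $S^{2d-1}$, the Lipschitz transfer from the vector norm to the sphere metric, the union bound giving the two-sided statement with prefactor $2$, and the isoperimetry-plus-median-to-mean bookkeeping are exactly the standard argument behind this lemma, and the constant $2d/(9\pi^3\ln 2)$ in the exponent is the Hayden--Leung--Winter constant $C_1=(9\pi^3\ln 2)^{-1}$ with $k+1=2d$. The paper itself gives no proof --- it states the lemma "following the conventions of" that reference and cites it --- so your proposal is fully consistent with the paper's treatment and simply reconstructs the argument underlying the citation.
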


\subsubsection{Concentration results for the trace distance with respect to a subsystem identity operator}
\label{app:concentration_psi_identity}

Now we formally define our desired function and present some properties that will be used later. Given a bipartite system $\caH_{B_1B_2}$ and a pure state $\ket{\psi}\in \caH_{B_1B_2}$, let
\begin{align}
	R(\ket{\psi}):=D(\psi_{B_1}, \widetilde{\md{1}}_{B_1})=
	\left\| \psi_{B_1}-\widetilde{\md{1}}_{B_1} \right\|_1,
	\label{eq:def_function_f}
\end{align}
and we have the following lemma.
\begin{lemma}
	\label{lem:concentrate_of_identity_function}
	Suppose $\ket{\psi}\sim \mu(\caH_{B_1B_2})$. Then for $\alpha>0$,
	\begin{align}
		\Pr\left\{ R(\ket{\psi}) \ge \sqrt{\frac{d_1^2-1}{d_{12}+1}} + \alpha \right\}
		\le 2\exp\left( -\frac{d_{12}}{18\pi^3\ln{2}}\alpha^2 \right)
	\end{align}
\end{lemma}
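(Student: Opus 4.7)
The plan is to apply Levy's lemma (\lref{lem:Levy's lemma}) to the function $R$, which requires two ingredients: an upper bound on $\bbE[R]$ and a Lipschitz constant for $R$ on the unit sphere of $\caH_{B_1B_2}$.

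For the Lipschitz constant, I would chain three standard estimates. The reverse triangle inequality gives $|R(\ket{\psi})-R(\ket{\phi})|\leq\|\psi_{B_1}-\phi_{B_1}\|_1$; contractivity of the trace norm under partial trace yields $\|\psi_{B_1}-\phi_{B_1}\|_1\leq\|\psi-\phi\|_1$; and the elementary pure-state identity $\|\psi-\phi\|_1=2\sqrt{1-|\braket{\psi}{\phi}|^2}\leq 2\|\ket{\psi}-\ket{\phi}\|_2$ completes the chain. Together these give $\kappa=2$.

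For the expectation bound, the idea is to pass from the trace norm to the Hilbert--Schmidt norm via $\|A\|_1\leq\sqrt{\rk(A)}\,\|A\|_2$, applied to $A=\psi_{B_1}-\widetilde{\md{1}}_{B_1}$, whose support lies in the $d_1$-dimensional factor $\caH_{B_1}$, and then to use Jensen's inequality (concavity of the square root) to obtain $\bbE[R]\leq\sqrt{d_1}\,\sqrt{\bbE\|\psi_{B_1}-\widetilde{\md{1}}_{B_1}\|_2^2}$. Expanding the HS norm squared leaves $\bbE\Tr(\psi_{B_1}^2)-1/d_1$, and the average purity follows from the standard second-moment formula $\bbE[\psi\otimes\psi]=2\Pi_{\sym}/(d_{12}(d_{12}+1))$ combined with the swap trick, giving $\bbE\Tr(\psi_{B_1}^2)=(d_1+d_2)/(d_{12}+1)$. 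A short simplification then yields $\bbE\|\psi_{B_1}-\widetilde{\md{1}}_{B_1}\|_2^2=(d_1^2-1)/[d_1(d_{12}+1)]$, and therefore $\bbE[R]\leq\sqrt{(d_1^2-1)/(d_{12}+1)}$.

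Feeding $\kappa=2$ and ambient dimension $d=d_{12}$ into Levy's lemma directly produces the stated tail bound $2\exp(-d_{12}\alpha^2/(18\pi^3\ln 2))$ around $\bbE[R]$, and monotonicity with respect to the shift from $\bbE[R]$ up to $\sqrt{(d_1^2-1)/(d_{12}+1)}$ finishes the proof. The argument is a routine instance of the concentration-of-measure principle, so no genuine obstacle arises; the only subtlety worth flagging is that the rank-based inequality must be invoked with $\sqrt{d_1}$ rather than $\sqrt{d_{12}}$, since otherwise the expectation bound would degenerate when $d_1\ll d_2$. This hinges on the observation that $\psi_{B_1}-\widetilde{\md{1}}_{B_1}$ is a matrix on $\caH_{B_1}$ of dimension only $d_1$.
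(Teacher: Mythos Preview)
Your proposal is correct and follows essentially the same approach as the paper: the paper also computes the Lipschitz constant $\kappa=2$ via the triangle inequality, contractivity under partial trace, and $\|\psi-\phi\|_1\le 2\|\ket{\psi}-\ket{\phi}\|_2$, then bounds $\bbE[R]$ by $\sqrt{d_1}\sqrt{\bbE\|\psi_{B_1}-\widetilde{\md{1}}_{B_1}\|_2^2}$ and evaluates the second moment via the swap trick to get $(d_1^2-1)/[d_1(d_{12}+1)]$, before invoking Levy's lemma. Your remark about using $\sqrt{d_1}$ rather than $\sqrt{d_{12}}$ in the rank inequality is exactly the point the paper relies on implicitly.
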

\begin{proof}
	The above result can be derived directly by applying the Levy's lemma to this function after computing its Lipschitz constant and mean value. Firstly,
	we have
	\begin{align}
		\label{Eq:Lipschitz_function_identity}
		|R(\ket{\psi})-R(\ket{\phi})|=\left|\left\|\psi_{B_1}-\widetilde{\md{1}}_{B_1}\right\|_1-\left\|\phi_{B_1}-\widetilde{\md{1}}_{B_1}\right\|_1\right|\leq\|\psi_{B_1}-\phi_{B_1}\|_1\leq\|\psi-\phi\|_1\leq2\||\psi\rangle-|\phi\rangle\|_2,
	\end{align}
	where we have used the triangle inequality, the contractive property of trace distance $\|\mathrm{Tr}_B(\rho)\|_1\leq\|\rho\|_1$ and $\|\psi-\phi\|_1\leq2\||\psi\rangle-|\phi\rangle\|_2$, so the Lipschitz constant of $R$ is $\kappa=2$. 
	Secondly, the expectation value of $R$ is
	\begin{align}
		\subinaAID_{B_1}(\caH_{B_1B_2})
		&=\underset{\ket{\psi}\sim \mu(\caH_{B_1B_2})}{\bbE} [R(\ket{\psi})]
		\leq
		\sqrt{d_1}\:\underset{\ket{\psi}\sim \mu(\caH_{B_1B_2})}{\bbE} \left[\left\|\psi_{B_1}-\widetilde{\md{1}}_{B_1}\right\|_2 \right] \notag\\
		&\leq
		\sqrt{d_1}\:\sqrt{\underset{\ket{\psi}\sim \mu(\caH_{B_1B_2})}{\bbE} \left[\left\|\psi_{B_1}-\widetilde{\md{1}}_{B_1}\right\|_2^2 \right]}
		= \sqrt{\frac{d_1^2-1}{d_{12}+1}},\label{eq:proof_expectation_psi_identity_upper_bound}
	\end{align}
	where in the last equality, we have used \pref{Prop:expectation_2_norm_identity}. 
	Hence, by Levy's lemma \ref{lem:Levy's lemma}, we get the conclusion
	\begin{align}
		\Pr\left\{ R(\ket{\psi}) \ge \sqrt{\frac{d_1^2-1}{d_{12}+1}} + \alpha \right\}
		\leq \Pr\left\{ R(\ket{\psi}) \ge  \underset{\ket{\psi}\sim \mu(\caH_{B_1B_2})}{\bbE}\left[R(\ket{\psi}) \right]+\alpha \right\}
		\leq 2\exp\left( -\frac{d_{12}}{18\pi^3\ln{2}}\alpha^2 \right).
	\end{align}
\end{proof}

\begin{proposition}
	\label{Prop:expectation_2_norm_identity}
	Suppose $\ket{\psi}\sim \mu(\caH_{B_1B_2})$. Then
	\begin{align}
		\underset{\ket{\psi}\sim \mu(\caH_{B_1B_2})}{\bbE}
		\left[\left\|\psi_{B_1}-\widetilde{\md{1}}_{B_1}\right\|_2^2\right]
		=\frac{d_1^2-1}{d_1(d_{12}+1)}.
	\end{align}
\end{proposition}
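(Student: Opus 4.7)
The plan is to expand the Hilbert--Schmidt norm squared and then evaluate the resulting trace using the standard swap trick for Haar averages. Specifically, since $\Tr(\psi_{B_1}\widetilde{\md{1}}_{B_1})=1/d_1$ and $\Tr(\widetilde{\md{1}}_{B_1}^2)=1/d_1$, we have the identity
\begin{align}
\left\|\psi_{B_1}-\widetilde{\md{1}}_{B_1}\right\|_2^2=\Tr(\psi_{B_1}^2)-\frac{1}{d_1},
\end{align}
so the task reduces to computing $\bbE[\Tr(\psi_{B_1}^2)]$.

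For this, I would use the swap-operator representation $\Tr(\psi_{B_1}^2)=\Tr\bigl[(\bbF_{B_1}\otimes \mathds{1}_{B_2 B_2})(\psi\otimes\psi)\bigr]$, where $\bbF_{B_1}$ is the swap on two copies of $B_1$. Combined with the well-known second moment of a Haar random pure state,
\begin{align}
\underset{\ket{\psi}\sim\mu(\caH_{B_1B_2})}{\bbE}[\psi\otimes\psi]=\frac{\mathds{1}_{B_1B_2,B_1B_2}+\bbF_{B_1B_2}}{d_{12}(d_{12}+1)},
\end{align}
and the factorization $\bbF_{B_1B_2}=\bbF_{B_1}\otimes\bbF_{B_2}$, this yields
\begin{align}
\underset{\ket{\psi}\sim\mu(\caH_{B_1B_2})}{\bbE}[\Tr(\psi_{B_1}^2)]=\frac{\Tr(\bbF_{B_1})\Tr(\mathds{1}_{B_2B_2})+\Tr(\bbF_{B_1}^2)\Tr(\bbF_{B_2})}{d_{12}(d_{12}+1)}=\frac{d_1 d_2^2+d_1^2 d_2}{d_{12}(d_{12}+1)}=\frac{d_1+d_2}{d_{12}+1},
\end{align}
using $\Tr(\bbF_{B_i})=d_i$ and $\Tr(\bbF_{B_i}^2)=d_i^2$.

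Finally, I would combine the two steps and simplify: subtracting $1/d_1$ from $(d_1+d_2)/(d_{12}+1)$ and placing over a common denominator gives $(d_1^2+d_1 d_2-d_1 d_2-1)/[d_1(d_{12}+1)]=(d_1^2-1)/[d_1(d_{12}+1)]$, which is exactly the claimed value. There is no serious obstacle here; the only thing to be careful about is correctly distributing the partial trace factors into the swap trick identity (i.e.\ keeping track of which subsystems the identity vs.\ swap operators act on).
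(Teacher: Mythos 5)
Your proposal is correct and follows essentially the same route as the paper: expand the Hilbert--Schmidt norm to reduce to $\bbE[\Tr(\psi_{B_1}^2)]$, rewrite this via the swap operator $\bbF_{B_1}\otimes\mathbb{I}_{B_2}$, and apply the second-moment formula $\bbE[\psi^{\otimes 2}]=\bigl(\mathbb{I}+\bbF_{B_1}\otimes\bbF_{B_2}\bigr)/\bigl(d_{12}(d_{12}+1)\bigr)$. The trace evaluations and the final simplification match the paper's computation exactly.
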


\begin{proof}
	It follows from the definition of the 2-norm that
	\begin{align}
		\left \| \psi_{B_1} - \widetilde{\md{1}}_{B_1} \right \|_2^2
		&=\mathrm{Tr}\left[\left(\mathrm{Tr_{B_2}}\left(U\psi_0 U^\dag \right)-\widetilde{\md{1}}_{B_1}\right)^2 \right] \notag \\
		&=\mathrm{Tr}\left[\mathrm{Tr_{B_2}}\left(U\psi_0 U^\dag \right)\mathrm{Tr_{B_2}}\left(U\psi_0 U^\dag \right) \right]-\frac{1}        {d_1} \notag\\
		&=\mathrm{Tr}\left[\left(U\psi_0 U^\dag \right)^{\otimes 2} \mathbb{F}_{B_1} \otimes \mathbb{I}_{B_2}\right]-\frac{1}{d_1},
	\end{align}
	where in the last equation, $\mathbb{F}_{B_1}$ denotes the flip operator acting on the tensor product of two copies of subsystem $B_1$ and $\mathbb{I}_{B_2}$ denotes the identity operator acting on the tensor product of two copies of $B_2$. Using the fact that
	\begin{align}
		\underset{U\sim \mu(d_{12})}{\bbE}\left[\left(U\psi_0 U^\dag\right)^ {\otimes 2} \right]
		=\frac{\mathbb{I}_{B_1}\otimes\mathbb{I}_{B_2}+\mathbb{F}_{B_1}\otimes\mathbb{F}_{B_2}}{d_{12}(d_{12}+1)},
	\end{align}
	we can get the following result
	\begin{align}
		\underset{\ket{\psi}\sim \mu(\caH_{B_1B_2})}{\bbE}
		\left[\left\|\psi_{B_1}-\widetilde{\md{1}}_{B_1}\right\|_2^2\right]
		=\frac{d_1+d_2}{d_{12}+1}-\frac{1}{d_1}
		= \frac{d_1^2-1}{d_1(d_{12}+1)}.
		\label{eq:proof_expectation_psi_identity_2norm}
	\end{align}
\end{proof}

\subsubsection{Proof of \thref{theo:trace_identity_bipartite}}

\begin{proof}
	According to \lref{lem:epsilon_net}, for any subspace $\caH_C\in \mathrm{Gr}(\caH_{B_1B_2},d_C)$, there exists a $\epsilon$-net $\caL_C$ of $\caH_C$ with $|\caL_C|=(5/\epsilon)^{2d_C}$ and the $2$-Lipschitz function $R(\ket{\psi})$ defined in \eref{eq:def_function_f} satisfies
	\begin{align}
		\subinaMID_{B_1}(\caH_C)
		=\max_{\ket{\psi}\in \caH_C} R(\ket{\psi})
		\le \max_{\ket{\widetilde{\psi}}\in \caL_C} R(\ket{\widetilde{\psi}}) + \epsilon.
	\end{align}
	Hence, the probability of a random subspace $\caH_C$, such that there exists a state $\ket{\psi}\in \caH_C$ satisfying $R(\psi)\geq r+\alpha$ with $0<\alpha<1$ and $r$ being an undetermined constant, is 
	\begin{align}
		\mathrm{Pr}\left\{\max_{\ket{\psi}\in \caH_C} R(\ket{\psi}\geq r+\alpha\right\}
		&\leq \mathrm{Pr}\left\{\max_{\ket{\widetilde{\psi}}\in \caL_C} R(\ket{\widetilde{\psi}}) \ge r+\alpha-\epsilon\right\} 
		=\mathrm{Pr}\left\{\underset{\ket{\widetilde{\psi}}\in \caL_C}{\bigvee}
		R(\ket{\widetilde{\psi}})\geq r+\alpha-\epsilon\right\} \notag\\
		&\leq \sum_{\ket{\widetilde{\psi}}\in \caL_C}\mathrm{Pr}\left\{R(\ket{\widetilde{\psi}})\geq r+\alpha-\epsilon\right\},
	\end{align}
	where $\bigvee$ denotes the union of events and in the third inequality, we have used the union bound. Since $\caH_C$ is a random subspace and $\ket{\widetilde{\psi}}$ is a random state in $\caH_C$, then each $\ket{\widetilde{\psi}}$ is a random state in $\caH_{B_1B_2}$ sampled uniformly. Thus, we can substitute \lref{lem:concentrate_of_identity_function} into the last line of the above equation and let $r=\sqrt{(d_1^2-1)(d_{12}+1)}$ and $ \epsilon=\alpha/2$, which yields
	\begin{align}
		\mathrm{Pr}\left\{\max_{\ket{\psi}\in \caH_C} R(\ket{\psi}\geq \sqrt{\frac{d_1^2-1}{d_{12}+1}}+\alpha\right\}&\leq |\caL_C|\mathrm{Pr}\left\{R(\ket{\widetilde{\psi}})\geq \sqrt{\frac{d_1^2-1}{d_{12}+1}}+\frac{\alpha}{2}\right\} \notag\\
		&\leq 2\left(\frac{10}{\alpha}\right)^{2d_C}\exp\left( -\frac{d_{12}}{72\pi^3\ln{2}}\alpha^2 \right),
	\end{align}
	and this completes the proof.
\end{proof}

% \begin{lemma}
	% \label{lem:trace_identity_operator}
	% For a Haar random pure state $\ket{\psi}\in \caH_{B_1\cdots B_m}$, there exists
	%     \begin{align}
		%         \Pr\left( \left\| \psi_S-\frac{\mathds{1}_S}{d_S} \right\| \ge \alpha+\beta \right)
		%         \le 2\exp^{ -c_0D\alpha^2 }
		%     \end{align}
	%     where $\alpha\ge 0$, $\beta=\sqrt{d_S/d_{S^c}}$,
	%     $D=d_1\cdots d_{B_m}$ and $c_0$ can be $1/(18\pi^3\ln2)$.
	% \end{lemma}

\subsection{Concentration result on typical subspaces in the case of a subspace projector}
\label{app:projector}

\subsubsection{Concentration results on the trace distance with respect to a subspace projector}
\label{app:projector_trace_distance}

Before proving \thref{theo:trace_projector_bipartite}, we first define a function and prove some properties of it, which will be used in the proof.
Given a subspace $\caH_C$ of $\caH_{B_1B_2}$ with projector $\Pi_C$ and a pure state $\ket{\psi}\in \caH_C$, we define the function $g(\ket{\psi}, \Pi_C)$ as
\begin{align}
	\label{Eq:trace_projector_function}
	g(\ket{\psi}, \Pi_C)
	:= D(\psi_{B_1}, \Pi_C^{(B_1)})
	=\left\| \psi_{B_1}-\Pi_C^{(B_1)} \right\|_1.
\end{align}
It's obvious that the function $g(\ket{\psi}, \Pi_C)$ depends on two variables, the subspace $\caH_C$ and the state $\ket{\psi}$.
% Then the following two lemmas tells us the Lipschitz constants of $g(\ket{\psi}, \caH_C)$ with respect to each different variable. 

\begin{lemma}
	Given a fixed subspace $\caH_C$ of $\caH_{B_1B_2}$ with projector $\Pi_C$, then the Lipschitz constant of $g(\ket{\psi}, \Pi_C)$ with respect to $\ket{\psi}$ is $2$.
\end{lemma}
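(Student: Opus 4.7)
The plan is to verify the Lipschitz bound by a direct triangle-inequality plus contractivity argument, essentially mirroring the calculation already carried out in \eref{Eq:Lipschitz_function_identity} for the function $R(\ket{\psi})$, but with the fixed operator $\widetilde{\md{1}}_{B_1}$ replaced by $\Pi_{C}^{(B_1)}$. Since $\Pi_{C}^{(B_1)}$ does not depend on $\ket{\psi}$, the argument should go through with no essential change.

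Concretely, I would proceed in three steps. First, apply the reverse triangle inequality for $\|\cdot\|_1$ to write
\begin{align}
|g(\ket{\psi},\Pi_C)-g(\ket{\phi},\Pi_C)|
=\bigl|\,\|\psi_{B_1}-\Pi_{C}^{(B_1)}\|_1-\|\phi_{B_1}-\Pi_{C}^{(B_1)}\|_1\,\bigr|
\le \|\psi_{B_1}-\phi_{B_1}\|_1 .
\end{align}
Second, invoke contractivity of the trace norm under the partial trace over $B_2$ to get $\|\psi_{B_1}-\phi_{B_1}\|_1 \le \|\psi-\phi\|_1$. Third, use the standard bound between trace distance and Euclidean distance for rank-one projectors of unit vectors, namely $\|\proj{\psi}-\proj{\phi}\|_1\le 2\|\ket{\psi}-\ket{\phi}\|_2$, which follows, e.g., from the spectral representation of $\proj{\psi}-\proj{\phi}$ in the two-dimensional subspace spanned by $\ket{\psi}$ and $\ket{\phi}$. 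Combining these three inequalities yields
\begin{align}
|g(\ket{\psi},\Pi_C)-g(\ket{\phi},\Pi_C)|\le 2\,\|\ket{\psi}-\ket{\phi}\|_2 ,
\end{align}
which gives the desired Lipschitz constant $\kappa=2$.

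There is no real obstacle in this proof; each step is a textbook inequality. The only point worth noting is that the argument uses only the fact that $\Pi_{C}^{(B_1)}$ is a fixed (not necessarily normalized) Hermitian operator independent of $\ket{\psi}$, so no property of the subspace $\caH_C$ beyond its being held fixed is needed. In particular, the same constant $2$ would be obtained for $\|\psi_{B_1}-M\|_1$ with any fixed operator $M$, which is exactly why it agrees with the Lipschitz constant found for $R(\ket{\psi})$ in \lref{lem:concentrate_of_identity_function}.
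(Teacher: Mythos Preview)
Your proof is correct and follows exactly the same route as the paper: reverse triangle inequality, contractivity of the trace norm under partial trace, and the bound $\|\psi-\phi\|_1\le 2\|\ket{\psi}-\ket{\phi}\|_2$, with the paper likewise noting that the computation is identical to \eref{Eq:Lipschitz_function_identity}.
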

\begin{proof}
	According to the definition of Lipschitz constant, we consider the following computation for two states $\ket{\psi}, \ket{\phi}\in \caH_C$ 
	\begin{align}
		|g(\ket{\psi},\Pi_C)-g(\ket{\phi},\Pi_C)|
		=\left|\left\|\psi_{B_1}-\Pi_C^{(B_1)}\right\|_1-\left\|\phi_{B_1}-\Pi_C^{(B_1)}\right\|_1\right| 
		\le \|\psi_{B_1}-\phi_{B_1}\|_1
		\le\|\psi-\phi\|_1\leq2\||\psi\rangle-|\phi\rangle\|_2, \notag
	\end{align}  
	where the computation is the same as \eref{Eq:Lipschitz_function_identity}.
	Hence, the Lipschitz constant of the function $g(\ket{\psi},\Pi_C)$ with respect to the variable $\ket{\psi}$ is $2$.
\end{proof}

% In the definition of $g(\ket{\psi}, \caH_C)$, $\ket{\psi}$ changes with $\caH_C$. However, in order to know the effect of the variation of $\caH_C$ on $g(\ket{\psi}, \caH_C)$, we need to fix the state $\ket{\psi}$ in some sense.
% Hence, 
Given a fixed subspace $\caH_{C_0}$ in the Grassmannian $\mathrm{Gr}(\caH_{B_1B_2},d_C)$ and a fixed state $\ket{\psi_0}$ in $\caH_{C_0}$, any subspace $\caH_{C}\in\mathrm{Gr}(\caH_{B_1B_2},d_C)$ is equal to the fixed subspace $\caH_{C_0}$ up to a unitary $U$ as $\caH_C=U\caH_{C_0}$. At the same time, $\ket{\psi_0}$ is transformed into another state $\ket{\psi}=U\ket{\psi_0}\in \caH_C$. Given $\caH_{C_0}$ and $\ket{\psi_0}$, we can define a function of the unitary $U\in \mathrm{SU}(d_{12})$ as
\begin{align}
	\label{Eq:h_function}
	h(U):=g(\ket{\psi}, \Pi_{C})=g(U\ket{\psi_0}, U\Pi_{C_0}U^{\dagger}).
\end{align}

To derive the concentration inequality for $h(U)$, we need the general concentration result about a function defined on $\mathrm{SU}(d)$.
Such result about $1$-Lipschitz function has been proven in \cite{aubrun2017alice} by using the Log-Sobolev inequalities and Herbst’s argument, below we
show a more general result about the $\kappa$-Lipschitz function, which can be proved easily using the same method in \cite{aubrun2017alice}. Here, the metric we use is simply defined by
\begin{align}
	M_{\mathrm{SU}}(U,W)=\|U-W\|_2,
\end{align}
for $U,W\in \mathrm{SU}(d)$.

\begin{lemma}
	\label{lem:concentration_SU}
	Suppose $f$ is a $\kappa$-Lipschitz function defined on the special unitary group $\mathrm{SU}(d)$, then for $\alpha\ge 0$,
	\begin{align}
		\Pr\left\{ f(U) \ge \bbE[f]+\alpha \right\}
		\le \exp\left(-\frac{d\alpha^2}{4\kappa^2}\right),
	\end{align}
	where $\bbE[f]$ is the expectation value of $f$ with respect to the Haar measure on $\mathrm{SU}(d)$.
\end{lemma}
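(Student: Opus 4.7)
The plan is to obtain the $\kappa$-Lipschitz statement from the $1$-Lipschitz concentration bound on $\mathrm{SU}(d)$ already established in \cite{aubrun2017alice} by a direct rescaling argument. Given any $\kappa$-Lipschitz function $f$ with respect to the metric $M_{\mathrm{SU}}(U,W)=\|U-W\|_2$, I would set $g:=f/\kappa$, which is $1$-Lipschitz with respect to the same metric. Applying the $1$-Lipschitz bound $\Pr\{g(U)\ge \bbE[g]+t\}\le \exp(-dt^2/4)$ with the choice $t=\alpha/\kappa$, and observing that the events $\{g(U)\ge \bbE[g]+\alpha/\kappa\}$ and $\{f(U)\ge \bbE[f]+\alpha\}$ coincide, immediately yields the desired inequality.

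For completeness, one can also reprove the claim directly via Herbst's argument, which is the route taken in \cite{aubrun2017alice}. The key ingredient is the Log-Sobolev inequality on $\mathrm{SU}(d)$ with a constant of order $1/d$, reflecting the fact that the Ricci curvature of the bi-invariant Riemannian metric on $\mathrm{SU}(d)$ scales linearly in $d$. For a $\kappa$-Lipschitz function $f$, the pointwise chain-rule estimate $|\nabla e^{\lambda f/2}|^2\le (\lambda^2 \kappa^2/4)\,e^{\lambda f}$ combined with the Log-Sobolev inequality applied to $F=e^{\lambda f/2}$ yields a differential inequality for the centered moment generating function $L(\lambda):=\bbE[e^{\lambda (f-\bbE f)}]$ of the form
\begin{align*}
\lambda L'(\lambda) - L(\lambda)\log L(\lambda) \le \frac{\kappa^2\lambda^2}{4d}\,L(\lambda).
\end{align*}
Dividing by $\lambda^2 L(\lambda)$ and integrating from $0$ to $\lambda$ yields $\log L(\lambda)\le \kappa^2\lambda^2/(4d)$. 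A Chernoff bound then gives $\Pr\{f\ge \bbE[f]+\alpha\}\le \exp(-\lambda\alpha+\kappa^2\lambda^2/(4d))$, and optimizing over $\lambda>0$ (with the minimizer $\lambda=2d\alpha/\kappa^2$) produces exactly the stated tail bound.

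There is no substantive obstacle: the $\kappa$-Lipschitz case is scale-equivalent to the $1$-Lipschitz case, so the statement is essentially a one-line corollary of the cited result. The only thing to track carefully is the numerical factor $1/4$ in the exponent, which is inherited directly from the Log-Sobolev constant on $\mathrm{SU}(d)$ used in \cite{aubrun2017alice} and is preserved under the rescaling $f\mapsto f/\kappa$.
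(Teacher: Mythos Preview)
Your proposal is correct and matches the paper's approach: the paper does not give a detailed proof but simply remarks that the $\kappa$-Lipschitz case ``can be proved easily using the same method in \cite{aubrun2017alice}'' (Log-Sobolev plus Herbst), which is precisely your second route. Your rescaling reduction $g=f/\kappa$ to the $1$-Lipschitz result is an even cleaner way to get there and is entirely sound.
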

Based on the above theorem, we have the following concentration result on the function $h(U)$ on $\mathrm{SU}(d_{12})$.

\begin{lemma}
	\label{lem:concentrate_of_unitary_group}
	Given a subspace $\caH_{C_0}\in \mathrm{Gr}(\caH_{B_1B_2},d_C)$ and a state $\ket{\psi_0}$ in $\caH_{C_0}$, the following statements are valid.
	(1) The Lipschitz constant of $h(U)$ is $4$.
	(2) The function $h(U)$ satisfies the following concentration inequality, 
	\begin{align}
		\mathrm{Pr}\left\{ h(U) \ge \bbE[h] + \alpha \right\} \le \exp\left( -\frac{d_{12}}{64}\alpha^2\right),
	\end{align}
	where $\alpha\ge 0$  and 
	\begin{align}
		\label{Eq:expection_projector_upper_bound}
		\bbE[h]\le \sqrt{\frac{d_C-1}{d_C}\frac{d_{12}(d_1^2-1)}{d_{12}^2-1}}.
	\end{align}
\end{lemma}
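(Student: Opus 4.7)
For part (1), I will bound $|h(U)-h(W)|$ by the triangle inequality, separating the contributions from the two operators inside the trace norm:
\begin{align}
|h(U)-h(W)| \le \left\| \Tr_{B_2}(U\psi_0 U^\dagger - W\psi_0 W^\dagger) \right\|_1 + \frac{1}{d_C}\left\| \Tr_{B_2}(U\Pi_{C_0}U^\dagger - W\Pi_{C_0}W^\dagger) \right\|_1.
\end{align}
Using contractivity of the partial trace under the trace norm together with the identity $UAU^\dagger - WAW^\dagger = (U-W)AU^\dagger + WA(U^\dagger-W^\dagger)$, each term is bounded by $2\|U-W\|_\infty$: for the projector term, the prefactor $1/d_C$ cancels the factor $\|\Pi_{C_0}\|_1 = d_C$. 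Since $\|U-W\|_\infty \le \|U-W\|_2$, this yields Lipschitz constant $4$, exactly as in the Grassmannian estimate \eqref{eq:clucuate_lip_constant_function_q_1}--\eqref{eq:clucuate_lip_constant_function_q_2}.

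For the concentration bound in part (2), I apply \Lref{lem:concentration_SU} directly with $\kappa = 4$, which immediately gives $\Pr\{h(U) \ge \bbE[h] + \alpha\} \le \exp(-d_{12}\alpha^2/64)$.

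The substantive work is to bound $\bbE[h]$. The plan is to first use the dimension inequality $\|\cdot\|_1 \le \sqrt{d_1}\|\cdot\|_2$ and Jensen's inequality, giving
\begin{align}
\bbE[h] \le \sqrt{d_1}\,\sqrt{\underset{U\sim\mu(d_{12})}{\bbE}\left\| \Tr_{B_2}(\tilde{\psi} - \tilde{\Pi}_C/d_C) \right\|_2^2},
\end{align}
where $\tilde{\psi} := U\psi_0 U^\dagger$ and $\tilde{\Pi}_C := U\Pi_{C_0}U^\dagger$. Expanding the Hilbert--Schmidt norm squared produces three terms: $\bbE\Tr(\tilde{\psi}_{B_1}^2)$, a cross term $\bbE\Tr(\tilde{\psi}_{B_1}\tilde{\Pi}_C^{(B_1)})/d_C$, and $\bbE\Tr((\tilde{\Pi}_C^{(B_1)})^2)/d_C^2$. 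Each is computed by the second-moment Haar formula
\begin{align}
\underset{U\sim\mu(d_{12})}{\bbE}\bigl[U^{\otimes 2} X (U^\dagger)^{\otimes 2}\bigr] = \frac{d_{12}\Tr(X) - \Tr(\bbF X)}{d_{12}(d_{12}^2-1)}\,\mathbb{I} + \frac{d_{12}\Tr(\bbF X) - \Tr(X)}{d_{12}(d_{12}^2-1)}\,\bbF,
\end{align}
applied successively to $X = \psi_0 \otimes \psi_0$, $X = \psi_0 \otimes \Pi_{C_0}$, and $X = \Pi_{C_0}^{\otimes 2}$; the key arithmetic inputs are $\Tr(\psi_0\Pi_{C_0}) = 1$ (since $|\psi_0\rangle\in\caH_{C_0}$), $\Tr(\Pi_{C_0}) = \Tr(\Pi_{C_0}^2) = d_C$, and $\Tr[\bbF(\bbF_{B_1}\otimes \mathbb{I}_{B_2})] = d_1^2 d_2$.

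The main obstacle is the algebraic simplification once the three terms are assembled. Here I expect a telescoping cancellation: the cross term and the projector self-term turn out to share the common value $[(d_{12}d_C-1)d_2 + (d_{12}-d_C)d_1]/[d_C(d_{12}^2-1)]$, so the twice-cross-term cancels half of itself against the self-term, reducing the expression to $\frac{(d_C-1)d_2(d_1^2-1)}{d_C(d_{12}^2-1)}$. Multiplying by $d_1 = d_{12}/d_2$ inside the square root yields precisely the claimed upper bound \eqref{Eq:expection_projector_upper_bound}. Special care must be taken with the cross term because $\psi_0$ and $\Pi_{C_0}$ are correlated through the inclusion $|\psi_0\rangle\in\caH_{C_0}$; the identity $\psi_0 = \Pi_{C_0}\psi_0\Pi_{C_0}$ is exactly what produces the clean value $\Tr(\psi_0 \Pi_{C_0}) = 1$ that drives the cancellation.
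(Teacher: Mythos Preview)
Your proposal is correct and follows essentially the same approach as the paper. The only cosmetic difference is in part (1): the paper bundles the two pieces into $\Delta\psi_0:=\psi_0-\Pi_{C_0}/d_C$ and uses $\|\Delta\psi_0\|_1=2-2/d_C<2$ in a single Hölder estimate, whereas you split into the $\psi_0$ and $\Pi_{C_0}/d_C$ contributions separately; both routes land on the same Lipschitz constant $4$, and your expectation computation via the second-moment Haar formula (including the observed coincidence between the normalized cross term and the normalized self term) is exactly the content of the paper's \pref{Prop:expectation_2_norm_projector}.
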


\begin{proof}
	First, we compute the Lipschitz constant of $h(U)$. Let $\Delta\psi_0:=\psi_0-\Pi_{C_0}/d_C$ and $U,W\in \mathrm{SU}(d_{12})$, then
	\begin{align}
		\label{Eq:lipschitz_projector}
		|h(U)-h(W)| &= |\|\mathrm{Tr}_{B_1}(U\Delta\psi_0U^\dagger)\|_1-\|\mathrm{Tr}_{B_1}(W\Delta\psi_0 W^\dagger)\|_1| \notag\\
		&\le \|\mathrm{Tr}_{B_1}(U\Delta\psi_0U^\dagger) - \mathrm{Tr}_{B_1}(W\Delta\psi_0 W^\dagger)\|_1 \notag \\
		&\leq\|U\Delta\psi_0U^\dagger - W\Delta\psi_0 W^\dagger\|_1 \notag\\
		&= \|(U-W)\Delta\psi_0U^\dagger + W\Delta\psi_0(U-W)^\dagger\|_1 \notag\\
		&\leq \|U-W\|_2\|\Delta\psi_0\|_1\|U\|_\infty+\|U-W\|_2\|\Delta\psi_0\|_1\|W\|_\infty \notag\\
		&\le 4\|U-W\|_2, 
	\end{align}
	where we have used $\|ABC\|_1\le \|A\|_{\infty}\|B\|_1\|C\|_{\infty}$ for three operators $A,B,C$, and in the last inequality we used $\|\Delta\psi_0\|_1=2-2/d_C<2$ and $\|U\|_\infty=\|W\|_\infty=1$, then we conclude that the Lipschitz constant $h(U)$ is 4.

	Next, we need to compute the expectation value of $h(U)$ as
	\begin{align}
		\underset{U\sim \mu(d_{12})}{\bbE}h(U)
		&=\underset{U\sim \mu(d_{12})}{\bbE}\left\|\mathrm{Tr}_{B_2}(U\psi_0U^\dag)-\mathrm{Tr}_{B_2}\left(U\Pi_CU^\dag\right)\right\|_1 \notag\\
		&\leq \sqrt{d_1} \sqrt{ \underset{U\sim \mu(d_{12})}{\bbE}\left\|\mathrm{Tr}_{B_2}(U\psi_0U^\dag)-\mathrm{Tr}_{B_2}\left(U\Pi_CU^\dag\right)\right\|_2^2} \notag\\
		&=\sqrt{\frac{d_C-1}{d_C}\frac{d_{12}(d_1^2-1)}{d_{12}^2-1}},
	\end{align}
	where in the second line, we have used the following \pref{Prop:expectation_2_norm_projector}. Finally, we directly apply \lref{lem:concentration_SU} to $h(U)$ and obtain the desired concentration inequality.
\end{proof}

Additionally, it is worth emphasizing that the expectation value of $h(U)$ is indepentent with $\ket{\psi_0}$, and therefore $\bbE_{U\sim \mu(d_{12})} h(U)$ is exactly the same as the expectation value of $\bbE_{\caH_C,\ket{\psi}}[D(\psi_{B_1}, \Pi_C^{(B_1)})]$ with respect to a uniform random subspace $\caH_C\in  \mathrm{Gr}(\caH_{B_1B_2},d_C)$ and a Haar random pure state $\ket{\psi}\in \caH_C$, i.e.
\begin{align}
	\underset{\caH_C}{\bbE} [\subinaA_{B_1}(\caH_C)]
	&=\underset{\caH_C}{\bbE}\,\underset{\ket{\psi}\sim \mu(\caH_C)}{\bbE}\,[D(\psi_{B_1}, \Pi_C^{(B_1)})]
	=\underset{U\sim \mu(d_{12})}{\bbE}\underset{W\sim \mu(d_C)}{\bbE}g(UW\ket{\psi_0}, U\Pi_{C_0}U^{\dagger}) \notag\\
	&=\underset{U\sim \mu(d_{12})}{\bbE}g(U\ket{\psi_0}, U\Pi_{C_0}U^{\dagger}) 
	=\underset{U\sim \mu(d_{12})}{\bbE}h(U).
	\label{eq:proof_expectation_psi_pi_upper_bound}
\end{align}

\begin{proposition}
	\label{Prop:expectation_2_norm_projector}
	Suppose $\caH_{C_0}$ is a fixed subspace of $\caH_{B_1B_2}$ with projector $\Pi_{C_0}$ and $\ket{\psi_0}$ is a fixed state in $\caH_{C_0}$. Then
	\begin{align}
		\underset{U\sim \mu(d_{12})}{\bbE}\left\|\mathrm{Tr}_{B_2}(U\psi_0 U^\dag)-\mathrm{Tr}_{B_2}\left(U\Pi_{C_0}U^\dag\right)\right\|_2^2 
		= \frac{d_C-1}{d_C}\frac{d_2(d_1^2-1)}{d_{12}^2-1}.
	\end{align}
\end{proposition}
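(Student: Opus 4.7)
\emph{Proof proposal.} The plan is to mirror the strategy used in \pref{Prop:expectation_2_norm_identity}: rewrite the squared Hilbert--Schmidt norm as a trace on $\caH_{B_1 B_2}^{\otimes 2}$ and then apply the Haar-average identity for $U^{\otimes 2}(\cdot)(U^\dagger)^{\otimes 2}$. Let $M := \psi_0 - \Pi_{C_0}$, with $\Pi_{C_0}$ read as the normalized projector onto $\caH_{C_0}$ (so that $\Tr(\Pi_{C_0}) = 1$), consistent with the $\Pi_C$ convention of \sref{sec:RQIMdef}. Since $\psi_0 \in \caH_{C_0}$, the operator $M$ has one eigenvalue $1 - 1/d_C$, $d_C - 1$ eigenvalues $-1/d_C$, and vanishes on $\caH_{C_0}^{\perp}$; hence $\Tr(M) = 0$ and $\Tr(M^2) = (d_C - 1)/d_C$.

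The first step is to use the identity $\Tr\bigl[\Tr_{B_2}(X)\,\Tr_{B_2}(Y)\bigr] = \Tr\bigl[(X \otimes Y)(\bbF_{B_1} \otimes \bbI_{B_2})\bigr]$, with $\bbF_{B_1}$ the swap on two copies of $B_1$ and $\bbI_{B_2}$ the identity on two copies of $B_2$, to write
\begin{align}
\bigl\|\Tr_{B_2}(UMU^\dagger)\bigr\|_2^2 = \Tr\bigl[(U \otimes U)(M \otimes M)(U^\dagger \otimes U^\dagger)(\bbF_{B_1} \otimes \bbI_{B_2})\bigr].
\end{align}
The second step is to invoke the standard two-copy twirl
\begin{align}
\underset{U \sim \mu(d_{12})}{\bbE}\bigl[(U \otimes U)(M \otimes M)(U^\dagger \otimes U^\dagger)\bigr] = c_1\, \bbI + c_2\, \bbF,
\end{align}
with $\bbF$ the swap on $\caH_{B_1 B_2}^{\otimes 2}$, and fix $(c_1,c_2)$ from the pair $c_1 d_{12}^2 + c_2 d_{12} = \Tr(M)^2 = 0$ and $c_1 d_{12} + c_2 d_{12}^2 = \Tr(M^2) = (d_C - 1)/d_C$, yielding $c_1 = -(d_C - 1)/[d_C d_{12}(d_{12}^2 - 1)]$ and $c_2 = (d_C - 1)/[d_C(d_{12}^2 - 1)]$.

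The last step is to contract with the test operator via the two swap traces $\Tr[\bbF_{B_1} \otimes \bbI_{B_2}] = d_1 d_2^2$ and $\Tr[\bbF (\bbF_{B_1} \otimes \bbI_{B_2})] = \Tr[\bbI_{B_1} \otimes \bbF_{B_2}] = d_1^2 d_2$, giving
\begin{align}
\underset{U}{\bbE}\,\bigl\|\Tr_{B_2}(UMU^\dagger)\bigr\|_2^2 = c_1\, d_1 d_2^2 + c_2\, d_1^2 d_2 = \frac{(d_C - 1)\, d_2\, (d_1^2 - 1)}{d_C (d_{12}^2 - 1)},
\end{align}
after using $d_1 d_2 = d_{12}$ to simplify $-d_1 d_2^2/d_{12} + d_1^2 d_2$ to $d_2(d_1^2 - 1)$. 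There is no serious conceptual obstacle; the only real pitfall is keeping the normalization of $\Pi_{C_0}$ straight (so that $M$ is traceless and the identity sector does not contribute a spurious term) and executing the two swap traces correctly---the same bookkeeping already exercised in the proof of \pref{Prop:expectation_2_norm_identity}.
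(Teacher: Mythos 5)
Your proof is correct and follows essentially the same route as the paper: both rewrite the squared $2$-norm via the swap-trick identity $\Tr[\Tr_{B_2}(X)\Tr_{B_2}(Y)]=\Tr[(X\otimes Y)(\bbF_{B_1}\otimes\bbI_{B_2})]$ and then apply the standard two-copy Haar twirl, with your values of $\Tr(M)=0$, $\Tr(M^2)=(d_C-1)/d_C$, the coefficients $c_1,c_2$, and the two swap traces $d_1d_2^2$ and $d_1^2d_2$ all checking out. The only cosmetic difference is that you twirl the single traceless operator $M\otimes M$ and read off both coefficients at once, whereas the paper expands the square into three terms ($\psi_0^{\otimes2}$, $\Pi_{C_0}^{\otimes2}$, and the cross term) and twirls each separately with explicit $1/d_C$ normalization factors; your reading of $\Pi_{C_0}$ as the trace-one projector matches the intended convention and yields the identical result.
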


\begin{proof}
	Similar to the proof of \pref{Prop:expectation_2_norm_identity},
	\begin{align}
		\left\|\mathrm{Tr}_{B_2}(U\psi_0 U^\dag)-\mathrm{Tr}_{B_2}\left(U\Pi_{C_0}U^\dag\right)\right\|_2^2 
		&=\mathrm{Tr}\left[\left(U\psi_0 U^\dag\right)^{\otimes 2} \mathbb{F}_{B_1} \otimes \mathbb{I}_{B_2}\right]
		+\frac{1}{d_C^2}\mathrm{Tr}\left[\left(U\Pi_{C_0} U^\dag\right)^{\otimes 2} \mathbb{F}_{B_1} \otimes \mathbb{I}_{B_2}\right]\notag\\
		&\quad-\frac{2}{d_C}\mathrm{Tr}\left[\left(U\psi_0 U^\dag \otimes U\Pi_{C_0} U^\dag \right) \mathbb{F}_{B_1} \otimes \mathbb{I}_{B_2}\right].
	\end{align}
	Since we have the following equations,
	\begin{align}
		&\underset{U\sim \mu(d_{12})}{\bbE}\left[\left(U\psi_0 U^\dag\right)^ {\otimes 2} \right]
		=\frac{\mathbb{I}_{B_1}\otimes\mathbb{I}_{B_2}+\mathbb{F}_{B_1}\otimes\mathbb{F}_{B_2}}{d_{12}(d_{12}+1)}, \notag\\
		&\underset{U\sim \mu(d_{12})}{\bbE}\left[\left(U\Pi_{C_0} U^\dag\right)^ {\otimes 2} \right]=\frac{d_{12}d_C^2-d_C}{d_{12}(d_{12}^2-1)}\mathbb{I}_{B_1}\otimes\mathbb{I}_{B_2}+\frac{d_{12}d_C-d_C^2}{d_{12}(d_{12}^2-1)}\mathbb{F}_{B_1}\otimes\mathbb{F}_{B_2}, \\
		&\underset{U\sim \mu(d_{12})}{\bbE}\left[U^{\otimes2}\left(\psi_0\otimes\Pi_{C_0} \right){U^\dag}^{\otimes2} \right]=\frac{d_{12}d_C-1}{d_{12}(d_{12}^2-1)}\mathbb{I}_{B_1}\otimes\mathbb{I}_{B_2}+\frac{d_{12}-d_C}{d_{12}(d_{12}^2-1)}\mathbb{F}_{B_1}\otimes\mathbb{F}_{B_2}.\notag
	\end{align}
	then
	\begin{align}
		\label{Eq:some_expectation_values}
		&\underset{U\sim \mu(d_{12})}{\bbE}\mathrm{Tr}\left[\left(U\psi_0 U^\dag\right)^ {\otimes 2} \right]=\frac{d_1+d_2}{d_{12}+1},\notag \\
		&\underset{U\sim \mu(d_{12})}{\bbE}\mathrm{Tr}\left[\left(U\Pi_{C_0} U^\dag\right)^ {\otimes 2} \right]
		=\underset{U\sim \mu(d_{12})}{\bbE}\frac{1}{d_C} \mathrm{Tr}\left[U^{\otimes2}\left(\psi_0\otimes\Pi_{C_0} \right){U^\dag}^{\otimes2} \right], \\
		&\underset{U\sim \mu(d_{12})}{\bbE}\mathrm{Tr}\left[U^{\otimes2}\left(\psi_0\otimes\Pi_{C_0} \right){U^\dag}^{\otimes2} \right]
		= \frac{d_{12}^2d_C^2-d_2d_C}{d_{12}^2-1 } + \frac{d_1^2d_2d_C-d_1d_C^2}{d_{12}^2-1 }.\notag
	\end{align}
	Finally, we get
	\begin{align}
		\underset{U\sim \mu(d_{12})}{\bbE}\left\|\mathrm{Tr}_{B_2}(U\psi_0 U^\dag)-\mathrm{Tr}_{B_2}\left(U\Pi_{C_0}U^\dag\right)\right\|_2^2 
		= \frac{d_C-1}{d_C}\frac{d_2(d_1^2-1)}{d_{12}^2-1}.
	\end{align}
\end{proof}
Again, similar to \eref{eq:proof_expectation_psi_pi_upper_bound}, the above result also implies that
\begin{align}
	\underset{\caH_C}{\bbE}\,\underset{\ket{\psi}\sim\mu(\caH_C)}{\bbE}\,\left\|\psi_{B_1}, \Pi_C^{(B_1)}\right\|_2^2
	&=\underset{U\sim \mu(d_{12})}{\bbE}\underset{W\sim \mu(d_C)}{\bbE}\left\|\mathrm{Tr}_{B_2}(U W\psi_0 W^\dag U^\dag)-\mathrm{Tr}_{B_2}\left(U\Pi_CU^\dag\right)\right\|_2^2 \notag \\
	&=\underset{U\sim \mu(d_{12})}{\bbE}\left\|\mathrm{Tr}_{B_2}(U\psi_0 U^\dag)-\mathrm{Tr}_{B_2}\left(U\Pi_CU^\dag\right)\right\|_2^2  \notag \\
	&= \frac{d_C-1}{d_C}\frac{d_2(d_1^2-1)}{d_{12}^2-1}.
	\label{eq:proof_expectation_psi_pi_2norm}
\end{align}

\subsubsection{Proof of \thref{theo:trace_projector_bipartite}}

\begin{proof}
	Given a $d_C$-dimensional subspace $\caH_C\in \mathrm{Gr}(\caH_{B_1B_2},d_C)$, according to \lref{lem:epsilon_net}, we can construct a $\epsilon$-net $\caL_C$ of $\caH_C$ with $|\caL_C|=(5/\epsilon)^{2d_C}$ and the $2$-Lipschitz function $g(\ket{\psi}, \Pi_C)$ satisfies
	\begin{align}
		\subinaM_{B_1}(\caH_C)
		=\max_{\ket{\psi}\in \caH_C} g(\ket{\psi}, \Pi_C)
		\le \max_{\ket{\widetilde{\psi}}\in \caH_C} g(\ket{\widetilde{\psi}}, \Pi_C) + \epsilon.
	\end{align}
	Hence, the probability of a random subspace $\caH_C$, such that there exists a state $\ket{\psi}\in \caH_C$ satisfying $g(\psi,\Pi_C)\geq s+\alpha$ with $0<\alpha<1$ and $s$ being an undetermined constant, is 
	\begin{align}
		\mathrm{Pr}\left\{\max_{\ket{\psi}\in \caH_C} g(\ket{\psi}, \Pi_C)\geq s+\alpha\right\}
		&\leq \mathrm{Pr}\left\{\max_{\ket{\widetilde{\psi}}\in \caH_C} g(\ket{\widetilde{\psi}}, \Pi_C) \ge s+\alpha-\epsilon\right\} \notag\\
		&=\mathrm{Pr}\left\{\underset{\ket{\widetilde{\psi}}\in \caL_C}{\bigvee}
		g(\ket{\widetilde{\psi}},\Pi_C)\geq s+\alpha-\epsilon\right\} \notag\\
		&\leq \sum_{\ket{\widetilde{\psi}}\in \caL_C}\mathrm{Pr}\left\{g(\tilde{\psi},\Pi_C)\geq s+\alpha-\epsilon\right\}
	\end{align}
	where $\bigvee$ denotes the union of events and in the third inequality, we have used the union bound. Let $\epsilon=\alpha/2$, then we have
	\begin{align}
		\mathrm{Pr}\left\{\max_{\ket{\psi}\in \caH_C} g(\ket{\psi}, \Pi_C)\geq s+\alpha\right\}
		\leq \sum_{\ket{\widetilde{\psi}}\in \caL_C}\mathrm{Pr}\left\{g(\tilde{\psi},\Pi_C)\geq s+\frac{\alpha}{2}\right\}
		\label{eq:append_B_prob_ineq_1}
	\end{align}
	So the next step is to choose an appropriate parameter $s$ and calculate an upper bound of the probability appearing in the right-hand side.
	% $\mathrm{Pr}\left\{g(\tilde{\psi},\caH_C)\geq E+\alpha/2\right\}$ is just the probability of a random subspace $\caH_C$ with a particular state $\ket{\psi}\in \caH_C$ satisfying $g(\tilde{\psi},\caH_C)\geq E+\alpha/2$, so 
	
	Let's fix a $d_C$-dimensional subspace $\caH_{C_0}\in \mathrm{Gr}(\caH_{B_1B_2},d_C)$ and a state $\ket{\psi_0}\in \caH_{C_0}$ satisfying $\caH_{C}=U\caH_{C_0}$ and $\ket{\psi}=U\ket{\psi_0}$ with $U\in \mathrm{SU}(d_{12})$, then $h(U)=g(\tilde{\psi},\caH_C)$ according to \eref{Eq:h_function}. Hence, the probability becomes
	\begin{align}
		\mathrm{Pr}\left\{g(\tilde{\psi},\Pi_C)\geq s+\frac{\alpha}{2}\right\}
		=\mathrm{Pr}\left\{h(U)\geq s+\frac{\alpha}{2}\right\}.
	\end{align}
	The above procedure has changed the probability about a random subspace in the Grassmannian $\mathrm{Gr}(\caH_{B_1B_2},d_C)$ and a random state into another probability about a random unitary in $\mathrm{SU}(d_{12})$.
	According to \lref{lem:concentrate_of_unitary_group}, let $s$ be the upper bound of $\bbE(h)$ in \eref{Eq:expection_projector_upper_bound}, then we get
	\begin{align}
		\mathrm{Pr}\left\{g(\tilde{\psi},\Pi_C)\geq s+\frac{\alpha}{2}\right\}
		\le
		\mathrm{Pr}\left\{h(U)\geq \bbE(h)+\frac{\alpha}{2}\right\}
		\le \exp\left( -\frac{d_{12}}{256}\alpha^2\right).
		\label{eq:append_B_prob_ineq_2}
	\end{align}
	Finally, combining \eref{eq:append_B_prob_ineq_1} and \eref{eq:append_B_prob_ineq_2}, we get
	\begin{align}
		\mathrm{Pr}\left\{\max_{\ket{\psi}\in \caH_C} g(\ket{\psi}, \Pi_C)\geq \sqrt{\frac{d_C-1}{d_C}\frac{d_{12}(d_1^2-1)}{d_{12}^2-1}}+\alpha\right\}
		&\leq\left(\frac{10}{\alpha}\right)^{2d_C}\exp\left(-\frac{d_{12}\alpha^2}{256}\right).
	\end{align}
	where we used $|\caL_C|=(5/\epsilon)^{2d_C}=(10/\alpha)^{2d_C}$.
\end{proof}

\subsection{Proof of \pref{prop:trace_projector_identity_bipartite}}
\label{appendix:trace_identity_projector}

\begin{proof}
	As in the proof of \thref{theo:trace_projector_bipartite}, we also fix a subspace $\caH_{C_0}$ in the Grassmannian $\mathrm{Gr}(\caH_{B_1B_2},d_C)$ and transform the trace distance $D(\Pi_C^{(B_1)},\widetilde{\md{1}}_{B_1})$ into a function of $U\in \text{SU}(d_{12})$.
	Then we compute the Lipschitz constant of $D(\Pi_C^{(B_1)},\widetilde{\md{1}}_{B_1})=D(\Tr_{B_2}(U \Pi_{C_0} U^{\dagger}),\widetilde{\md{1}}_{B_1})$ as a function of unitary $U\in \mathrm{SU}(d_{12})$. Suppose $U,W\in \text{SU}(d_{12})$, then
	\begin{align}
		&\left|D(\Tr_{B_2}(U \Pi_{C_0} U^{\dagger}),\widetilde{\md{1}}_{B_1}) - D(\Tr_{B_2}(W\Pi_{C_0} W^{\dagger}),\widetilde{\md{1}}_{B_1})  \right|\notag\\
		&\le \left \| \Tr_{B_2}(U \Pi_C U^{\dagger}) -  \Tr_{B_2}(W\Pi_{C_0} W^{\dagger})    \right\|_1  \notag\\
		&\le \left \| U \Pi_{C_0} U^{\dagger} -  W\Pi_{C_0} W^{\dagger}  \right\|_1  \notag\\
		&= \left \| (U-W) \Pi_{C_0} U^{\dagger} -  W\Pi_{C_0} (U^{\dagger}-W^{\dagger})  \right\|_1  \notag\\
		&\le \| U-W\|_2 \|\Pi_{C_0}\|_1 \|U^{\dagger}\|_{\infty} + \| W\|_{\infty} \|\Pi_{C_0}\|_1 \|(U^{\dagger}-W^{\dagger})  \|_2  \notag\\
		&\le 2\|U-W\|_2,
	\end{align}
	where we have used the $\|\Pi_C\|_1=1$ and $\|U^{\dagger}\|_{\infty}=1$. The above computation is similar to that in \eref{Eq:lipschitz_projector}. Next, we consider the upper bound of the expectation value of $D(\Pi_C^{(B_1)},\widetilde{\md{1}}_{B_1})$. According to \eref{Eq:some_expectation_values}, for $\caH_C\in \mathrm{Gr}(d_{12},d_C)$
	\begin{align}
		\underset{\caH_C}{\bbE} \left[\left\| \Pi_C^{(B_1)}-\widetilde{\md{1}}_{B_1}\right\|_2^2\right] 
		&=\frac{1}{d_C^2} \underset{\caH_C}{\bbE}\left[\Tr\left(\Pi_C^{(B_1)}\Pi_C^{(B_1)} \right) \right]  -\frac{1}{d_1} \notag \\
		&=\frac{1}{d_C^2}\underset{U\sim \mu(d_{12})}{\bbE}\mathrm{Tr}\left[U^{\otimes2}\left(\psi_0\otimes\Pi_{C_0} \right){U^\dag}^{\otimes2} \right]-\frac{1}{d_1}
		\notag \\
		&=\frac{d_{12}-d_C}{d_1d_C}\frac{d_1^2-1}{d_{12}^2-1}. \label{eq:proof_expectation_pi_identity_2norm}
	\end{align}
	Then we can get
	\begin{align}
		\underset{\caH_C}{\bbE}\left[ D(\Pi_C^{(B_1)},\widetilde{\md{1}}_{B_1})
		\right]
		\le \sqrt{d_1} \sqrt{\underset{\caH_C}{\bbE} \left\| \Pi_C^{(B_1)}-\widetilde{\md{1}}_{B_1} \right\|_2^2  } 
		= \sqrt{\frac{d_{12}-d_C}{d_C}\frac{d_1^2-1}{d_{12}^2-1}}. \label{eq:proof_expectation_pi_identity_upper_bound}
	\end{align}
	Hence, we can directly obtain the following concentration inequality by applying \lref{lem:concentration_SU}
	\begin{align}
		\Pr\left\{ D( \Pi_{C}^{(B_1)}, \widetilde{\md{1}}_{B_1}) > t + \alpha\right\} 
		\le
		\Pr\left\{ D( \Pi_{C}^{(B_1)}, \widetilde{\md{1}}_{B_1}) > \underset{\caH_C}{\bbE}\left[ D(\Pi_C^{(B_1)},\widetilde{\md{1}}_{B_1})
		\right] + \alpha\right\}
		\le
		\exp\left(-\frac{d_{12}\alpha^2}{16}\right),
	\end{align}
	with $t=\sqrt{\frac{d_{12}-d_C}{d_C}\frac{d_1^2-1}{d_{12}^2-1}}$.
\end{proof}

\section{Consequences of \thref{theo:subpsace_identity_multipartite}}
\label{app:consequences_multipartite}

Here we primarily address this question: Given a fixed $d_C$ and a fixed inaccuracy $\delta$, what conditions must $k$ and $m$ satisfy?
This question corresponds to the following scenario: Given a Hilbert space $\caH_A$ with a known dimension $d_A = d_C$, we aim to mask it into a multipartite Hilbert space $\caH_{B_1 \dots B_m}$ up to a chosen inaccuracy $\delta$, then we ask what is the minimum number of parties $m$ required to achieve such masking, and what level of $k$-uniform masking can be realized?

According to \thref{theo:subpsace_identity_multipartite}, the inaccuracy $\delta$ is given by $d^{k-m/2} + \alpha$. Since $\alpha$ can be freely chosen, its value relative to $d^{k-m/2}$ leads to three possible cases:
(1) $\alpha$ is a fixed small constant greater than $d^{k-m/2}$.
(2) $\alpha$ is larger than $d^{k-m/2}$ but depends on $k$ and $m$.
(3) $\alpha$ is smaller than $d^{k-m/2}$.
In the following, we discuss each case in detail.

\begin{figure}
	\centering
	\includegraphics[width=1.0\linewidth]{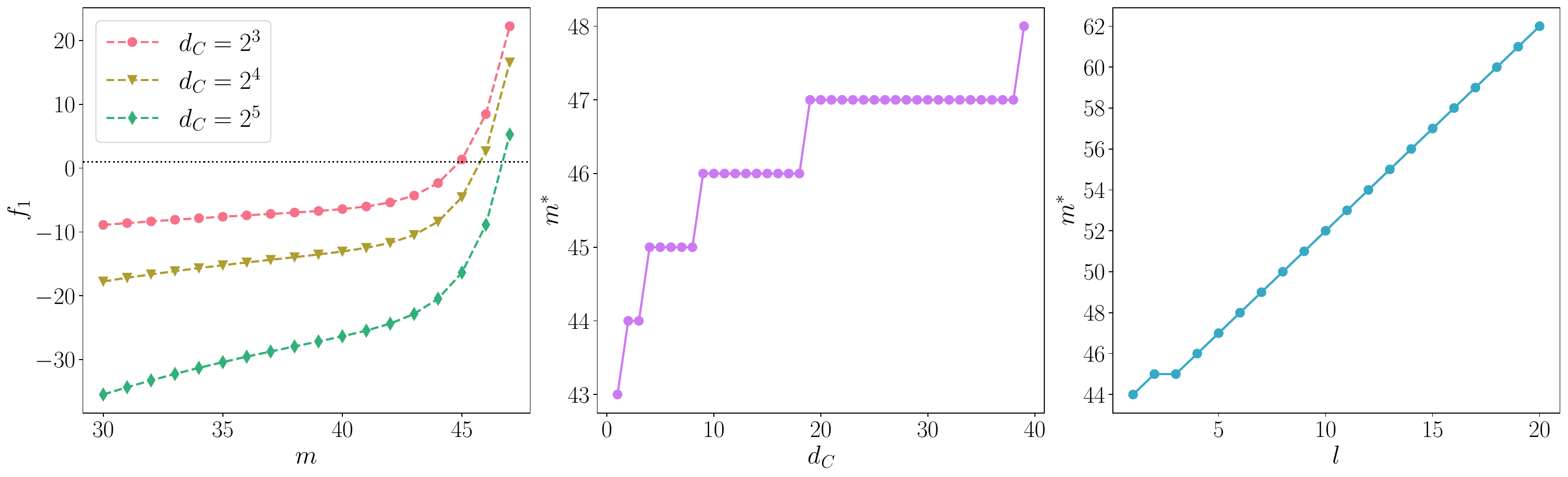}
	\caption{Illustration of Case $1$. For all three figures, we set $\alpha=0.0001$, $d=2$. The left figure represents the variation of $f_1$ with respect to $m$, given $d_C=2^l$ with $l=3,4,5$, and the black dashed line represents $f_1=1$. Because of the exponential term $d^m$ in $f_1$, the value of $f_1$ increases rapidly. 
		The middle figure represents the variation of $m^*$ with respect to $d_C$. Letting $d_C=2^l$, the right figure shows the threshold $m^*$ changes linearly with respect to $l$. }
	\label{fig:q1_case1}
\end{figure}

First, we consider the case where $\alpha$ is a fixed constant greater than $d^{k-m/2}$. Since $\alpha \ge d^{k-m/2}$, the behavior of the inaccuracy $\delta$ is dominated by $\alpha$, and we get a simple constraint about $k$ and $m$,
\begin{align}
	\label{Eq:case1_inaccuracy_condition}
	k \le \frac{m}{2} +\frac{\ln\alpha}{\ln d}.
\end{align}
Note that the definition of $k$-uniform masking already requires $k\le m/2$, then the above inequality provides a tighter bound for $k$.
For example, if $d=2$ and $\alpha=0.001$, then $k\lessapprox m/2 - 10$, which means we can only realize a $k$-uniform masking in the multipartite Hilbert space $\caH_{B_1 \dots B_m}$ with $k\lessapprox m/2 - 10$.
Next, we require the probability in \eref{Eq:prob_multipartite_idenity} to be positive, i.e.,
there exists a $2\alpha$-maskable subspaces $\caH_C$, then we get 
\begin{align}
	\ln2 + \ln\left[\binom{m}{k} \right]+2d_C\ln\left( \frac{10}{\alpha}\right) -\frac{d^m\alpha^2}{72\pi^3\ln2} \le 0.
\end{align}
Using the inequality $\binom{m}{k}\le 2^{mH(k/m)}$, we get a tighter bound as
\begin{align}
	\label{Eq:case1_H_upper_bound}
	H\left( \frac{k}{m} \right)
	\le f_1(m, d_C):= \frac{1}{(\ln2)m} \left[ \frac{d^m \alpha^2}{72\pi^3\ln2} - 2d_C \ln\left( \frac{10}{\alpha}\right) -\ln 2 \right].
\end{align}
Note that the right-hand side of the above inequality, denoted by $f_1(m, d_C)$, is independent of $k$.                                     
Since $0 \leq H(k/m) \leq 1$, if $f_1$ is negative, a $2\alpha$-maskable subspace cannot exist. Therefore, we must determine the values of $m$ for which the right-hand side is positive. Due to the presence of an exponential term $d^m$, $f_1$ rapidly increases from 0 to values greater than 1, as shown in the left figure of \fref{fig:q1_case1}.
Thus, it suffices to determine the smallest integer $m$ for which $f_1$ exceeds 1, i.e., $m^*(d_C)=\min\{m\in\bbN^+:f_1(m,d_C)\ge 1\}$, which is the threshold of $m$ for the existence of $2\alpha$-maskable subspaces.  
If we consider the qubit system, i.e., $d=2$ and let the subspace dimension be $d_C=2^l$, then the threshold number $m^*$ depends on $l$ linearly as shown in \fref{fig:q1_case1}, because the factor $1/[(\ln2)m]$ is negligible compared with $2^m$ when $m$ is large.
To get the slope of the curve of $m^*$ over $l$, we need to analyze the asymptotic behavior of $m^*$. When $d_C$ becomes very large, the inequality $f_1(m,d_C)\ge 1$ becomes
\begin{align}
	d^{m^*} \ge \frac{72 \pi^3 \ln2}{\alpha^2} \times 2d_C \ln\left( \frac{10}{\alpha}\right),
\end{align}
taking logarithm of both sides, then we get
\begin{align}
	m^* \propto \frac{\ln d_C}{\ln d} = l,
\end{align}
where we have chosen $d_C=d^l$. This result is consistent with numerical results shown in \fref{fig:q1_case1}.

\begin{figure}
	\centering
	\includegraphics[width=1.0\linewidth]{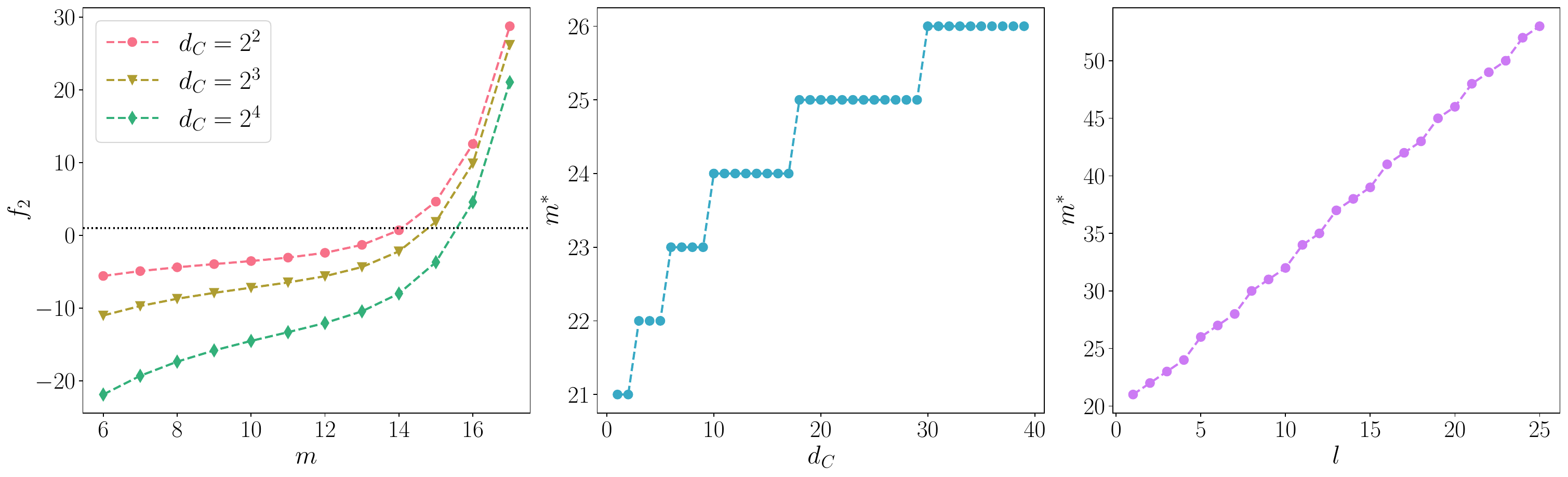}
	\caption{Illustration of Case $2$. For all three figures, we set $\zeta=1/4$, $d=2$. The left figure represents the variation of $f_2$ with respect to $m$ for $d_C=2^l$ with $l=3,4,5$, and the black dashed line represents $f_2=1$. Because of the exponential term $d^m$ in $f_2$, the value of $f_2$ exceeds $1$ rapidly. 
		The middle figure represents the variation of $m^*$ with respect to $d_C$. Letting $d_C=2^l$, the right figure shows the threshold $m^*$ is approximately linear with respect to $l$. }
	\label{fig:q1_case2_m_star}
\end{figure}

Secondly, we consider the case where $\alpha \ge d^{k-m/2}$ and depends on $k, m$. As an example, we let $\alpha=\sqrt{d^{k-m/2}}$, then $\delta=\sqrt{d^{k-m/2}} + d^{k-m/2}\le 2\sqrt{d^{k-m/2}} $ and its behavior is determined by $\sqrt{d^{k-m/2}}$. In order to avoid the situation that the inaccuracy $\delta$ becomes larger with $k$ increasing, 
we set $k=\zeta m$ with a fixed ratio $0<\zeta< 1/2$.\footnote{Otherwise, we will encounter the phenomenon that some $k'$-uniform masking can be realized but $k$-uniform masking can't, even $k'>k$.} 
By requiring the probability in \eref{Eq:prob_multipartite_idenity} to be positive and using the inequality $\binom{m}{k}\le 2^{mH(k/m)}$, we get a tight bound as
\begin{align}
	\label{Eq:case2_H_upper_bound}
	H\left( \frac{k}{m} \right)
	\le \frac{1}{(\ln2)m} \left[ \frac{d^{m(\zeta+1/2)}}{72\pi^3\ln2} - 2d_C \ln10 +m\left(\zeta-\frac{1}{2} \right)d_C\ln d -\ln 2 \right]:=f_2(m,d_C;\zeta) .
\end{align}
This case is similar with the previous one, except the previous condition \eref{Eq:case1_inaccuracy_condition} is replaced by $k=\zeta m$. 
When $\zeta$ is fixed, the behaviors of the function $f_2$ with respect to different $d_C$ are illustrated in \fref{fig:q1_case2_m_star}.
Following the same logic as the previous case, it suffices to consider the smallest integer $m$ for which $f_2$ exceeds 1, i.e., $m^*(d_C;\zeta)=\min\{m\in\bbN^+:f_2(m,d_C; \zeta)\ge 1\}$. 
When $\zeta$ is fixed, as a function of $d_C$, the behavior of $m^*$ is also similar with the previous case, as shown in the middle and right figures in \fref{fig:q1_case2_m_star}. When $d_C$ becomes very large, taking logarithm of both sides of the inequality $f_2(m,d_C;\zeta)\ge 1$ and ignoring smaller terms, we get
\begin{align}
	m^* \propto \frac{\ln d_C}{(\zeta +1/2)\ln d} = \frac{\ln 2}{(\zeta +1/2)\ln d} l,
\end{align}
where we have chosen $d_C=2^l$. When $d=2$, then $m^* \propto l/(\zeta +1/2)$, which is consistent with numerical results shown in \fref{fig:q1_case2_m_star}.

Thirdly, we consider the case where $\alpha\le d^{k-m/2}$ and assume $\alpha = d^{k-m/2}$, then the masking inaccuracy $\delta\le 2d^{k-m/2}$ and its behavior is determined by $d^{k-m/2}$. Similarly, we also require $k=\zeta m$ with $\zeta < 1/2$, then we get an inequality as
\begin{align}
	\label{Eq:case3_H_upper_bound}
	H\left( \frac{k}{m} \right)
	\le \frac{1}{(\ln2)m} \left[ \frac{d^{2\zeta m}}{72\pi^3\ln2} - 2d_C \ln10 +2 m\left(\zeta-\frac{1}{2} \right)d_C\ln d -\ln 2 \right]:=f_3(m,d_C;\zeta) .
\end{align}
The third case is almost the same as the second case and the main difference is the exponential term $d^{m(\zeta+1/2)}$ being replaced by $d^{2m\zeta}$. Hence, the behaviors of $f_3$ and the threshold $m^*(d_C;\zeta)=\min\{m\in\bbN^+:f_3(m,d_C; \zeta)\ge 1\}$ are also similar with previous case.
When $d_C$ becomes very large, taking logarithm of both sides of the inequality $f_2(m,d_C;\zeta)\ge 1$ and ignoring smaller terms, we get
\begin{align}
	m^* \propto \frac{\ln d_C}{2\zeta \ln d} = \frac{\ln 2}{2\zeta \ln d} l,
\end{align}
where we have chosen $d_C=2^l$. When $d=2$, then $m^* \propto l/(2\zeta)$, which is consistent with numerical results shown in \fref{fig:q1_case3_m_star}.

\begin{figure}
	\centering
	\includegraphics[width=1.0\linewidth]{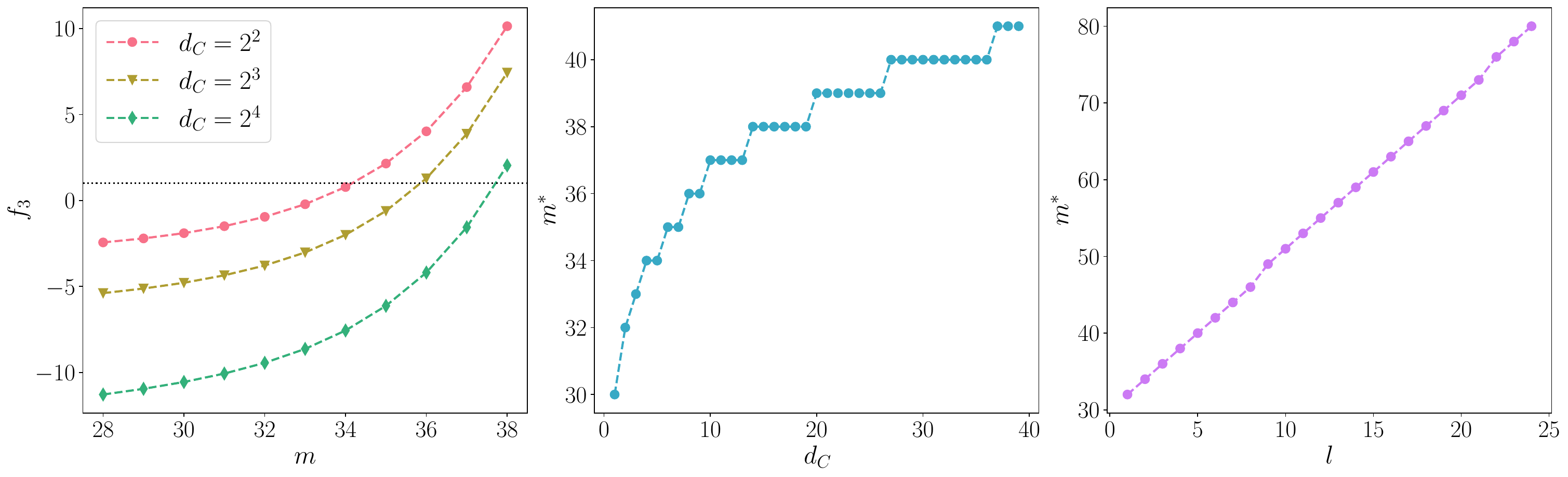}
	\caption{Illustration of Case $3$. For all three figures, we set $\zeta=1/4$, $d=2$. The left figure represents the variation of $f_3$ with respect to $m$ for $d_C=2^l$ with $l=3,4,5$, and the black dashed line represents $f_3=1$. Because of the exponential term $d^m$ in $f_2$, the value of $f_3$ exceeds $1$ rapidly. 
		The middle figure represents the variation of $m^*$ with respect to $d_C$. Letting $d_C=2^l$, the right figure shows the threshold $m^*$ is approximately linear with respect to $l$. }
	\label{fig:q1_case3_m_star}
\end{figure}

\section{Consequences of \thref{theo:random_codes}}
\label{appendix:Consequences_of_random_codes}

Here, we analyze the consequences of \thref{theo:random_codes} following the same logic as in previous sections and consider two key questions: Given the dimension $d_C$ and a fixed QEC inaccuracy $\widetilde{\eta}(\caE, \caR_S)$, what conditions must $k$ and $m$ satisfy?

\subsection{Determining the code distance and the number of physical qudits with inaccuracy being fixed}
First, let's consider the case where the QEC inaccuracy $\widetilde{\eta}(\caE, \caR_S)$ is a fixed small constant $\widetilde{\eta}_0$. Since $\widetilde{\eta}(\caE, \caR_S)=\sqrt{d_C(u+\alpha)}$, 
let $\gamma:=k/m$ satisfy $\gamma<1/2$, then the parameter $\alpha$ becomes 
\begin{align}
	\alpha=\frac{\tilde{\eta}_0^2}{d_C}-u \approx \frac{\tilde{\eta}_0^2}{d_C} -d^{(\gamma-1/2)m},
\end{align}
where we have used the approximation $u\approx d^{k-m/2}$. Since $\alpha>0$, then we get the upper bound of $\gamma$ as
\begin{align}
	\label{Eq:gamma_upper_bound}
	\gamma < \frac{1}{2}+\frac{1}{m}\left( 2\log_d\widetilde{\eta}_0 - \log_dd_C \right)
	=\frac{1}{2}+\frac{1}{m}\left( 2\log_d\widetilde{\eta}_0 - l\right)
\end{align}
where we have used $d_C=d^l$ with $l$ being the number of logical qudits.  
It is obvious that \eref{Eq:gamma_upper_bound} provides a tighter upper bound compared with $\gamma<1/2$.
\eref{Eq:gamma_upper_bound} also implies that $m$ has a lower bound:
\begin{align}
	\label{Eq:aqecc_m_lower1}
	m \ge \frac{2\left( \log d_C-2\log_d\widetilde{\eta}_0 \right)}{1-2\gamma}
	= \frac{2}{1-2\gamma} l - \frac{4}{1-2\gamma}\log_d\widetilde{\eta}_0.
\end{align}
For example, if $d=2$, $l=5$ and $\widetilde{\eta}_0=10^{-3}$, then $\gamma \lessapprox 1/2-25/m$ and $m \gtrapprox 50/(1-2\gamma)$.
Requiring the probability in \eref{Eq:random_code_probability} being positive, we get the inequality
\begin{align}
	\frac{d^m\alpha^2}{256} > \ln\left[ \binom{m}{k} \right] + 2d_C \ln\left( \frac{10}{\alpha} \right).
\end{align} 
Using the inequality $\binom{m}{k}\le 2^{mH(k/m)}$ and $H(k/m)\le 1$ again, we get a tighter bound as
\begin{align}
	\frac{d^m\alpha^2}{256} \ge (\ln2)m + 2d_C \ln\left( \frac{10}{\alpha} \right). 
\end{align}
Then we consider the asymptotic behavior, $d_C=d^l\gg m\gg 1$, $d^{(\gamma-1/2)m}\ll 1$, then $\alpha\approx \tilde{\eta}_0^2/d_C$. Substituting $\alpha\approx \tilde{\eta}_0^2/d_C$ into the above inequality and taking the logarithm, we get 
\begin{align}
	\label{Eq:aqecc_m_lower2}
	m > 2\text{log}_d(d_C) + \text{log}_d\left( \frac{256}{\tilde{\eta}_0^4} \right) + \text{log}_d \left((\ln2)m + 2d_C \ln\left( \frac{10}{\alpha} \right) \right)
	= 3 \text{log}_d (d_C) + \caO(\log\log d_C)
\end{align} 
For simplicity, we define two new functions as
\begin{align}
	T_1^{(1)}(m,d_C) &:=\frac{2}{1-2\gamma} \text{log}_d(d_C) -\frac{4}{1-2\gamma}\log_d\widetilde{\eta}_0 , \\
	T_2^{(1)}(m,d_C) &:=2\text{log}_d(d_C) + \text{log}_d\left( \frac{256}{\tilde{\eta}_0^4} \right) + \text{log}_d \left[ (\ln2)m + 2d_C \ln\left( \frac{10}{\alpha} \right) \right] .
\end{align}
Hence, given the dimension $d_C$, the threshold of $m$ for the existence of a AQECC $\caH_C$ with inaccuracy $\widetilde{\eta}_0$ is given by 
\begin{align}
	m^*:=\min\{m\in \bbN^+: m-T_1^{(1)}(m,d_C)> 0, m-T_2^{(1)}(m,d_C)>0\}.
\end{align}
In other words, $m^*$ is determined by $\max\{T_1(m,d_C), T_2(m,d_C) \}$.
Hence, when $d_C\gg 1$ or $l$ is large, the relative size of $T_1$ and $T_2$ are determined solely by the linear coefficients of the term $\log d_C$, then we get the conclusion: (1) when $1/2>\gamma>1/6$, $m^* \propto 2/(1-2\gamma)l$; (2)  when $\gamma<1/6$, $m^* \propto 3 l$, just as illustrated in \fref{fig:aqecc_case1}. 
So we find that asymptotically, the number of physical qudits $m^*$ only linearly depend on the logical qudits $l$ and
the code rate $l/m^*$ can be a fixed constant once the $\gamma$ is given.

\begin{figure}
	\centering
	\includegraphics[width=1.0\linewidth]{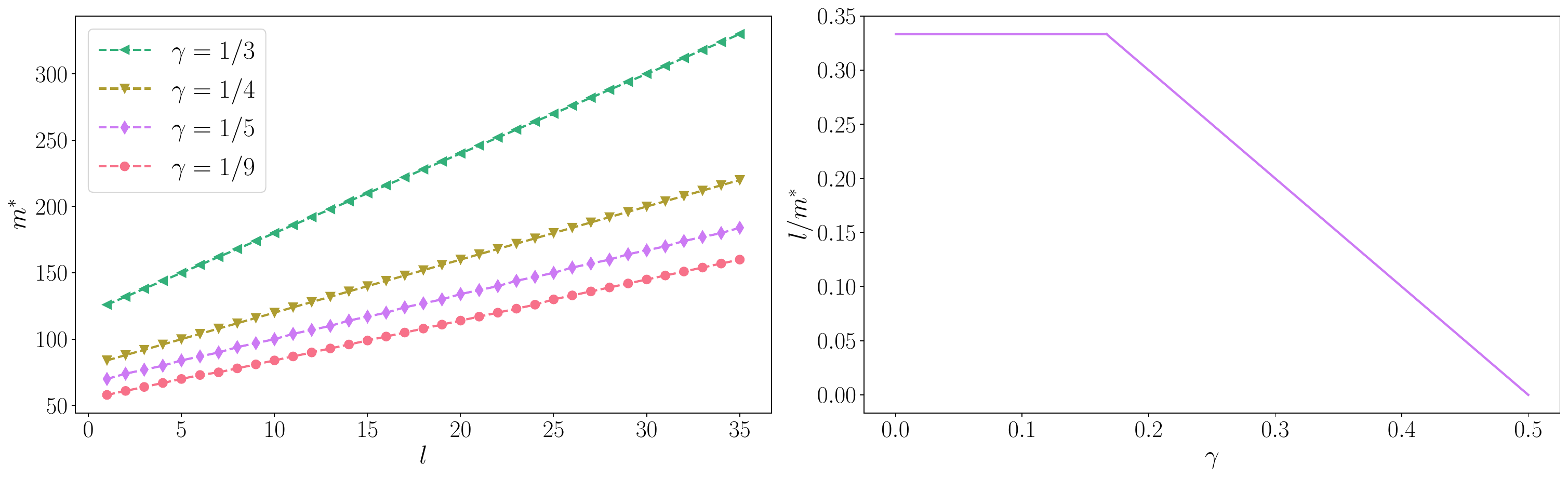}
	\caption{The left plot represents the variations of $m^*$ as a function of $l$ for different $\gamma$, where $\widetilde{\eta}_0=0.001$. The right plot shows the variation of the code rate $l/m^*$ with respect to $\gamma$. }
	\label{fig:aqecc_case1}
\end{figure}

\subsection{Determining the code distance and the number of physical qudits with inaccuracy approaching zero}

Secondly, we consider the case where $\widetilde{\eta}(\caE, \caR_S)$ depends on $k, m$. Since $\widetilde{\eta}(\caE, \caR_S)=\sqrt{d_C(u+\alpha)}$ with $u\approx d^{k-m/2}$, we assume the inaccuracy equals to $d^{a(k-m/2)}$ with $a>0$. In order to avoid the inaccuracy increasing with $k$, we let $k=\gamma m$ with $\gamma<1/2$.
The parameter $\alpha$ becomes 
\begin{align}
	\alpha =  \frac{d^{2a(k-m/2)}}{d_C} - u \approx \frac{d^{2am(\gamma-1/2)}}{d_C} - d^{m(\gamma-1/2)},
\end{align}
because $\alpha>0$, then we can see $a<1/2$ and obtain
\begin{align}
	\gamma < \frac{1}{2} - \frac{1}{m(1-2a)}\log_d d_C = \frac{1}{2} - \frac{l}{m(1-2a)}
\end{align}
where $d_C=d^l$ with $l$ being the number of logical qudits. The above inequality provides an upper bound for $\gamma$, at the same time, it also gives a lower bound for $m$ as
\begin{align}
	\label{Eq:aqeecc_case2_m_lower1}
	m > \frac{1}{(1-2a)(1/2-\gamma)}\log_d d_C = \frac{1}{(1-2a)(1/2-\gamma)} l.
\end{align}
For example, if $a=1/4$, $\gamma < 1/2 - 2l/m$ or $m>2l/(1/2-\gamma)$, and the inaccuracy decays exponentially as $1/d^{m(1/2-\gamma)/2}$ when $m$ becomes larger.
Again we require the probability in \eref{Eq:random_code_probability} being positive and use the inequality $\binom{m}{k}\le 2^{mH(k/m)}$ and $H(k/m)\le 1$, we get the inequality
\begin{align}
	\frac{d^m\alpha^2}{256} \ge (\ln2)m + 2d_C \ln\left( \frac{10}{\alpha} \right). 
\end{align}
Then we consider the asymptotic case $d_C=d^l\gg m\gg 1$, then $d^{(\gamma-1/2)m}\ll 1$ and $\alpha\approx d^{2a(\gamma-1/2)m}/d_C$. Substituting $\alpha\approx d^{2a(\gamma-1/2)m}/d_C$ into the above inequality and taking the logarithm, we get 
\begin{align}
	\label{Eq:aqecc_m_lower3}
	m > 2\text{log}_d(d_C) + \text{log}_d\left( \frac{256}{d^{4a(\gamma-1/2)m}} \right) + \text{log}_d \left[ (\ln2)m + 2d_C \ln\left( \frac{10}{\alpha} \right) \right]
	\approx 3 \text{log}_d (d_C)
\end{align} 
For simplicity, we also define two functions as
\begin{align}
	T_1^{(2)}(m,d_C) &:= \frac{1}{(1-2a)(1/2-\gamma)} l , \\
	T_2^{(2)}(m,d_C) &:= 2\text{log}_d(d_C) + \text{log}_d\left( \frac{256}{d^{4a(\gamma-1/2)m}} \right) + \text{log}_d \left[ (\ln2)m + 2d_C \ln\left( \frac{10}{\alpha} \right) \right].
\end{align}
Given $d_C$, the threshold of $m$ for the existence of a AQECC $\caH_C$ with inaccuracy $\widetilde{\eta}_0$ is given by 
\begin{align}
	m^*:=\min\{m\in \bbN^+: m-T_1^{(2)}(m,d_C)> 0, m-T_2^{(2)}(m,d_C)>0\}.
\end{align}
When $d_C\gg 1$ or $l$ is large, following the same logic, the relative size of $T_1^{(2)}$ and $T_2^{(2)}$ are determined solely by the linear coefficients of the term $\log d_C$, then the linear coefficient of $m^*$ is
\begin{align}
	c = \max\left\{\frac{1}{(1-2a)(1/2-\gamma)}, 3 \right\}.
\end{align}
Because of the parameter $a$, the case is a little more complicated. When $1/2>a\ge 1/6$, $1/((1-2a)(1/2-\gamma))$ is always larger than $3$, then the coefficient $c$ is always $3$. But when $a<1/6$, the line $c=1/((1-2a)(1/2-\gamma))$ will intersect with $c=3$, hence $c$ will become a piecewise function, just as illustrated in \fref{fig:aqecc_case2}.
\begin{figure}
	\centering
	\includegraphics[width=1.0\linewidth]{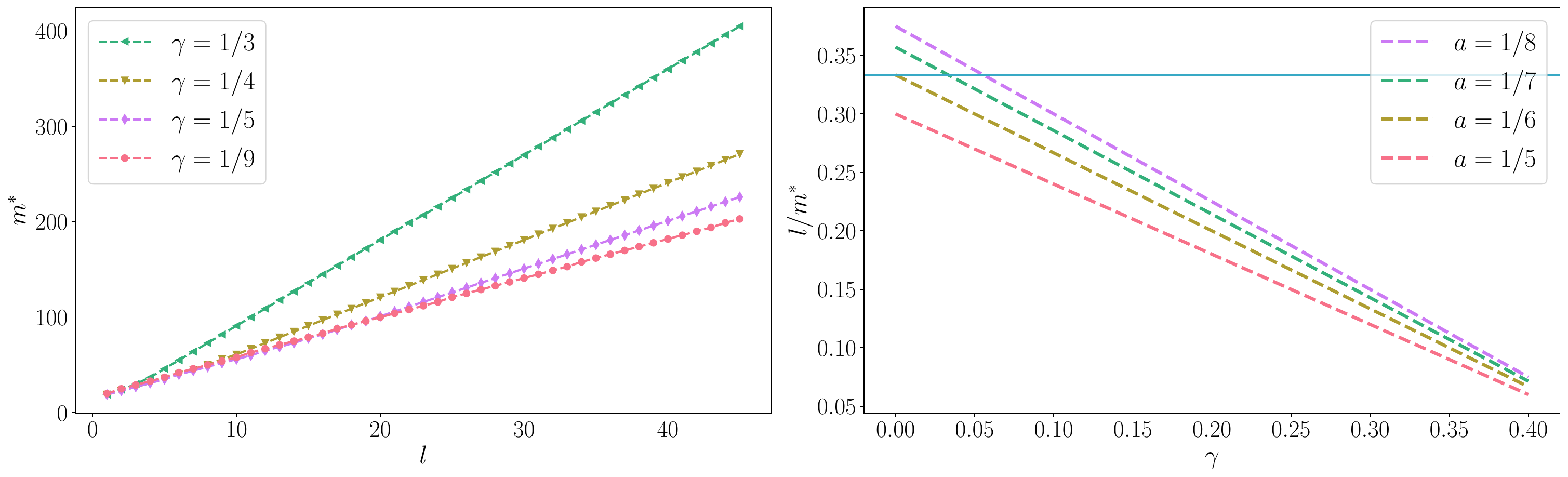}
	\caption{The left plot represents the variations of $m^*$ as a function of $l$ for different $\gamma$, where $a =1/6$. The right plot illustrates the variation of the code rate $l/m^*$ with respect to $\gamma$, given different values of $a$. The horizontal purple line represents $l/m^*=1/3$. }
	\label{fig:aqecc_case2}
\end{figure}

\end{document}